%% file: main.tex
\documentclass[11pt]{article}

\usepackage[T1]{fontenc}
\usepackage[utf8]{inputenc}
\usepackage{amsmath,amssymb,amsthm,mathtools}
\usepackage[margin=1in]{geometry}
\usepackage{microtype}
\usepackage{xspace}
\usepackage{booktabs,tabularx,array}
\usepackage{graphicx}
\usepackage{placeins}
\usepackage{caption}
\usepackage{subcaption}
\usepackage{algorithm}
\usepackage{algpseudocode}
\usepackage{tikz}
\usetikzlibrary{arrows.meta,calc,fit,positioning,shapes.geometric,decorations.pathreplacing}
\usepackage{xcolor}
\usepackage{colortbl}
\usepackage[numbers,square,sort&compress]{natbib}
\usepackage{hyperref}
\usepackage[nameinlink,capitalise,noabbrev]{cleveref}

\hypersetup{
  colorlinks=true,
  linkcolor=black,
  citecolor=black,
  urlcolor=black,
  pdftitle={Massively Parallel Algorithms for Huffman Coding}
}

\newtheorem{theorem}{Theorem}[section]
\newtheorem{lemma}[theorem]{Lemma}

\newtheorem{corollary}[theorem]{Corollary}

\newtheorem*{mainresultrestatement}{Theorem~\ref{thm:main-result}}
\newtheorem*{mergingproblemrestatement}{Problem~\ref{prob:kraft-budget-merging}}

\theoremstyle{definition}
\newtheorem{definition}[theorem]{Definition}

\newtheorem{example}[theorem]{Example}

\newtheorem{problem}[theorem]{Problem}
\newcommand{\Text}{T}
\newcommand{\TextLength}{n}
\newcommand{\Alphabet}{\Sigma}
\newcommand{\AlphabetSize}{m}
\newcommand{\Character}[1]{a_{#1}}
\newcommand{\Frequency}[1]{f_{#1}}
\newcommand{\Code}{C}
\newcommand{\CodeLength}[1]{h_{#1}}
\newcommand{\BinaryStrings}{\{0,1\}^{*}}
\newcommand{\MPC}{\textsf{MPC}\xspace}
\newcommand{\PRAM}{\textsf{PRAM}\xspace}
\newcommand{\PrefixSums}[1]{\textcolor{black}{#1^{+}}}
\newcommand{\PrefixSum}[2]{\textcolor{black}{#1^{+}[#2]}}
\newcommand{\PRAMFanout}{\eta}
\newcommand{\KraftScaleBound}{d_{\max}}
\newcommand{\DepthGroupCount}{g}
\newcommand{\RecordCount}{\textcolor{black}{\hat{n}}}
\newcommand{\LocalMemorySize}{\textcolor{black}{s_{\mathrm{loc}}}}
\newcommand{\FrequencyMultiplicity}[1]{\textcolor{black}{n_{#1}}}
\newcommand{\TotalFrequency}[1]{\textcolor{black}{n_{#1}}}
\newcommand{\ResidualBudgetBound}{b_{\mathrm{res}}}

\newcommand{\Naturals}{\mathbb{N}}

\newcommand{\MinimumEncodedLength}{\mu}
\newcommand{\ExactBlockCost}{\bar{\mu}}
\newcommand{\KnapsackValue}[2]{\textcolor{black}{\psi(#1,#2)}}
\newcommand{\AccumulatedMergeCost}{c}
\newcommand{\RecoveredMergeCost}{c'}
\newcommand{\RatioBound}{\lambda}
\newcommand{\LeafDepth}[1]{h_{#1}}

\newcommand{\MaximumLeafDepth}{h_{\max}}
\newcommand{\AveragingDepth}{\hat{h}}
\newcommand{\BaseWindowDepth}{\textcolor{black}{h}}
\newcommand{\DepthSpread}[1]{\Delta_{#1}}
\newcommand{\GoldenRatio}{\varphi}

\newcommand{\BlockBudgetMax}{b_{\max}}
\newcommand{\FibonacciNumber}[1]{\mathsf{fib}(#1)}
\newcommand{\Representative}[1]{r_{#1}}
\newcommand{\FrequencyBlock}[1]{B_{#1}}
\newcommand{\softO}{\widetilde{O}}

\allowdisplaybreaks

\title{Massively Parallel Algorithms for Huffman Coding}
\author{Masoud Seddighin\footnote{Tehran Institute of Advanced Studies} \and Saeed Seddighin}
\date{}

\begin{document}

\maketitle

\input{Sections/abstract}

\input{Sections/introduction-materials}
\input{Sections/preliminaries}
\input{Sections/our-results}

\input{Sections/bounded-ratio-huffman}
\input{Sections/fast-layered-merging}
\input{Sections/iterative-state-reconstruction}

\input{Sections/putting-pieces-together}
\input{Sections/recovering-codewords}
\input{Sections/pram}

\FloatBarrier
\bibliographystyle{plainnat}
\bibliography{references}

\end{document}

%% file: Sections/abstract.tex
\begin{abstract}
Huffman coding is one of the oldest and most fundamental problems in computer science. Given a string of length $\TextLength$ over a general alphabet, the goal is to assign a binary codeword to each character so that no codeword is a prefix of another and the total encoded length of the string is minimized. Huffman coding is widely used in practical compression systems, including file compression. As modern datasets continue to grow, it is natural to study whether a Huffman code can be constructed efficiently in the massively parallel computation (\MPC) model. The celebrated Huffman coding algorithm admits two straightforward \MPC implementations: for any constant $\epsilon\in(0,1)$, one uses $O(\TextLength^\epsilon)$ memory per machine but requires $\Theta(\log \TextLength)$ rounds, while the other runs in $O(1)$ rounds but requires $\Theta(\sqrt{\TextLength})$ memory per machine. We give the first nontrivial \MPC algorithm for Huffman coding that bypasses both limitations. For every constant $\epsilon>0$, our algorithm uses $O_\epsilon(\log\log \TextLength)$ rounds and $\softO(\TextLength^\epsilon)$ memory per machine, while its total memory and total computation are $\softO(\TextLength)$. 

{\color{black}Although our framework is tailored to the \MPC model, we show that it has applications in the \PRAM setting as well. Specifically, we show that the same algorithm can be implemented in the \PRAM model with total work close to linear. In particular, whenever the text length is at most $2^{O\left(\left(\frac{\log \AlphabetSize}{\log\log \AlphabetSize}\right)^2\right)}$ (slightly superpolynomial), where $\AlphabetSize$ is the alphabet size, our parallel running time is asymptotically better than that of all previously known \PRAM algorithms and is essentially optimal.}
\end{abstract}
\thispagestyle{empty}   

%% file: Sections/introduction-materials.tex
\section{Introduction}
Huffman coding is one of the oldest and most fundamental problems in computer science~\cite{huffman1952, cormen2009,teng1987,kirkpatrick1996}. Given a string of length $\TextLength$ over a general alphabet, the goal is to assign a binary codeword to each character so that no codeword is a prefix of another and the total encoded length of the string is minimized. Huffman coding is widely used in practical compression systems, including file compression.

The celebrated Huffman coding algorithm solves this problem optimally in linear time via a remarkably simple greedy idea. It first represents the text as a sequence of character frequencies $f_1 \leq f_2 \leq \dots \leq f_m$, where $m$ is the size of the alphabet. It then iteratively merges the two smallest frequencies into a single new node with frequency equal to their sum, inserts this merged node back into the (sorted) collection, and repeats this process until a single tree remains. The resulting tree defines the codewords: each leaf corresponds to a character, and its codeword is given by the sequence of left/right branches on the root-to-leaf path.

The study of Huffman coding in the Massively Parallel Computation (\MPC) model is motivated by both practical and theoretical considerations. Practically, \MPC has become the standard paradigm for processing data at scale, underlying frameworks such as MapReduce, Hadoop, and Spark, precisely the setting in which Huffman coding is needed. Theoretically, Huffman coding belongs to a broader class of combinatorial problems with simple, efficient sequential algorithms that become surprisingly challenging to implement in the \MPC model. This class includes the Longest Increasing Subsequence, Edit Distance, the Longest Common Subsequence problems which have all been subject to \MPC studies~\cite{im2017} (STOC'17),~\cite{boroujeni2021} (JACM), ~\cite{hajiaghayi2019} (SODA'19).

The massively parallel computation (\MPC) model consists of a number of machines, each with memory sublinear in the input size. Computation proceeds in rounds: within a round, each machine performs local computation on the data it holds, and between rounds, machines exchange data with one another, again subject to their memory limit. The goal is to minimize the round complexity while keeping the memory per machine, the total memory, and the total work as small as possible.

In this work, we give the first non-trivial \MPC algorithm for Huffman coding. Before stating our result, it is worth first pointing out the trivial ways in which Huffman's algorithm can already be turned into an \MPC algorithm, so as to make clear where the real difficulty lies. We assume in this paper that the input of the Huffman coding problem is a non-binary text of size $n$ and the output is a single integer value which is the shortest length of a binary string that can represent the input via prefix-free codewords.
\setcounter{page}{1}    

\paragraph{Trivial Approach 1.}
The first approach achieves \(O(1)\) rounds using
\(O(\sqrt{\TextLength})\) local memory. The idea is to represent all
 frequencies in a compressed format using a threshold \(\tau=\sqrt{\TextLength}\). For every
\(j\leq\tau\), we store only the number of  frequencies
equal to \(j\), while frequencies larger than \(\tau\) are
stored explicitly. Since the  frequencies always sum to
\(\TextLength\), there are at most \(\sqrt{\TextLength}\) such large
frequencies. Hence, the entire representation has size
\(O(\sqrt{\TextLength})\) and can be gathered on one machine.
That machine then simulates the classical Huffman algorithm using this
compressed representation, maintaining the counters and the explicit
list as new frequencies are created. This gives an \(O(1)\)-round \MPC
algorithm with \(O(\TextLength)\) total memory and \(O(\TextLength)\)
total work.

\paragraph{Trivial Approach 2.}
A second approach uses \(O(\TextLength^{1-\epsilon})\) machines
with \(O(\TextLength^\epsilon)\) memory each, for any fixed
\(\epsilon\in(0,1)\), and runs in \(O(\log\TextLength)\) rounds. The
 frequencies are kept sorted and processed in phases
\(r=0,1,2,\ldots\). Phase \(r\) handles the frequencies in dyadic range
$
I_r=[2^r,2^{r+1}).
$
At the beginning of phase \(r\), at most one  frequency is smaller
than \(2^r\); we call it the \emph{carry}. If the carry and another
 frequency exist, the carry is first merged with the smallest other
frequency. The frequencies in \(I_r\) are
then paired consecutively and all pairs are merged in parallel. If their
number is odd, the last frequency becomes the carry for the next phase.

Every parent created from two frequencies in \(I_r\) has frequency at least
\(2^{r+1}\), so these parallel merges are consistent with a sequential
execution of Huffman's algorithm. Each phase takes \(O(1)\) rounds, and
there are \(O(\log\TextLength)\) phases since every  frequency is at
most \(\TextLength\). Thus, the algorithm uses \(O(\log\TextLength)\)
rounds, \(O(\TextLength)\) total memory, and
\(\softO(\TextLength)\) total computation.

\paragraph{Our Approach.} The $\Omega(\sqrt{n})$ memory requirement of the first approach and the logarithmic round complexity of the second approach are fundamental barriers of these methods that cannot be overcome with modifications. Our algorithm bypasses both limitations by utilizing an entirely different framework. That is, our algorithm requires $O_\epsilon(\log\log \TextLength)$ rounds of computation and requires $\softO(n^{\epsilon})$ memory per machine. The total work and total memory of our algorithm are $\softO(n)$ as desired.

A natural question is how these approaches compare when the input is not the raw text but the frequencies $f_1, f_2, \ldots, f_m$ themselves that are polynomially bounded in $m$ (in this case the input length is $m$ instead of $n$). Approach~2 then requires $O(\log m)$ rounds, $\softO(m^\epsilon)$ memory per machine, and $\softO(m)$ total memory and total work. Approach~1, by contrast, breaks down: its memory per machine grows to $O(\sqrt{\sum_i f_i})$, which becomes impractical as soon as $\sum_i f_i$ grows to quadratic. Our approach is almost unaffected by this change of input model: its round complexity remains $O_\epsilon(\log\log m)$, {\color{black}its memory per machine remains $\softO_\epsilon(m^\epsilon)$, and its total memory and total work are bounded by $\softO_\epsilon(m^{1+\epsilon})$.}

%% file: Sections/preliminaries.tex
\section{Preliminaries}

Let $\Text$ be the input text and let $\TextLength$ be its length. The text is written over a finite alphabet
$\Alphabet=\{\Character{1},\ldots,\Character{\AlphabetSize}\}$, where $\Character{1},\ldots,\Character{\AlphabetSize}$ are the distinct characters that appear in $\Text$. For each $i\in[\AlphabetSize]=\{1,\ldots,\AlphabetSize\}$, let $\Frequency{i}$ be the number of occurrences of $\Character{i}$ in $\Text$. Thus, every frequency is positive and
$
\sum_{i=1}^{\AlphabetSize}\Frequency{i}=\TextLength.
$

A \emph{binary string} is a finite sequence of zeros and ones. A binary code assigns one such string, called a \emph{codeword}, to each character: formally, it is a map $\Code:\Alphabet\to\BinaryStrings$. A binary string $x$ is a \emph{prefix} of a binary string $y$ if $y$ begins with $x$. The code is \emph{prefix-free} if no codeword is a prefix of another. For example, $\{0,10,11\}$ is prefix-free, whereas $\{0,01\}$ is not, because $0$ is a prefix of $01$.

The code transforms the input text into a binary string by replacing each character with its codeword and concatenating the results. Prefix-freeness makes this transformation unambiguous: the encoded text can be read from left to right without separators between consecutive codewords. If $\CodeLength{i}=|\Code(\Character{i})|$ is the length of the codeword assigned to $\Character{i}$, then the encoded text contains
$
  \sum_{i=1}^{\AlphabetSize}\Frequency{i}\CodeLength{i}
$
bits. Huffman coding asks for a prefix-free binary code that minimizes this quantity~\cite{huffman1952}. We denote the minimum encoded length by
\[
  \MinimumEncodedLength(\Text)
  =\min_{\Code\text{ prefix-free}}
    \sum_{i=1}^{\AlphabetSize}\Frequency{i}|\Code(\Character{i})|.
\]
\textcolor{black}{We handle $\AlphabetSize=1$ separately by assigning the empty codeword, giving optimum encoded length zero, and assume $\AlphabetSize\geq2$ in the subsequent analysis.} Our main algorithm focuses on computing the value $\MinimumEncodedLength(\Text)$. We discuss how to recover the codewords themselves later in the paper. \textcolor{black}{For an array $A=(a_1,\ldots,a_k)$, let $\PrefixSums{A}$ denote its prefix-sum array, with $\PrefixSum{A}{0}=0$ and $\PrefixSum{A}{r}=\sum_{i=1}^r a_i$ for $1\leq r\leq k$.}

%% file: Sections/our-results.tex
\section{Our Results and Techniques}
 In this paper, we propose the first nontrivial \MPC algorithm for Huffman coding that goes beyond the limitations of the trivial implementations. For every constant $\epsilon>0$, our algorithm uses $\softO(\TextLength^\epsilon)$ memory per machine and runs in $O_\epsilon(\log\log \TextLength)$ rounds.
Its total memory and total computation are both $\softO(\TextLength)$.

\begin{theorem}
	\label{thm:main-result}
	For every constant $\epsilon>0$, there is an \MPC algorithm that receives a text $\Text$ of length $n$ and computes $\MinimumEncodedLength(\Text)$ in $O_\epsilon(\log\log \TextLength)$ rounds using $\softO(\TextLength^\epsilon)$ memory per machine. Its total memory and total computation are both bounded by $\softO(\TextLength)$.
\end{theorem}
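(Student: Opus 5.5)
We will prove the theorem by reducing Huffman coding to a sequence of bounded structured subproblems. Each subproblem is solved by a layered dynamic program over code depths. Its state is compressed by grouping equal frequencies and by exploiting the Kraft inequality. The section titles of the paper suggest this route: bounded-ratio Huffman, fast layered merging, iterative state reconstruction.

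\emph{Preprocessing.} First, in $O(1/\epsilon)$ rounds we compute the frequencies $\Frequency{1},\ldots,\Frequency{\AlphabetSize}$ by sorting and counting. We then sort them and compress them into pairs $(f,\FrequencyMultiplicity{f})$. Since $\sum_i \Frequency{i}=\TextLength$, there are only $O(\sqrt{\TextLength})$ distinct values.

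\emph{Bucketing into blocks.} Next we partition the sorted frequencies into blocks $\FrequencyBlock{1},\FrequencyBlock{2},\ldots$ whose maximum-to-minimum ratio is bounded by a constant $\RatioBound$. We first argue the following structural lemma for an optimal tree: within a block the leaf depths differ by $O(1)$. Moreover, when a block is processed, only $O(\log \TextLength)$ "carry" subtrees from earlier blocks interact with it. This uses the Fibonacci/golden-ratio argument: a subtree of depth $d$ has weight at least $\FibonacciNumber{d}$ times its smallest leaf. Consequently the interface between consecutive blocks can be summarized by a Kraft budget, i.e.\ a residual count of open slots at a few consecutive depths. This budget is bounded by $\ResidualBudgetBound=\mathrm{poly}(\log\TextLength)$ after rescaling.

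\emph{Merging.} For each block, and for each possible incoming budget, we compute the exact block cost $\ExactBlockCost$ and the outgoing budget. This is a bounded-ratio Huffman instance solved locally, in the style of a knapsack function $\KnapsackValue{\cdot}{\cdot}$. The global answer is then a min-plus composition of $O(\log \TextLength)$ such budget-transition tables, one per dyadic scale. Composing them sequentially costs $\Theta(\log\TextLength)$ rounds, the barrier of Approach~2. Instead we compose in a tree with fan-in $\TextLength^{\epsilon'}$. Each internal node merges $\TextLength^{\epsilon'}$ tables whose sizes are $\mathrm{polylog}$. This gives depth $O(\log\log\TextLength/(\epsilon'\log\TextLength))$, which is too few levels, so the true source of $\log\log$ must lie elsewhere. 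I expect it to arise from iterative state reconstruction: the budget-state space is first approximated at coarse granularity, then refined repeatedly, each round squaring the precision (doubling the number of correct bits of the depth offsets). This yields $O(\log\log\TextLength)$ refinement rounds, each implemented in $O(1/\epsilon)$ \MPC rounds by a parallel prefix/min-plus composition with $\TextLength^\epsilon$ fan-in. Summing $\sum_f f\cdot \LeafDepth{f}$ through prefix sums then gives $\MinimumEncodedLength(\Text)$. Total memory and total work stay $\softO(\TextLength)$ because each table has polylogarithmic size and there are $O(\TextLength)$ records overall.

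\emph{Main obstacle.} I expect the hardest step to be proving that the inter-block state really is small: a Kraft-budget interface of polylogarithmic size that is \emph{exact}, so that min-plus composition of block tables yields the optimum rather than an approximation. I would attack this with an exchange argument on an optimal tree. We may assume siblings are sorted and depths are monotone in frequency, and we show that open slots deeper than $O(\log\RatioBound)$ levels below the block's natural depth can be rerouted without increasing cost. The second delicate point is showing that the refinement doubling preserves exactness. For this I would maintain an invariant that the candidate state set at each refinement level always contains the state of some optimal solution.
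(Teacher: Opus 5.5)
Your proposal has a genuine gap at its core. The structural claim that consecutive constant-ratio blocks communicate through an exact interface of polylogarithmic size ($O(\log n)$ carry subtrees, a residual Kraft budget of size $\mathrm{poly}(\log n)$) is false. Take $k$ frequencies equal to $1$ followed by $k$ frequencies equal to $2^j$. Here $k$ is a power of two and $j$ is a constant large enough that the two values fall in different blocks. Huffman's algorithm first builds $k/2^j=\Theta(n)$ complete subtrees of weight $2^j$, and every one of them is merged together with the second block. In general the weights of such subtrees need not coincide, so grouping equal values does not compress this state. The single carry of Trivial Approach~2 is only the one node lighter than the current scale; the merged parents form the bulk of the state. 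In the budget formulation, the same phenomenon appears as tables $\mu(A,\cdot)$ with up to $|A|2^{u-\ell}$ relevant entries, whose optimal split depends on the rest of the instance. You noticed that a fan-in-$n^{\epsilon'}$ composition would then need only $O(1/\epsilon)$ levels. That is a symptom of the false premise, not evidence that the $\log\log n$ must come from elsewhere. The precision-doubling refinement you postulate comes with no algorithm, no state definition, and no invariant. As written, the proposal does not describe an algorithm that achieves the stated bounds.

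The paper never compresses the interface. It processes batches: the longest prefix with $f_j\le\lambda f_2$, where $\lambda=m^{\epsilon/16}$. Fibonacci growth (Lemma~\ref{lem:pairwise-depth-bound} and Corollary~\ref{cor:bounded-ratio-depth-range}) then confines all depths to a window with $2^{u-\ell}=O(\lambda^{2\log_\varphi2})$. Inside a batch, the paper builds the full tables $\mu(I,b)$, of size $O(|I|2^{u-\ell})$, for the $O(\log\lambda)$ blocks of ratio $\varphi$, and merges them pairwise in a binary hierarchy. Each merge takes $O_\epsilon(1)$ rounds because $G_b(x)=\mu(A_1,x)+\mu(A_2,b-x)$ is convex on every residue class modulo $2^{u-\ell-1}$ (the upgrade-knapsack and pairing argument of Lemma~\ref{lem:layered-knapsack-concavity}). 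This permits an $\eta$-ary search within each class. The $O(\log\log\lambda)=O(\log\log n)$ depth of this hierarchy is the actual source of the $\log\log n$ factor. The hand-off to the next batch is the exact multiset of the classical Huffman state just before a node heavier than $\tau=f_{j+1}$ is merged. It is recovered from the depths through the Package-Merge identity $P(r)=w_1+\cdots+w_r$ at a single depth. Each $P(r)$ is computed on a small residual instance built from fractional-knapsack proximity windows of $O(2^{\Delta})$ leaves per depth group. Since the second-smallest frequency grows by a factor larger than $\lambda$ per batch, there are only $O(1/\epsilon)$ batches. Finally, the $\widetilde O(n)$ total work needs a preprocessing step that merges all frequencies below $\lceil n^{\epsilon/2}\rceil$ on one machine using multiplicity counters. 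This leaves $O(n^{1-\epsilon/2})$ frequencies, so the $m^{1+\epsilon/2}$ overhead of the main algorithm becomes $O(n)$. Your observation that there are only $O(\sqrt n)$ distinct values does not provide this reduction.
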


We show later in the paper that the codewords can also be constructed via the same algorithm but for simplicity, in this section we focus on computing the value of $\MinimumEncodedLength(\Text)$. Up to this point, we have viewed Huffman coding through its iterative merge process, which builds a Huffman tree from the bottom up. Our algorithm uses a second, equivalent formulation that instead optimizes directly over the lengths of the codewords. In a prefix-free binary code, the total encoded length of the text depends only on how deep each character sits in the tree, that is, the number of bits in its codeword. The actual pattern of zeros and ones does not affect the cost. Thus, if symbol $a_i$ has frequency $\Frequency{i}$ and codeword length $\CodeLength{i}$, its contribution to the total cost is $\Frequency{i}\CodeLength{i}$, and the overall cost becomes
$
\sum_{i=1}^{\AlphabetSize}\Frequency{i}\CodeLength{i}.
$

This raises a fundamental question: which length vectors can actually be realized by a prefix-free binary code? The classical Kraft--McMillan theorem gives the exact characterization~\cite{kraft1949,mcmillan1956}. It states that positive integers $\CodeLength{1},\ldots,\CodeLength{\AlphabetSize}$ can be the lengths of a prefix-free binary code if and only if
$
\sum_{i=1}^{\AlphabetSize}2^{-\CodeLength{i}}\leq 1.
$
The Kraft--McMillan theorem gives an equivalent formulation of Huffman coding:
\begin{equation}
\label{eq:huffman-length-formulation}
\MinimumEncodedLength(\Text)=\min\left\{\sum_{i=1}^{\AlphabetSize}\Frequency{i}\CodeLength{i}\;:\;\sum_{i=1}^{\AlphabetSize}2^{-\CodeLength{i}}\leq 1,\ \ \CodeLength{i}\in\Naturals\text{ for every }i\in[\AlphabetSize]\right\}.
\end{equation}
Hence, the optimum in Equation~\eqref{eq:huffman-length-formulation} is exactly $\MinimumEncodedLength(\Text)$.

The advantage of this formulation is that it describes the entire solution at once. The classical Huffman algorithm constructs the solution through a sequence of dependent merges, which makes a direct low-round \MPC implementation difficult. In contrast, Equation~\eqref{eq:huffman-length-formulation} asks for all codeword lengths simultaneously through a single optimization problem. 


Our algorithm exploits both the optimization view and the iterative view of
Huffman coding. The optimization view works particularly well when the ratio
between the frequencies is bounded. In this case, the leaves can appear only
within a short range of depths. The Kraft formulation then gives enough global
structure to determine many leaf depths in parallel, without simulating the
Huffman merges one by one.

For a general instance, however, the frequencies may differ by a factor
as large as \(\TextLength\), and the corresponding range of depths may
be large. The optimization view alone is not well suited to such an instance,
because a direct implementation would require too much local memory. The
iterative view alone is also insufficient: processing the entire frequency
sequence through dependent Huffman merges requires too many rounds. To keep
both the local memory and the number of rounds small, we process the sorted
frequencies in batches in which the frequency ratio is bounded. We use the optimization view to process each batch and the iterative
view to connect its solution to the remainder of the instance. These two perspectives yield the following two subtasks.

\begin{itemize}
	\item \textbf{Bounded-Ratio Huffman Coding.}
Bounded-Ratio Huffman Coding takes as input an ordered multiset of frequencies $\Frequency{1} \le \cdots \le \Frequency{m}$, and solves the instance under the condition that the ratio of the largest frequency to the second-smallest is bounded by $\RatioBound$, i.e.,
$
{\Frequency{m}}/{\Frequency{2}} \le \RatioBound.
$
($\lambda$ appears as a parameter of the algorithm's complexity). The subtask outputs the optimal encoded length and an optimal depth assignment $\LeafDepth{1}, \ldots, \LeafDepth{m}$, where $\LeafDepth{i}$ is the depth of the leaf for $\Frequency{i}$ in the optimal Huffman tree, equivalently the length of its binary codeword.

 \item \textbf{Iterative-State Reconstruction.}
The Iterative-State Reconstruction subtask takes a solution which is represented by heights $h_1, \ldots, h_m$ and translates such a solution into an arbitrary state of the \textcolor{black}{classical} Huffman coding algorithm without ever running that algorithm. More precisely, it takes as input a bounded-ratio instance, together with the solution produced for it by the previous subtask, and a threshold $\tau$. Its purpose is to determine the state of the \textcolor{black}{classical} iterative Huffman coding algorithm at the moment just before a frequency exceeding $\tau$ would first become involved in the merging process, and it does so crucially without ever running the \textcolor{black}{classical} algorithm itself. Concretely, the input consists of the sorted frequencies $\Frequency{1} \leq \cdots \leq \Frequency{m}$, the optimal depth assignments $\LeafDepth{1}, \ldots, \LeafDepth{m}$ returned by the Bounded-Ratio Huffman Coding algorithm, and a threshold $\tau > \Frequency{m}$.

This subtask connects the two views of the problem: from \textcolor{black}{the frequencies and depth assignments}, and with no simulation of the iterative Huffman coding algorithm, it reconstructs the exact state that the \textcolor{black}{classical} iterative Huffman algorithm would have reached had it actually been run on this instance and terminated right before a merge involves a frequency more than $\tau$ (see Figure~\ref{fig:prefix-reuse-example} for an illustration).
\end{itemize}

\input{Sections/translation}

Suppose for now that Bounded-Ratio Huffman Coding and
Iterative-State Reconstruction can both be solved efficiently. We
show how to combine them to solve the Huffman coding
problem. Fix a parameter $\lambda$, whose value will be chosen later. \textcolor{black}{The algorithm maintains a sorted list $B$ of frequencies and an accumulated cost $\AccumulatedMergeCost$, initially equal to the input list and zero, respectively. Throughout the algorithm, $\AccumulatedMergeCost$ plus the optimum encoded length for $B$ equals the optimum encoded length for the original input. Each iteration performs valid Huffman merges, adds their cost to $\AccumulatedMergeCost$, and replaces the merged frequencies by their parents.}

 In one iteration, let
$\Frequency{1}\leq\cdots\leq\Frequency{m}$ be the current list of frequencies, and let
\(j\) be the largest index such that
$
{\Frequency{j}}/{\Frequency{2}} \le \lambda.
$
Thus, the prefix
$
\Frequency{1},\ldots,\Frequency{j}
$
satisfies the ratio condition required by Bounded-Ratio Huffman Coding. Therefore, we construct a smaller instance with only these frequencies and solve them with our algorithm for Bounded-Ratio Huffman Coding. If \(j=m\) (all frequencies are processed), we add its optimal encoded length to the
running total, delete all frequencies and finish the algorithm. Otherwise, we invoke Iterative-State Reconstruction with the
prefix frequencies $\Frequency{1},\ldots,\Frequency{j}$, their corresponding
leaf depths $\LeafDepth{1},\ldots,\LeafDepth{j}$ returned by Bounded-Ratio
Huffman Coding algorithm, and $\tau=\Frequency{j+1}$ as its threshold. Iterative-State Reconstruction processes the input and without running the iterative algorithm, directly jumps to the state of the iterative Huffman coding
algorithm  just before any frequency larger than \(\tau\) would participate in a merge. Having that state available, we add the
cost of all completed merges to the running total and  then replace
$\Frequency{1},\ldots,\Frequency{j}$ by the frequencies present in the state of the iterative Huffman coding algorithm \textcolor{black}{at the reconstruction stopping point}. After this change, we have a new set of frequencies along with an updated running total and we repeat the same procedure on them consecutively until all frequencies are processed at which point our running total cost is the final answer.

The correctness of this algorithm follows from the iterative nature of the \textcolor{black}{classical} Huffman coding algorithm. It remains to bound the number of iterations. In each iteration, by the maximality of $j$ we have
$\tau = f_{j+1}>\lambda\Frequency{2}$. \textcolor{black}{Iterative-State Reconstruction leaves at most one
frequency smaller than $\tau$.}\footnote{The smallest returned frequency may still be smaller than $\tau$. Thus, we can guarantee progress only for the second-smallest frequency, which is why the ratio condition uses $\Frequency{2}$ rather than $\Frequency{1}$; see Section~\ref{sec:putting-pieces-together} for further details.}
\textcolor{black}{Since every frequency in the untouched suffix is
at least $\tau$, the second-smallest frequency in the updated list is
at least $\tau$} and hence is more than $\lambda$ times the second-smallest
frequency in the previous list. Since every  frequency is bounded by
$\TextLength$, our algorithm terminates after $O(\log_\lambda\TextLength)$ iterations. We later choose
$\lambda=\TextLength^{\Theta(\epsilon)}$ for a fixed constant
$\epsilon>0$, which only adds a multiplicative factor of $O_\epsilon(1)$ to the round complexity of the algorithm. We next describe how these two subtasks can be efficiently solved.

\input{Sections/bounded-ratio}
\input{Sections/kraft-budget-distribution}
\input{Sections/threshold-reconstruction-rewrite}

\subsection{{\color{black}Application to the PRAM Setting}}
\label{sec:pram-overview}
{\color{black}Although our framework is tailored to the \MPC model, we also show in Section~\ref{sec:pram} that it leads to \PRAM algorithms for Huffman coding that go beyond the currently known algorithms. The table below compares our results with previous \PRAM algorithms; see Table~\ref{tab:pram-comparison} for details.}

{\color{black}Unlike most previous work, which aims for NC algorithms, our goal is to minimize the parallel running time while keeping the total work close to linear. Although our algorithm is not an NC algorithm in general, its running time is asymptotically better than that of all previous algorithms as long as $n\leq m^{O(\log m/(\log\log m)^2)}$; in this range, it runs in $O_\epsilon(\log m/\log\log m)$ time. This is optimal in this setting, due to our matching $\Omega(\log m/\log\log m)$ lower bound (Lemma~\ref{lem:pram-lower-bound}).}

%% file: Sections/translation.tex
\begin{figure}[t]
	\centering
	\begin{tikzpicture}[
		>=Stealth,
		leaf/.style={draw, rectangle, rounded corners=1pt, minimum size=6mm, inner sep=1pt, font=\small},
		inode/.style={draw, circle, minimum size=7mm, inner sep=0pt, font=\small},
		kept/.style={thick},
		gone/.style={draw=black, text=black, dashed},
		root/.style={fill=black!12, draw=black, very thick},
		panel/.style={draw=black, rounded corners=3pt, inner sep=7pt},
		ptitle/.style={font=\small\bfseries, anchor=south west, inner sep=2pt},
		]
		\begin{scope}[local bounding box=IN]
			\node[font=\small, anchor=east] at (0,0)    {$i$};
			\node[font=\small, anchor=east] at (0,-0.7) {$\Frequency{i}$};
			\node[font=\small, anchor=east] at (0,-1.4) {$\LeafDepth{i}$};
			\foreach \f/\h [count=\i] in {2/4,3/4,4/3,6/3,7/3,9/2,10/2} {
				\node[font=\scriptsize, text=black] at (0.75*\i-0.25,0) {\i};
				\node[leaf] at (0.75*\i-0.25,-0.7) {$\f$};
				\node[font=\small] at (0.75*\i-0.25,-1.4) {$\h$};
			}
			\node[font=\small, anchor=west] at (-0.6,-2.2) {threshold $\tau = 11$};
		\end{scope}
		\node[panel, fit=(IN)] (INP) {};
		\node[ptitle] at (INP.north west) {Input};
		
		\begin{scope}[shift={(9.3,0)}, local bounding box=OUT]
			\node[font=\small, anchor=west, align=left] at (0,-0.1)
			{State right before the first merge\\involving a frequency $> \tau$:};
			\node[leaf,root] at (0.5,-1.15) {$10$};
			\node[leaf,root] at (1.4,-1.15) {$13$};
			\node[leaf,root] at (2.3,-1.15) {$18$};
			\node[font=\small, anchor=west, align=left] at (0,-2.35)
			{Cost of merges performed:\\[1pt]
				$\underbrace{5+9+18}_{\text{part } 18} + \underbrace{0}_{\text{part } 10} + \underbrace{13}_{\text{part } 13} = 45$};
		\end{scope}
		\node[panel, fit=(OUT)] (OUTP) {};
		\node[ptitle] at (OUTP.north west) {Output};
		
		\draw[->, very thick] (INP.east) -- node[above, font=\scriptsize, align=center]
		{Iterative-State\\Reconstruction} (OUTP.west |- INP.east);

	\end{tikzpicture}
	\caption{Iterative-State Reconstruction on a bounded-ratio instance. Given the sorted frequencies $\Frequency{1} \leq \cdots \leq \Frequency{m}$, their optimal depths $\LeafDepth{1}, \ldots, \LeafDepth{m}$, and a threshold $\tau > \Frequency{m}$, the subtask outputs the state of the \textcolor{black}{classical} Huffman algorithm right before a merge first involves a frequency greater than $\tau$. Here this state is the multiset $\{10, 13, 18\}$, reached after the merges $2{+}3$, $4{+}5$, $6{+}7$, and $9{+}9$, at a total cost of $45$. }
	\label{fig:prefix-reuse-example}
\end{figure}

%% file: Sections/bounded-ratio.tex
\subsection{Bounded-Ratio Huffman Coding}
\label{sec:bounded-ratio}
We begin with the first subtask, which captures the bounded-ratio case of the Huffman coding problem. Given sorted frequencies
$\Frequency{1}\leq\cdots\leq\Frequency{m}$ satisfying
$\Frequency{m}/\Frequency{2}\leq\RatioBound$, our goal is to compute the optimal encoded length and a corresponding optimal depth assignment
$\LeafDepth{1}\geq\LeafDepth{2}\geq\cdots\geq\LeafDepth{m}$.

Let us explain how the bounded-ratio assumption leads to an efficient \MPC algorithm. The crux of the argument is that this bounded ratio bounds the range of leaf depths in an optimal Huffman tree. Intuitively, follow a path from a leaf toward the root. The \textcolor{black}{frequencies} along this path grow at least as fast as the Fibonacci sequence. Thus, a large difference between two leaf depths forces a large ratio between their frequencies. Accordingly, let \(\GoldenRatio\) be the golden ratio and let
\(\DepthSpread{\RatioBound} = \lceil \log_{\GoldenRatio}\RatioBound \rceil + 1\).
We show in
Section~\ref{sec:bounded-ratio-formal}
(Corollary~\ref{cor:bounded-ratio-depth-range}) that if
$\Frequency{m}/\Frequency{2}\leq\RatioBound$, there exists an optimal
depth assignment whose depths all lie in the range \([\ell,u]\), where
\[
\ell = \textcolor{black}{\max\{1,\lceil\log_2 m\rceil - \DepthSpread{\RatioBound}\}},
\qquad
u = \lfloor\log_2 m\rfloor + \DepthSpread{\RatioBound}.
\]
We use these two bounds throughout our discussion. Notice
that $u-\ell\leq 2\DepthSpread{\RatioBound}$.

Recall the Kraft--McMillan constraint in
Equation~\eqref{eq:huffman-length-formulation}. Since $u$ is an upper bound on the maximum depth, multiplying the Kraft inequality by $2^u$ gives $\sum_{i=1}^{m}2^{u-\LeafDepth{i}}\leq 2^u$.
This form has a simple interpretation. The right-hand side is a Kraft
budget of $2^u$, and a frequency $\Frequency{i}$ with depth
$\LeafDepth{i}$ uses $2^{u-\LeafDepth{i}}$ units of that budget. We seek
depths whose total usage does not exceed the budget. Since every depth is
at most $u$, each budget contribution is an integer.

Based on this interpretation, we define a more general version of the
Huffman coding problem, which we call the budget-constrained Huffman coding
problem. In this version, we allow an arbitrary total budget rather than
$2^u$. More precisely, for any submultiset $A$ of the input frequencies
$f_1,\ldots,f_m$ and any integer budget
$b\in\{0,1,\ldots,2^u\}$, define $\mu(A,b)$ as the optimum of
\begin{equation}
	\label{eq:budget-constrained-huffman}
	\mu(A,b)=\min\left\{\sum_{f_i\in A} f_i h_i\;:\;\sum_{f_i\in A}2^{u-h_i}\leq b,\ \ \ell\leq h_i\leq u\text{ for every }f_i\in A\right\},
\end{equation}
where the depths $h_i$ are positive integers.
If no feasible assignment exists, we set
$\mu(A,b)=+\infty$.
Note that when $A$ contains all input frequencies, $\mu(A,2^u)$ is
the optimal encoded length of the original instance.

The advantage of this formulation is that it enables a divide-and-conquer
approach: we split the frequencies into smaller subproblems, compute their
optimum values independently, and then combine these values.
For disjoint frequency multisets $A_1$ and $A_2$, let
$A=A_1\uplus A_2$. For any nonnegative integer budgets $b_1,b_2$ with
$b_1+b_2=b$, we have
$
\mu(A,b)\leq \mu(A_1,b_1)+\mu(A_2,b_2).
$
Conversely, an optimal solution for $A$ distributes its budget between
$A_1$ and $A_2$. \footnote{If the solution does not use the entire budget $b$,
we may distribute the unused units arbitrarily between $b_1$ and $b_2$;
the two assignments remain feasible because their budget constraints are
upper bounds.} Therefore
\begin{equation}
	\label{eq:kraft-budget-recurrence}
\mu(A,b)=
\min_{\substack{b_1+b_2=b\\b_1,b_2\geq0}}
\bigl\{\mu(A_1,b_1)+\mu(A_2,b_2)\bigr\}.
\end{equation}
Equation~\eqref{eq:kraft-budget-recurrence} suggests the following
divide-and-conquer strategy. We partition the sorted frequency sequence
into consecutive blocks such that the frequency
ratio within each block is bounded by a constant. This property allows
us to compute the budget-constrained table values for each block efficiently
and in parallel. We then place the blocks at the leaves of a balanced
hierarchy and use Equation~\eqref{eq:kraft-budget-recurrence} to combine
neighboring subproblems level by level. All merges at the same level are
processed in parallel, and the root gives the optimal solution for the
entire frequency sequence.

Figure~\ref{fig:representative-blocks} illustrates our method for partitioning
the frequencies. We choose a set of representative frequencies as follows. Set
$\Representative{1}=1$, so the first representative frequency is $\Frequency{1}$. After choosing
$\Representative{j}$, define $\Representative{j+1}$ to be the smallest
index such that
$\Frequency{\Representative{j+1}}>
\GoldenRatio\Frequency{\Representative{j}}$, provided such an index
exists. Let $\Representative{1},\ldots,\Representative{q}$ be the
resulting representative indices. We append the dummy index
$\Representative{q+1}=m+1$, which serves only as an end marker, and for
every $j\in[q]$ define
$
\FrequencyBlock{j}
=
(f_{\Representative{j}},f_{\Representative{j}+1},
\ldots,f_{\Representative{j+1}-1}).
$
Since
$\Frequency{m}/\Frequency{2}\leq\RatioBound$, the number of blocks is at
most $2+\lceil\log_{\GoldenRatio}\RatioBound\rceil$.\footnote{Except
for the first representative, all representatives lie between
$\Frequency{2}$ and $\Frequency{m}$ and increase by a factor greater
than $\GoldenRatio$. The additional block accounts for the possible
gap between $\Frequency{1}$ and $\Frequency{2}$.}

\input{Sections/representative-blocks-figure}

The key point is that the depths within each block have constant range. By
construction, the frequencies within a block differ by a factor of at most
$\GoldenRatio$. Consequently, in an optimal solution to the subproblem associated with any block, regardless of the Kraft budget, the depths of the frequencies in that block differ by at most two. This
restriction, together with prefix sums of the frequencies,
lets us compute all values $\mu(B_j,b)$ independently in $O(1)$ \MPC{}
rounds. The implementation and memory analysis appear in
Section~\ref{sec:bounded-ratio-formal}.

We combine the blocks using a bottom-up hierarchy whose leaves are
$B_1,\ldots,B_q$. At each level, we pair neighboring nodes and compute
$\mu(A,b)$ for $A=A_1\uplus A_2$ using
\textcolor{black}{Equation~\eqref{eq:kraft-budget-recurrence}}. If the number of nodes is odd,
we carry the last node unchanged. We repeat until only one node remains.
Figure~\ref{fig:representative-blocks} also illustrates this construction.
The hierarchy has
$O(\log(2+\log\RatioBound))$ levels, and all nodes at the same level are
processed in parallel. At the root,
$A=B_1\uplus\cdots\uplus B_q$, and $\mu(A,2^u)$ is the
optimal encoded length.

The procedure above computes the optimum
value of $\mu(T)$. We can also recover a corresponding optimal depth assignment by
storing the budget split attaining each table entry and tracing these
choices from the root to the leaves.  This part is explained in greater detail in
Section~\ref{sec:bounded-ratio-formal}.

It remains to combine two neighboring subproblems efficiently. Consider
a merge \(A=A_1\uplus A_2\). For a frequency multiset \(A_i\), only
\(O(|A_i|2^{u-\ell})\) budget values are relevant, since each frequency
uses between \(1\) and \(2^{u-\ell}\) Kraft units. A direct
computation of the parent table tries every budget split
$b=b_1+b_2$ and evaluates
$\mu(A_1,b_1)+\mu(A_2,b_2)$. Over all parent budgets, this examines every
pair of entries from the two child tables and requires
$O(|A_1||A_2|2^{2(u-\ell)})$
candidate evaluations. Since \(u-\ell\leq
2\DepthSpread{\RatioBound}\), this can be as large as
$
O\!\left(
|A_1||A_2|2^{4\DepthSpread{\RatioBound}} 
\right) = O\!\left(
|A_1||A_2|\lambda^{5.77}
\right).
$
This exhaustive computation is too expensive, even before accounting
for its implementation in the \MPC model. In
Section~\ref{sec:kraft-budget-distribution}, we show how to perform each
merge efficiently without examining every pair of child entries.

%% file: Sections/representative-blocks-figure.tex
\begin{figure}[t]
  \centering
  \begin{tikzpicture}[
      x=0.9cm,
      y=0.8cm,
      frequency/.style={circle,draw=black,fill=gray!5,
        minimum size=7mm,inner sep=0pt,font=\small},
      representative/.style={frequency,draw=black,
        fill=black!18,line width=0.8pt},
      block/.style={rounded corners=2pt,draw=black,fill=black!3,
        line width=0.6pt},
      merged/.style={rounded corners=2pt,draw=black,
        fill=black!13,line width=0.7pt,font=\scriptsize},
      root/.style={rounded corners=2pt,draw=black,
        fill=black!13,line width=0.8pt,font=\scriptsize},
      edge/.style={draw=black,thin},
      threshold/.style={-{Stealth[length=1.5mm]},draw=black,thin}
    ]
    \draw[block] (-0.42,-0.48) rectangle (2.02,0.48);
    \draw[block] (2.28,-0.48) rectangle (3.92,0.48);
    \draw[block] (4.18,-0.48) rectangle (7.42,0.48);
    \draw[block] (7.68,-0.48) rectangle (9.32,0.48);

    \node[representative] (f2)  at (0,0)   {$2$};
    \node[frequency]      (f3a) at (0.8,0) {$3$};
    \node[frequency]      (f3b) at (1.6,0) {$3$};

    \node[representative] (f4)  at (2.7,0) {$4$};
    \node[frequency]      (f5)  at (3.5,0) {$5$};

    \node[representative] (f7)  at (4.6,0) {$7$};
    \node[frequency]      (f8)  at (5.4,0) {$8$};
    \node[frequency]      (f9)  at (6.2,0) {$9$};
    \node[frequency]      (f10) at (7.0,0) {$10$};

    \node[representative] (f13) at (8.1,0) {$13$};
    \node[frequency]      (f18) at (8.9,0) {$18$};
    \draw[threshold] (f2.north) to[bend left=25]
      node[above,font=\scriptsize] {$4>\GoldenRatio\!\cdot\!2$} (f4.north);
    \draw[threshold] (f4.north) to[bend left=25]
      node[above,font=\scriptsize] {$7>\GoldenRatio\!\cdot\!4$} (f7.north);
    \draw[threshold] (f7.north) to[bend left=25]
      node[above,font=\scriptsize] {$13>\GoldenRatio\!\cdot\!7$} (f13.north);

    \node[block]  (b1)  at (0.8,-1.45) {$B_1$};
    \node[block]  (b2)  at (3.1,-1.45) {$B_2$};
    \node[block]  (b3)  at (5.8,-1.45) {$B_3$};
    \node[block]  (b4)  at (8.5,-1.45) {$B_4$};
    \node[merged] (b12) at (1.95,-2.35) {$B_1\!:\!B_2$};
    \node[merged] (b34) at (7.15,-2.35) {$B_3\!:\!B_4$};
    \node[root]   (b14) at (4.55,-3.25) {$B_1\!:\!B_4$};
    \draw[edge] (b1) -- (b12);
    \draw[edge] (b2) -- (b12);
    \draw[edge] (b3) -- (b34);
    \draw[edge] (b4) -- (b34);
    \draw[edge] (b12) -- (b14);
    \draw[edge] (b34) -- (b14);

  \end{tikzpicture}
  \caption{Frequency blocks and the bottom-up hierarchy used to combine
  them. }
  \label{fig:representative-blocks}
\end{figure}

%% file: Sections/kraft-budget-distribution.tex
\subsection{Fast Merging by Layered Convexity}
\label{sec:kraft-budget-distribution}
\par\smallskip

The recurrence in
Equation~\eqref{eq:kraft-budget-recurrence} is simple, but evaluating it
directly is expensive. For a merge $A=A_1\uplus A_2$ and a fixed parent
budget $b$, direct evaluation loops over every split $b_1+b_2=b$. The number
of such divisions can be
$O(\min\{|A_1|,|A_2|\}2^{u-\ell})$. Our goal is to find the optimal
split without enumerating all feasible divisions. We formalize this task
in Problem~\ref{prob:kraft-budget-merging}.

\begin{problem}
	\label{prob:kraft-budget-merging}
	Let $A_1$ and $A_2$ be two disjoint frequency multisets,
	and suppose that the tables $\mu(A_1,\cdot)$ and $\mu(A_2,\cdot)$ are
	known. For $A=A_1\uplus A_2$, compute $\mu(A,b)$ for every feasible
	budget $b$, together with an optimal division of $b$ between $A_1$
	and $A_2$.
\end{problem}

By the recurrence relation in
Equation~\eqref{eq:kraft-budget-recurrence}, for a fixed  budget
$b$ we must minimize
\[
G_b(x)=\mu(A_1,x)+\mu(A_2,b-x)
\]
over the integers $x\in[0,b]$.
To avoid examining the budget divisions one by one, we seek a
well-behaved structure in $G_b$ that lets us find its minimum quickly.
Convexity would have given such a structure: if $G_b$ were convex, a binary
search comparing neighboring values would find its minimum with a
logarithmic number of evaluations. However, $G_b$ is not convex.
Consider $A_1=A_2=\{1\}$, with $\ell=2$, maximum depth $u=5$, and allowed depths
$2,3,4,5$. For either subproblem, these depths have budget contributions
$8,4,2,1$ and incur costs $2,3,4,5$, respectively. For total budget
$b=10$, Figure~\ref{fig:nonconvex-merge-example} gives the two
subproblem costs and the resulting values of $G_{10}$. In particular,
$G_{10}(3)$ and $G_{10}(7)$ are larger than both of their neighboring
values, so $G_{10}$ is not convex.

\FloatBarrier
\input{Sections/nonconvex-merge-example-figure}

\paragraph{Layered convexity.}
Although $G_b$ is not convex, we show that it satisfies a weaker
property, which we call \emph{layered convexity}.
We say that $G_b$ is layered convex modulo $d$ if, for
every residue $r$ modulo $d$, the sequence
\[
G_b(r),G_b(r+d),G_b(r+2d),\ldots
\]
over its feasible arguments is convex. Thus, a layered-convex function
need not be convex on consecutive elements; convexity is guaranteed
only within each residue class.
Our main structural claim is that, for every instance of
Problem~\ref{prob:kraft-budget-merging} and every feasible parent budget
$b$, the function $G_b$ is layered convex modulo $d=2^{u-\ell-1}$.

Before discussing the general structure, let us see the phenomenon via
an example. Figure~\ref{fig:residue-convex-profile} shows the profile
for two subproblems, each containing three frequencies equal to $2$,
with $\ell=3$, $u=6$, and total budget $b=20$. As
Figure~\ref{fig:residue-convex-profile}(a) shows, the interleaved
profile has several local spikes and is therefore not convex.

For this instance, $d=2^{u-\ell-1}=4$. Grouping the budget divisions
by their residue modulo $4$ separates the profile into four convex
layers. For example, the layer with residue $0$ consists of
$x=4,8,12,16$, where the values of $G_b(x)$ are $56,52,52,56$.
Its successive differences are $-4,0,4$, which are nondecreasing;
hence, this layer is convex. The other three layers behave similarly,
as shown in Figure~\ref{fig:residue-convex-profile}(b).
\input{Sections/residue-convex-profile-figure}

Layered convexity would let us evaluate the recurrence much faster: 
 decompose the feasible divisions $x$ into their residue classes modulo $d$.
Within each class, the values of $G_b(x)$ form a convex sequence, so a binary
search comparing neighboring values finds the minimum in that class. The
different classes can be searched independently and in parallel, after which
we return the smallest of their minima. 

To give the intuition behind layered convexity, we
interpret the depth-assignment problem as a knapsack problem whose items
represent incremental upgrades. Fix the upper depth
bound $u$, and initially place every frequency at depth $u$. Thus, each
frequency uses one budget unit, so the total initial budget usage is
$|A|$. The corresponding baseline cost is
$u\sum_{f_i\in A}f_i$. For every feasible budget $b$, we necessarily
have $b\geq |A|$.

Shortening the codeword of frequency $f_i$ from depth $u$ to $u-s$
increases its Kraft contribution from $1$ to $2^s$. The additional
budget consumed is therefore
$
2^s-1=1+2+4+\cdots+2^{s-1},
$
while the cost decreases by $sf_i$. We view this as taking $s$
successive \emph{upgrades}, of widths
$1,2,4,\ldots,2^{s-1}$, each of which saves $f_i$ in cost.

This interpretation gives a knapsack problem with capacity $b-|A|$.
For each frequency $f_i\in A$, we create $u-\ell$ items of widths
$1,2,4,\ldots,2^{u-\ell-1}$, each having value $f_i$, and seek a
maximum-value set of total width at most $b-|A|$. A valid depth
assignment selects a prefix of the upgrades associated with each
frequency. Conversely, suppose that a knapsack solution selects $t$
upgrades associated with $f_i$. We can replace them by the $t$ smallest
upgrades of that frequency. This replacement preserves the total value,
uses no more capacity, and produces a valid prefix. Applying this
replacement to every frequency transforms any knapsack solution into a
valid depth assignment. Consequently, $\mu(A,b)$ equals the baseline
cost $u\sum_{f_i\in A}f_i$ minus the optimum value of this knapsack
instance.

This knapsack reformulation is the key ingredient in our layered-convexity argument. To understand the intuition, recall that
$d=2^{u-\ell-1}$ is the largest item width. We may add zero-value items
of width $1$, so an optimal solution can use the entire budget without
changing its value. Fix a residue $r$ modulo $d$ and first consider the
width-$1$ items. If $r$ is odd, the solution must contain an odd number
of these items, because every larger item has even width. We may assume
that the solution contains the most valuable width-$1$ item: replacing
any selected width-$1$ item by this item can only improve the solution.
After fixing this item, the solution contains an even number of the
remaining width-$1$ items. We pair the two most valuable remaining
items, then the next two, and so on. Choosing two more width-$1$ items
now corresponds to choosing the first pair; choosing four more
corresponds to choosing the first two pairs. We can therefore regard
each pair as one width-$2$ item whose value is the sum of the two
values. If $r$ is even, no width-$1$ item is forced, and we pair all
width-$1$ items in the same way.

At width $2$, we now have two kinds of items: the items that originally
had width $2$ and the new items obtained by merging pairs of width-$1$
items. We consider both kinds together. After accounting for the
width-$1$ choice, the remaining residue modulo $4$ is either $0$ or
$2$. If it is $2$, the solution must contain an odd number of width-$2$
items, so we may fix the most valuable one. The solution then contains
an even number of the remaining width-$2$ items, which we pair to form
width-$4$ items. If the residue is $0$, we pair the width-$2$ items
without fixing one first. We repeat the same argument for widths
$4,8,\ldots,d/2$.

After these steps, the residue $r$ has fixed all choices involving
widths smaller than $d$. Only width-$d$ items remain. For a budget of
the form $r+td$, the solution takes the fixed smaller-width choices and
then the $t$ most valuable width-$d$ items. Increasing the budget by
$d$ adds the next most valuable item. The value gained at each step can
only decrease, which means the
optimal knapsack values at capacities $r,r+d,r+2d,\ldots$ form a concave
sequence.
Standard properties of convex sequences then imply that
$
G_b(r),G_b(r+d),G_b(r+2d),\ldots
$
is convex. Since this holds for every residue $r$ modulo $d$, the
function $G_b(\cdot)$ is layered convex.
 We prove the structural claim formally in
Section~\ref{sec:layered-merging-formal}. We can therefore
search every layer independently and return the smallest layer minimum.

\paragraph{\MPC implementation.}
For each parent budget $b$, we search all convex layers in parallel.
A standard binary search would examine one position at a time and would
therefore require logarithmically many search steps. Instead, in each step
we probe $\eta\geq 2$ evenly spaced positions in every layer. Convexity
identifies the subinterval containing the layer minimum, reducing the
remaining search range by a factor of $\Theta(\eta)$. We repeat this procedure
until the minimum of every layer is found, and then return the smallest
of these minima. All parent budgets, layers, and probes are evaluated
in parallel.

Each layer contains
$O(\min\{|A_1|,|A_2|\})$ feasible divisions. The search therefore
finishes in
$
O(1+\log_\eta\min\{|A_1|,|A_2|\})
$
\textcolor{black}{search steps}.
The maximum item width, and hence the number of layers, is
$d=2^{u-\ell-1}=O(\RatioBound^{2\log_{\GoldenRatio}2})$. Thus, the probes
for one parent budget require
$
O\!\left(
\eta\RatioBound^{2\log_{\GoldenRatio}2}
\right)
$
local memory. The required child-table entries are distributed across
the machines and can be routed to the corresponding probes using
standard \MPC primitives. \textcolor{black}{Their cost is
\(O(1+\log_\eta(|A_1|+|A_2|))\) rounds per search step, as
accounted for in Lemma~\ref{lem:fast-table-merge}.}
Consequently, for every $\eta\geq2$, a merge of
$A_1$ and $A_2$ takes
\textcolor{black}{\(O((1+\log_\eta(|A_1|+|A_2|))
(1+\log_\eta\min\{|A_1|,|A_2|\}))\)} \MPC{} rounds, uses
$O(\eta\RatioBound^{2\log_{\GoldenRatio}2})$ local memory per machine,
and requires
$\softO((|A_1|+|A_2|)\eta
\RatioBound^{4\log_{\GoldenRatio}2})$ total memory and computation.

Combining this merging procedure with the bottom-up hierarchy in
Section~\ref{sec:bounded-ratio} gives the complexity of Bounded-Ratio
Huffman Coding. At every level of the hierarchy, the
frequency multisets are disjoint and their sizes sum to $m$. Thus,
summing the cost of all merges at one level replaces
$|A_1|+|A_2|$ by $m$.
\begin{lemma}
	\label{lem:bounded-ratio-huffman}
	For every $\eta\geq2$, Bounded-Ratio Huffman Coding on $m$
	frequencies can be solved in
	\textcolor{black}{$O((1+\log_\eta m)^2\log(2+\log\RatioBound))$} \MPC{} rounds using
	$O(\eta\RatioBound^{2\log_{\GoldenRatio}2})$ local memory per machine.
	Its total memory and total computation are
	$\softO(m\eta\RatioBound^{4\log_{\GoldenRatio}2})$.
\end{lemma}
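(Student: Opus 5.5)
The proof assembles three ingredients already described: the block decomposition with the depth-range bound, the $O(1)$-round computation of the leaf tables $\mu(B_j,\cdot)$, and the layered-convexity merge. Combining them along the bottom-up hierarchy gives the stated bounds.

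\textbf{Step 1: the blocks.} Sort the frequencies and compute their prefix sums with standard \MPC primitives in $O(1+\log_\eta m)$ rounds. Choose the representatives $\Representative{j}$ and the blocks $\FrequencyBlock{j}$. Each representative is the first index whose frequency exceeds $\GoldenRatio$ times the previous representative. Since $q\le 2+\lceil\log_{\GoldenRatio}\RatioBound\rceil$, there are only $O(\log\RatioBound)$ thresholds, and each can be located by a parallel search on the sorted array. By Corollary~\ref{cor:bounded-ratio-depth-range}, we may restrict depths to $[\ell,u]$, so $\mu(A,2^u)$ over all frequencies is the optimum.

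\textbf{Step 2: the leaf tables.} Inside a block, the ratio is at most $\GoldenRatio$, so an optimal solution uses at most three consecutive depths. By exchange, these depths are monotone in the sorted order. Hence a solution is described by a depth $h$ and two cut points. For a fixed budget, the cut points are forced up to $O(1)$ choices, and the cost of each choice is read from prefix sums in $O(1)$ work. The table has $O(|B_j|2^{u-\ell})$ relevant budgets, and $2^{u-\ell}=O(\RatioBound^{2\log_{\GoldenRatio}2})$. So all leaf tables are built in $O(1)$ rounds with total work $\softO(m\RatioBound^{2\log_{\GoldenRatio}2})$. Each entry is computed independently, which keeps local memory within the bound.

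\textbf{Step 3: the hierarchy.} The hierarchy has $O(\log q)=O(\log(2+\log\RatioBound))$ levels. At each level, all pairs $A=A_1\uplus A_2$ are merged in parallel by Problem~\ref{prob:kraft-budget-merging}. By the layered-convexity claim (proved in Section~\ref{sec:layered-merging-formal}), $G_b$ is convex on every residue class modulo $d=2^{u-\ell-1}$. Each class has $O(\min\{|A_1|,|A_2|\})$ points. An $\eta$-ary search with $\eta$ evenly spaced probes per class shrinks the candidate interval by a factor $\Theta(\eta)$ per step. This takes $O(1+\log_\eta m)$ steps.

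In each step, the probed child-table entries must be fetched. These requests are batched by sorting them and joining them with the tables, which costs $O(1+\log_\eta m)$ rounds by standard \MPC sorting with fan-in $\eta$ (Lemma~\ref{lem:fast-table-merge}). This gives $O((1+\log_\eta m)^2)$ rounds per level and $O((1+\log_\eta m)^2\log(2+\log\RatioBound))$ rounds overall.

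\textbf{Resources.} One parent budget needs $d\cdot\eta$ probes, which fit in $O(\eta\RatioBound^{2\log_{\GoldenRatio}2})$ local memory. A parent has $O(|A|2^{u-\ell})$ budgets, and each costs $O(\eta d)$ probes times a polylogarithmic number of steps. Summing over a level, where the multisets partition the $m$ frequencies, gives $\softO(m\eta\RatioBound^{4\log_{\GoldenRatio}2})$. The $O(\log\log\RatioBound)$ levels are absorbed by the $\softO$.

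To recover the depth assignment, store the optimal split of each entry and trace it top-down. This uses the same number of levels and routing rounds.

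\textbf{Main obstacle.} The round-accounting is routine, so the real difficulty is the layered-convexity claim. I would prove it through the knapsack reformulation. First, $\mu(A,b)$ equals the baseline cost minus the optimal value $\KnapsackValue{A}{b-|A|}$ of the upgrade knapsack. Second, along $r,r+d,r+2d,\ldots$, the knapsack value is concave. This follows from the pairing argument: fix the forced odd item at each width $2^k<d$, then pair the remaining items into width-$2^{k+1}$ items. After all widths below $d$ are handled, each step of $d$ adds the next most valuable width-$d$ item, so the increments are nonincreasing.

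Concavity of each child table on a residue class alone is not enough. The sum $\mu(A_1,x)+\mu(A_2,b-x)$ couples residue $r$ in $A_1$ with residue $b-r$ in $A_2$, but both children are convex in steps of $d$. Their sum along $x\equiv r$ is therefore a sum of two convex sequences, one reversed, and is convex. The care needed is in the boundary handling: infeasible budgets, and the offsets $|A_1|$ and $|A_2|$ that shift the residues.
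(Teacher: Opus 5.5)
Your overall route is the paper's: the Fibonacci depth window, $\varphi$-blocks with at most three consecutive depths, and a hierarchy of $O(\log(2+\log\lambda))$ levels. Each merge runs an $\eta$-ary search on every residue class modulo $d=2^{u-\ell-1}$, with sorting-based routing of child-table entries. Layered convexity comes from the upgrade knapsack and the pairing argument, followed by summing a convex and a reversed convex sequence. Your resource accounting for the hierarchy matches Lemma~\ref{lem:fast-table-merge} and Lemma~\ref{lem:bounded-ratio-huffman-formal}.

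\textbf{The gap in Step~2.} The claim ``for a fixed budget, the cut points are forced up to $O(1)$ choices'' fails as written. With depths $h,h+1,h+2$ and counts $x,y,z$, the constraints $x+y+z=s$ and $4x+2y+z=v$ (usage in units of $2^{u-h-2}$) leave a free parameter: $y=v-s-3x$ and $z=2s-v+2x$. For $v$ near $2s$ there are $\Theta(s)$ feasible values of $x$. Enumerating them costs $\Theta(s^2)$ work for a block of size $s$. That breaks the $\widetilde{O}(m\eta\lambda^{4\log_\varphi 2})$ bound when one block holds $\Theta(m)$ frequencies, for instance when all frequencies are equal.

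The missing idea is the second exchange in Lemma~\ref{lem:constant-time-base-case}. Raise $x$ by one at the same usage: this moves $w_{s-x}$ from depth $h+1$ to $h$, and moves $w_{z+1},w_{z+2}$ from depth $h+1$ to $h+2$. The cost changes by $w_{z+1}+w_{z+2}-w_{s-x}\geq 2w_1-w_s>0$, because $w_s\leq\varphi w_1<2w_1$. Hence the smallest feasible $x$ is optimal, and only that candidate needs to be evaluated for each of the at most three windows $h$.

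\textbf{Smaller repairs.} Three further points need fixing, and all of them stay within the stated bounds.
\begin{itemize}
\item Since $\mu(B,b)$ allows usage at most $b$, the exact-usage costs must be combined by a prefix minimum over $b'\leq b$. So the entries are not computed independently.
\item This scan, together with routing the prefix-sum reads, takes $O(1+\log_\eta m)$ rounds rather than $O(1)$ for general $\eta\geq2$.
\item The thresholds $\varphi f_{r_j}$ depend on earlier representatives, so they cannot all be searched at once. You can instead use pointer jumping on next-pointers. Alternatively, use the fixed intervals $[\varphi^t f_2,\varphi^{t+1}f_2)$ with $f_1$ in its own block, as in the \PRAM implementation.
\end{itemize}
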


%% file: Sections/nonconvex-merge-example-figure.tex
\begin{figure}[t]
	\centering
	\begin{minipage}[c]{0.52\linewidth}
		\centering
		\setlength{\tabcolsep}{2.5pt}
		\renewcommand{\arraystretch}{1.15}
		\scriptsize
		\begin{tabular}{c|ccccccc}
			$x$ & $2$ & $3$ & $4$ & $5$ & $6$ & $7$ & $8$\\ \hline
			$\mu(A_1,x)$ & $4$ & $4$ & $3$ & $3$ & $3$ & $3$ & $2$\\
			$\mu(A_2,10-x)$ & $2$ & $3$ & $3$ & $3$ & $3$ & $4$ & $4$\\ \hline
			$G_{10}(x)$ & $6$ & $7$ & $6$ & $6$ & $6$ & $7$ & $6$
		\end{tabular}
	\end{minipage}\hfill
	\begin{minipage}[c]{0.44\linewidth}
		\centering
		\resizebox{\linewidth}{!}{%
		\begin{tikzpicture}[
			x=1.05cm,
			y=0.9cm,
			axis/.style={-{Stealth[length=1.5mm]},draw=black,thin},
			profile/.style={draw=black,line width=0.9pt},
			value/.style={circle,draw=black,fill=black!8,
				minimum size=3.4mm,inner sep=0pt},
			peak/.style={circle,draw=black,fill=black!22,
				minimum size=4.2mm,inner sep=0pt,line width=0.9pt}
		]
			\draw[axis] (1.55,4.65) -- (8.45,4.65)
				node[right,font=\scriptsize] {$x$};
			\draw[axis] (1.75,4.45) -- (1.75,6.55)
				node[above,font=\scriptsize] {$G_{10}(x)$};
			\draw[profile]
				(2,5)--(3,6)--(4,5)--(5,5)--(6,5)--(7,6)--(8,5);
			\foreach \x/\y in {2/5,4/5,5/5,6/5,8/5}
				\node[value] at (\x,\y) {};
			\foreach \x in {3,7}
				\node[peak] at (\x,6) {};
			\foreach \x in {2,3,4,5,6,7,8}
				\draw[black!45] (\x,4.72)--(\x,4.58)
					node[below=2pt,font=\scriptsize] {$\x$};
			\foreach \y/\label in {5/6,6/7}
				\draw[black!45] (1.82,\y)--(1.68,\y)
					node[left=2pt,font=\scriptsize] {$\textcolor{black}{\label}$};
			\node[font=\scriptsize,text=black] at (5,6.55)
				{larger than both neighbors};
			\draw[-{Stealth[length=1.2mm]},draw=black]
				(4.25,6.40)--(3.12,6.08);
			\draw[-{Stealth[length=1.2mm]},draw=black]
				(5.75,6.40)--(6.88,6.08);
		\end{tikzpicture}%
		}
	\end{minipage}
	\caption{Budget values and the resulting nonconvex merge profile for
		two one-frequency multisets.}
	\label{fig:nonconvex-merge-example}
\end{figure}

%% file: Sections/residue-convex-profile-figure.tex
\begin{figure}[t]
  \centering
  \begin{tikzpicture}[
      yscale=0.9,
      axis/.style={-{Stealth[length=1.5mm]},draw=black,thin},
      profile/.style={draw=black,line width=0.9pt},
      point/.style={circle,draw=black,fill=black!8,
        minimum size=2.5mm,inner sep=0pt},
      spike/.style={circle,draw=black,fill=black!22,
        line width=0.8pt,minimum size=3.5mm,inner sep=0pt},
      strand/.style={draw=black,line width=0.8pt},
      strandpoint/.style={circle,draw=black,fill=black!10,
        minimum size=2.5mm,inner sep=0pt}
    ]
    \begin{scope}
      \node[font=\small] at (2.38,4.05) {(a) Interleaved profile};
      \draw[axis] (-0.25,0.35) -- (5.12,0.35)
        node[right,font=\scriptsize] {$x$};
      \draw[axis] (0,0.20) -- (0,3.55)
        node[above,font=\scriptsize] {$G_{b}(x)$};

      \draw[profile] plot coordinates {
        (0,3.00) (0.34,2.20) (0.68,2.20) (1.02,1.40)
        (1.36,1.40) (1.70,0.60) (2.04,1.40) (2.38,0.60)
        (2.72,1.40) (3.06,0.60) (3.40,1.40) (3.74,1.40)
        (4.08,2.20) (4.42,2.20) (4.76,3.00)
      };
      \foreach \x/\y in {
        0/3.00,0.34/2.20,0.68/2.20,1.02/1.40,1.36/1.40,
        1.70/0.60,2.38/0.60,3.06/0.60,3.74/1.40,
        4.08/2.20,4.42/2.20,4.76/3.00
      }
        \node[point] at (\x,\y) {};
      \foreach \x/\y in {2.04/1.40,2.72/1.40,3.40/1.40}
        \node[spike] at (\x,\y) {};

      \foreach \x/\label in {0/3,1.02/6,2.04/9,3.06/12,4.08/15,4.76/17}
        \draw[black!45!black] (\x,0.41)--(\x,0.29)
          node[below=2pt,font=\tiny] {$\label$};
      \foreach \y/\label in {0.60/52,1.40/54,2.20/56,3.00/58}
        \draw[black!45!black] (0.07,\y)--(-0.07,\y)
          node[left=2pt,font=\tiny] {$\label$};
      \node[font=\scriptsize,text=black,align=center]
        at (2.36,2.30) {\textcolor{black}{convexity}\\\textcolor{black}{violations}};
      \draw[-{Stealth[length=1.2mm]},draw=black]
        (2.30,2.05)--(2.08,1.57);
    \end{scope}

    \begin{scope}[xshift=7.0cm]
      \node[font=\small] at (2.10,6.85)
        {(b) Four convex layers};

      \node[left,font=\scriptsize] at (-0.18,5.55) {$r=0$};
      \draw[strand] plot coordinates
        {(0,5.85) (1.40,5.15) (2.80,5.15) (4.20,5.85)};
      \foreach \x/\y in {0/5.85,1.40/5.15,2.80/5.15,4.20/5.85}
        \node[strandpoint] at (\x,\y) {};
      \node[font=\tiny] at (2.10,4.88) {differences $-4,0,4$};

      \node[left,font=\scriptsize] at (-0.18,4.05) {$r=1$};
      \draw[strand] plot coordinates
        {(0,4.35) (1.40,4.00) (2.80,4.00) (4.20,4.70)};
      \foreach \x/\y in {0/4.35,1.40/4.00,2.80/4.00,4.20/4.70}
        \node[strandpoint] at (\x,\y) {};
      \node[font=\tiny] at (2.10,3.70) {differences $-2,0,4$};

      \node[left,font=\scriptsize] at (-0.18,2.60) {$r=2$};
      \draw[strand] plot coordinates
        {(0,2.85) (2.10,2.50) (4.20,2.85)};
      \foreach \x/\y in {0/2.85,2.10/2.50,4.20/2.85}
        \node[strandpoint] at (\x,\y) {};
      \node[font=\tiny] at (2.10,2.22) {differences $-2,2$};

      \node[left,font=\scriptsize] at (-0.18,1.10) {$r=3$};
      \draw[strand] plot coordinates
        {(0,1.45) (1.40,0.75) (2.80,0.75) (4.20,1.10)};
      \foreach \x/\y in {0/1.45,1.40/0.75,2.80/0.75,4.20/1.10}
        \node[strandpoint] at (\x,\y) {};
      \node[font=\tiny] at (2.10,0.46) {differences $-4,0,2$};

    \end{scope}
  \end{tikzpicture}
  \caption{The split profile for two frequency multisets, each containing three frequencies
  equal to $2$, with $u=6$, $\ell=3$, and total budget $20$.
  \textcolor{black}{The marked points violate convexity in the interleaved profile.} Grouping divisions by their
  residue modulo $d=4$ produces four convex layers.}
  \label{fig:residue-convex-profile}
\end{figure}

%% file: Sections/threshold-reconstruction-rewrite.tex
\subsection{Iterative-State Reconstruction}
\label{sec:threshold-reconstruction}

Iterative-State Reconstruction receives the pairs
$(\Frequency{1},\LeafDepth{1}),\ldots,
(\Frequency{m},\LeafDepth{m})$, where
$\Frequency{1}\leq\cdots\leq\Frequency{m}$ are the frequencies of the
bounded-ratio input instance, $\Frequency{m}/\Frequency{2}
\leq\RatioBound$, and $\LeafDepth{i}$ is the depth assigned to
$\Frequency{i}$ in the optimal solution returned by Bounded-Ratio Huffman Coding.
The input also contains a threshold $\tau>\Frequency{m}$. Our goal
is to simulate the \textcolor{black}{classical} iterative Huffman coding algorithm without running it and directly jump to its state 
immediately before the first merge involving the leaf of frequency more than $\tau$.

We write \(h_{\min}\) and \(h_{\max}\) for the minimum and maximum leaf depths. By the optimality of the solution, the Kraft--McMillan inequality in
Equation~\eqref{eq:huffman-length-formulation} must be tight, i.e.,
\[
\sum_{i=1}^{m} 2^{-\LeafDepth{i}}
=1,
\qquad\text{or equivalently}\qquad
\sum_{i=1}^{m} 2^{h_{\max}-\LeafDepth{i}}
=2^{h_{\max}}.
\]

Before presenting our algorithm, we give some intuition for what it means to recover the state of the \textcolor{black}{classical} Huffman algorithm immediately before the first merge that involves a frequency greater than $\tau$. Consider an optimal Huffman tree for the subproblem. Its nodes are of two kinds. The \emph{leaves} correspond one-to-one to the input frequencies. The \emph{internal nodes} correspond to the merges performed by the algorithm. We define the frequency of an internal node as the sum of the frequencies of the leaves in its subtree.

For the purpose of exposition, suppose that the optimal Huffman tree is known, and consider running the \textcolor{black}{classical} Huffman algorithm on the input. At each step, the algorithm merges two nodes into a new node whose frequency is the sum of theirs. For each such merge, the optimal Huffman tree contains exactly one node whose two children are the nodes being merged. Removing these two children from the tree makes that node a leaf. After the removal, the multiset of frequencies maintained by the algorithm equals the multiset of leaf frequencies of the modified tree. If we update the tree in the same way after every merge, this correspondence holds throughout the execution. That is, at every step there is a bijection between the frequencies held by the Huffman algorithm and the leaf frequencies of the current tree.

This correspondence characterizes the output we seek. Given the optimal Huffman tree, we remove every node whose parent has no child of frequency greater than $\tau$. The resulting tree represents exactly the target state of the \textcolor{black}{classical} Huffman algorithm. Its leaves, together with the cost of the removed nodes, form the desired output. For ease of exposition, we ignore the cost here and focus on identifying these leaves; the cost can be computed with the same algorithm. In Section~\ref{sec:iterative-state-formal}, we show that this task reduces to recovering the frequencies of all nodes at a given height of the optimal Huffman tree. Therefore, in what follows we assume that a height $h^*$ is given, and our goal is to recover the frequencies of all nodes at height $h^*$ in the optimal Huffman tree.

Keep in mind that although the optimal Huffman tree itself is not available, the input specifies the depths $h_1, h_2, \ldots, h_m$ of the leaves corresponding to the frequencies $f_1, f_2, \ldots, f_m$. These depths characterize the optimal Huffman tree implicitly, and our algorithm uses them to solve the problem.

\textcolor{black}{From} the values of $h_1, \ldots, h_m$ it is easy to find out how many nodes are \textcolor{black}{at depth} $h^*$ of the optimal Huffman tree (see Section~\ref{sec:iterative-state-formal} for more details). Let us denote their count by $q$ and denote their frequencies by $w_1\leq\cdots\leq w_q$. Each such node is either an input
leaf or the root of a subtree whose leaves occur at greater depths. In
either case, the Kraft values of the leaves in this subtree
sum to $2^{-h^*}.$

The Package-Merge
characterization~\cite{larmore1990package} provides a direct formula to determine the frequencies of these nodes:
for every $r\in\{1,\ldots,q\}$, the leaves in the $r$ lightest subtrees
have the minimum total frequency among all sets of leaves with
 Kraft value $r2^{-h^*}$. Denoting this minimum by $P(r)$, we have
\[
P(r)=\sum_{j=1}^{r}w_j
=
\min\left\{
\sum_{f_i\in A}f_i:
A\subseteq(f_1,f_2,\ldots,f_m),\quad
\sum_{f_i\in A}2^{-h_i}=r2^{-h^*}
\right\}.
\]

\begin{figure}[t]
	\centering
	\begin{tikzpicture}[
		x=0.9cm,
		y=0.72cm,
		leaf/.style={
			circle,
			draw=black,
			fill=black!8,
			minimum size=6mm,
			inner sep=1pt,
			font=\small
		},
		internal/.style={
			circle,
			draw=black,
			fill=black!10,
			minimum size=6mm,
			inner sep=1pt,
			font=\small
		},
		selected/.style={
			circle,
			draw=black,
			fill=black!15,
			minimum size=6.5mm,
			inner sep=1pt,
			font=\small\bfseries
		},
		edge/.style={
			draw=black,
			thick
		}
		]

		\node[internal] (root) at (0,7.2) {$21$};
		\node[internal] (v8)  at (-3,6) {$8$};
		\node[internal] (v13) at (3,6) {$13$};

		\node[selected] (v4a) at (-4.5,4.8) {$4$};
		\node[selected] (v4b) at (-1.5,4.8) {$4$};
		\node[selected] (v6)  at (1.5,4.8) {$6$};
		\node[selected] (f7)  at (4.5,4.8) {$7$};

		\node[internal] (v2a) at (-5.15,3.6) {$2$};
		\node[internal] (v2b) at (-3.85,3.6) {$2$};
		\node[leaf] (f1a) at (-5.5,2.4) {$1$};
		\node[leaf] (f1b) at (-4.8,2.4) {$1$};
		\node[leaf] (f1c) at (-4.2,2.4) {$1$};
		\node[leaf] (f1d) at (-3.5,2.4) {$1$};

		\node[leaf] (f2) at (-2.1,3.6) {$2$};
		\node[internal] (v2c) at (-0.9,3.6) {$2$};
		\node[leaf] (f1e) at (-1.25,2.4) {$1$};
		\node[leaf] (f1f) at (-0.55,2.4) {$1$};

		\node[leaf] (f3a) at (0.9,3.6) {$3$};
		\node[leaf] (f3b) at (2.1,3.6) {$3$};

		\draw[edge] (root) -- (v8);
		\draw[edge] (root) -- (v13);
		\draw[edge] (v8) -- (v4a);
		\draw[edge] (v8) -- (v4b);
		\draw[edge] (v13) -- (v6);
		\draw[edge] (v13) -- (f7);
		\draw[edge] (v4a) -- (v2a);
		\draw[edge] (v4a) -- (v2b);
		\draw[edge] (v2a) -- (f1a);
		\draw[edge] (v2a) -- (f1b);
		\draw[edge] (v2b) -- (f1c);
		\draw[edge] (v2b) -- (f1d);
		\draw[edge] (v4b) -- (f2);
		\draw[edge] (v4b) -- (v2c);
		\draw[edge] (v2c) -- (f1e);
		\draw[edge] (v2c) -- (f1f);
		\draw[edge] (v6) -- (f3a);
		\draw[edge] (v6) -- (f3b);

		\foreach \x in {-4.5,-1.5,1.5,4.5} {
			\draw[
				rounded corners=3pt,
				thick,
				draw=black,
				fill=black!5
			] (\x-1.15,0.15) rectangle (\x+1.15,1.35);
		}

		\foreach \xa/\xb/\lab in {-5.5/-5.0/1,-5.0/-4.5/1,-4.5/-4.0/1,-4.0/-3.5/1} {
			\draw[draw=black,fill=black!7]
				(\xa,0.4) rectangle (\xb,1.1);
			\node[font=\small] at ({(\xa+\xb)/2},0.75) {$\lab$};
		}

		\draw[draw=black,fill=black!7]
			(-2.5,0.4) rectangle (-1.5,1.1);
		\draw[draw=black,fill=black!7]
			(-1.5,0.4) rectangle (-1.0,1.1);
		\draw[draw=black,fill=black!7]
			(-1.0,0.4) rectangle (-0.5,1.1);
		\node[font=\small] at (-2.0,0.75) {$2$};
		\node[font=\small] at (-1.25,0.75) {$1$};
		\node[font=\small] at (-0.75,0.75) {$1$};

		\draw[draw=black,fill=black!7]
			(0.5,0.4) rectangle (1.5,1.1);
		\draw[draw=black,fill=black!7]
			(1.5,0.4) rectangle (2.5,1.1);
		\node[font=\small] at (1.0,0.75) {$3$};
		\node[font=\small] at (2.0,0.75) {$3$};

		\draw[draw=black,fill=black!7]
			(3.5,0.4) rectangle (5.5,1.1);
		\node[font=\small] at (4.5,0.75) {$7$};

		\node[font=\scriptsize] at (-4.5,-0.1) {$1+1+1+1$};
		\node[font=\scriptsize] at (-1.5,-0.1) {$2+1+1$};
		\node[font=\scriptsize] at (1.5,-0.1) {$2+2$};
		\node[font=\scriptsize] at (4.5,-0.1) {$4$};

		\node[font=\scriptsize,text=black] at (-4.5,1.7)
			{$P(1)=4$};
		\node[font=\scriptsize,text=black] at (-1.5,1.7)
			{$P(2)=8$};
		\node[font=\scriptsize,text=black] at (1.5,1.7)
			{$P(3)=14$};
		\node[font=\scriptsize,text=black] at (4.5,1.7)
			{$P(4)=21$};
	\end{tikzpicture}
	
	\caption{
		In this example, we have $h^*=2$ and thus there are $q=4$ nodes at depth $2$. Each enclosing box groups leaves of
		total Kraft value $1/4$. The node frequencies are
		$4,4,6,7$, and their prefix sums give the displayed values of
		$P(r)$.
	}
	\label{fig:kraft-budget-reconstruction}
\end{figure}

Consequently, it is enough to compute $P(r)$ for every
$r\in\{0,\ldots,q\}$. Setting $P(0)=0$, we can recover the node frequencies
from
$
w_r=P(r)-P(r-1),
$ for every $r=1,\ldots,q.
$
We therefore focus on computing the values of $P(r)$.

Notice that the optimization problem for $P(r)$ is \textcolor{black}{an exact-width knapsack problem}.
Our strategy is to begin with a fractional relaxation of each
optimization problem. The relaxed problem is indeed a fractional knapsack
instance and can be solved easily by a greedy algorithm.
We then show that an optimal integral solution can differ from the
fractional solution only on a small number of leaves near the
fractional cutoff. This allows us to fix all other choices and recover
the exact value of $P(r)$ by solving only the remaining small instance.

Fix $r\in\{1,\ldots,q\}$, and let
$\kappa_i=2^{-\LeafDepth{i}}$ denote the  Kraft value of
leaf $i$. In the fractional relaxation, instead of deciding whether or not a leaf is chosen to contribute to $P(r)$, we may select a fraction
$x_i\in[0,1]$ of each leaf \textcolor{black}{contributing an $x_i$ fraction of both its Kraft value and its frequency}. 
For the relaxed problem, one can sort the leaves by nondecreasing density
$\Frequency{i}/\kappa_i$ and select them in this order until their total
 Kraft value reaches $r2^{-h^*}$. All selected leaves are taken completely, except possibly the leaf containing the cutoff, which is selected fractionally. 
Figure~\ref{fig:fractional-greedy-cutoffs} illustrates the greedy order
and the resulting cutoffs.
\input{Sections/fractional-greedy-cutoffs-figure}

Interestingly, the fractional solution cannot be too far from an
optimal integral solution. In fact, as we show in
Section~\ref{sec:computing-prefix-optima} (Lemma~\ref{lem:fractional-proximity}), once we group the leaves
by their Kraft values \(\kappa_i\), or equivalently by
their depths, there exists an optimal integral solution that agrees
with the fractional solution everywhere except possibly on \(2c\)
leaves within each group where $c=2^{h_{\max} - h_{\min}}$. 

More precisely, consider one depth group. All leaves
in the group have the same value of \(\kappa_i\), so ordering them by
density \(\Frequency{i}/\kappa_i\) is the same as ordering them by
frequency. The fractional solution therefore determines a cutoff in
the group: leaves before the cutoff are selected, whereas those after
it are not. Moreover, an optimal integral solution can be chosen as a
prefix of the group in nondecreasing frequency order which defines a similar cutoff. 
The integral solution can differ from the fractional solution
only by moving the cutoff forward or backward: moving it forward adds
the first leaves after the fractional cutoff, while moving it backward
removes the last leaves before the cutoff. We show in Section~\ref{sec:computing-prefix-optima} that, in every depth group, the two cutoffs for the fractional solution and the integral solution can differ by at most $2c$ leaves. This basically reduces the size of the problem for each $P(r)$ down to $O(c (h_{\max} - h_{\min}))$ elements that can be solved with a small space algorithm.

\input{Sections/fractional-proximity-figure}

\begin{table}[!b]
\centering
\small
\renewcommand{\arraystretch}{1.15}
\begin{tabularx}{\textwidth}{@{}Xll@{}}
\toprule
Algorithm & Time & Total work \\
\midrule
Teng~\cite{teng1987}
 & \(O(\log^2m)\) & \(O(m^6\log^2m)\) \\
Atallah et al.~\cite{atallah1989}, first bound
 & \(O(\log^2m)\) & \(O(m^2\log m)\) \\
Atallah et al.~\cite{atallah1989}, second bound
 & \(O(\log m)\) & \(O(m^3)\) \\
Larmore--Przytycka~\cite{larmore1995}
 & \(O(\sqrt m\log m)\) & \(O(m^{3/2}\log m)\) \\
Berman et al.~\cite{berman2007}
 & \(O(\log n)\) & \(O(m\log n)\) \\
\textcolor{black}{Milidi\'u et al.~\cite{milidiu1999}}
 & \textcolor{black}{\(O(\log n\log\log m)\)} & \textcolor{black}{\(O(m)\)} \\
\midrule
\textbf{Theorem~\ref{thm:pram-prefix-input}} \textcolor{black}{(for sorted prefix sums)}
 & {\color{black}\(O_\epsilon\bigl((1+\tfrac{\log n}{\log m})\log\log m\bigr)\)}
 & \(O_\epsilon(m^{1+\epsilon}\log n\log^3m)\) \\
\addlinespace
\textbf{Corollary~\ref{cor:pram-frequencies}}
 & {\color{black}\(O_\epsilon\bigl(\tfrac{\log m}{\log\log m}+(1+\tfrac{\log n}{\log m})\log\log m\bigr)\)}
 & \(O_\epsilon(m^{1+\epsilon}\log n\log^3m)\) \\
\bottomrule
\end{tabularx}
\end{table}

To construct all these residual instances in the \MPC model,
we sort the leaves once by density and use prefix sums and batched
searches to locate the cutoff for every target Kraft value \textcolor{black}{\(r2^{-h^*}\)}.  Standard \MPC primitives perform these operations in
constant rounds.
For each \(r\), we gather the information of the relevant leaves within its cutoff windows to a
single machine, which solves the corresponding residual problem in
linear time using Package-Merge~\cite{larmore1990package}. All values
of \(P(r)\) are processed in parallel. Once the values $P(0),\ldots,P(q)$ are computed, we can recover the frequencies $w_1, w_2, \ldots, w_q$ and subsequently report the output. 

\begin{lemma}[Restated informally]
\label{lem:iterative-state-reconstruction}
Iterative-State Reconstruction can be solved exactly in
constant \MPC rounds using
$
O\!\left((1+\log\RatioBound)
\RatioBound^{\log_{\GoldenRatio}2}\right)
$
local memory per machine and
$\softO(m\RatioBound^{\log_{\GoldenRatio}2})$ total memory and
computation.
\end{lemma}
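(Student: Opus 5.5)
The proof assembles the pieces described in this subsection into a constant-round procedure and then accounts for memory. It has three steps: reduce to a single height, compute the values $P(r)$ by fractional proximity, and charge the resources.

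\emph{Step 1: reduction to one height.} Let $c=2^{h_{\max}-h_{\min}}$. By Corollary~\ref{cor:bounded-ratio-depth-range}, all depths returned by Bounded-Ratio Huffman Coding lie in a window of width $O(\log_{\GoldenRatio}\RatioBound)$. The $\RatioBound^{\log_{\GoldenRatio}2}$ term in the bound suggests that the tighter estimate $c=O(\RatioBound^{\log_{\GoldenRatio}2})$ holds. This follows from the Fibonacci growth of subtree frequencies along a root-to-leaf path, applied to the spread of leaf depths.

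Next we locate the stopping height. Every internal node has frequency at least twice that of a child at least as heavy as its sibling. Hence the nodes whose children all have frequency at most $\tau$ form a down-closed family. Along the Huffman merge order they are exactly the merges performed before the stopping point. I would show, as the paper promises in Section~\ref{sec:iterative-state-formal}, that the target state is read off from the nodes at one or two consecutive depths $h^*$. Such a depth can be found by computing, for each candidate depth, the node count $q$ and the node frequencies; there are only $O(1+\log\RatioBound)$ candidate depths. Node counts per depth follow from prefix sums of the counts $|\{i:h_i\ge d\}|$ weighted by $2^{d-h_i}$ in $O(1)$ rounds. The cost of completed merges equals $\sum_i f_i h_i$ minus the analogous cost of the remaining tree. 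This expression is again a combination of prefix sums of the $w_r$.

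\emph{Step 2: computing $P(r)$.} For a fixed $h^*$, sort leaves by density $f_i/\kappa_i$ once. Compute prefix sums of $\kappa_i$ and $f_i$, and for all $r\le q$ simultaneously binary-search (batched, $O(1)$ rounds) for the fractional cutoff at Kraft value $r2^{-h^*}$. For each depth group, this cutoff defines a position inside the group, found by a per-group prefix count. By Lemma~\ref{lem:fractional-proximity}, some optimal integral solution agrees with the fractional one except within $2c$ leaves of each group's cutoff. Fix all leaves outside these windows. The residual instance then has $O(c(1+\log\RatioBound))$ leaves. Its target Kraft value is the leftover amount, which is a multiple of $2^{-h_{\max}}$ bounded by $O(c)$ units. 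Gathering this instance on one machine and running Package-Merge takes linear local time and memory $O((1+\log\RatioBound)\RatioBound^{\log_{\GoldenRatio}2})$. The $w_r$ then follow as differences $P(r)-P(r-1)$.

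\emph{Step 3: resources.} There are $q\le m$ values of $r$, each using $O((1+\log\RatioBound)\RatioBound^{\log_{\GoldenRatio}2})$ memory. This gives total memory $\softO(m\RatioBound^{\log_{\GoldenRatio}2})$, with sorting, prefix sums and routing costing $O(1)$ rounds at local memory $m^{\Omega(1)}$.

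The main obstacle is the proximity lemma, namely that integral and fractional cutoffs differ by at most $2c$ per group. I would prove it by exchange: if the integral cutoff exceeds the fractional one by more than $2c$ leaves in some group, the Kraft excess must be compensated in another group where the integral solution falls short. Swapping blocks of equal Kraft value, which is possible since all $\kappa_i$ are multiples of $2^{-h_{\max}}$ and at most $c$ units each, strictly reduces cost or preserves it. This follows from the density order. The argument is a pigeonhole on partial sums modulo $c$, taken over $2c$ consecutive leaves.
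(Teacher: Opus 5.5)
Your Steps 2 and 3 match the paper's argument. The paper also sorts once by density and locates each fractional cutoff in a batch. It keeps windows of $O(2^{\Delta})$ leaves per depth group, justified by Lemma~\ref{lem:fractional-proximity}, and runs Package-Merge on a residual instance of $O((1+\log\lambda)\lambda^{\log_\varphi 2})$ leaves, with at most $m$ targets.

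The gap is in Step 1, where the depth $h^*$ is chosen. Your claim that there are only $O(1+\log\lambda)$ candidate depths is false. The stopping level is the smallest depth containing a node of frequency at most $\tau$. It is never deeper than $h_{\min}$, because a leaf at depth $h_{\min}$ has frequency at most $f_m<\tau$. So depths greater than $h_{\min}$, which your node-count formula is designed for, are never needed.

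The stopping level can, however, be far from the leaves. If all $m=2^k$ frequencies are equal and $\tau>n_A$, the stopping state is the root at depth $0$, while $h_{\min}=k$. In the main algorithm $\tau=f_{j+1}$ can indeed be arbitrarily large. Thus $h^*$ ranges over $\Theta(\log m)$ values, and a search over $O(1+\log\lambda)$ depths near the leaves can miss it.

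Either of two repairs stays within the stated bounds:
\begin{itemize}
\item Evaluate every depth $d\in\{0,\ldots,h_{\min}\}$ in parallel, which costs at most $\sum_{d\le h_{\min}}2^d<2^{h_{\min}+1}\le 2m$ targets in total. Then take the smallest $d$ whose lightest node $P_d(1)$ is at most $\tau$.
\item Use the paper's averaging argument. It places $h^*$ in $\{\hat h-1,\hat h\}$ for $\hat h=\min\{h_{\min},\max\{0,\lceil\log_2(n_A/\tau)\rceil\}\}$. Only the $2^{\hat h}$ targets at depth $\hat h$ are then computed, and depth $\hat h-1$ comes from consecutive pair sums.
\end{itemize}

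You also defer the read-off from depth $h^*$. The paper's rule is as follows. Let $w_1\le\cdots\le w_q$ be the frequencies at depth $h^*$ and $s$ the number of $w_j\le\tau$. The state consists of the $\lfloor s/2\rfloor$ sums $w_{2t-1}+w_{2t}$ together with $w_{2\lfloor s/2\rfloor+1},\ldots,w_q$. Its cost is $\sum_i f_i(h_i-h^*)+\sum_{j\le2\lfloor s/2\rfloor}w_j$, which agrees with your cost expression.

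The correctness of this rule needs two facts that your down-closedness remark does not provide:
\begin{itemize}
\item In an optimal tree no node is heavier than a shallower node. Hence every child of a depth-$h^*$ node is at most $w_1\le\tau$, and all $q$ nodes have already been formed.
\item Siblings at depth $h^*$ are consecutive in sorted order in the Package-Merge tree of Lemma~\ref{lem:kraft-prefixes}.
\end{itemize}
Your sentence that a node is at least twice a child ``at least as heavy as its sibling'' is backwards. Down-closedness holds simply because descendants are lighter.

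Two smaller points:
\begin{itemize}
\item The residual target is not $O(c)$ units. Roughly $2c$ leaves of width up to $c$ are released below each cutoff, so it can reach order $c^2$ units. This is harmless, because Package-Merge runs in time linear in the number of retained leaves.
\item Your exchange sketch for the proximity lemma is imprecise as phrased, though you need not reprove it since you cite it. A pigeonhole modulo $c$ over removed leaves can return a block of up to $c^2$ units. Matching it can require up to $c^2/a$ leaves of width $a$ from the violating group. The paper avoids per-group exchanges by bounding $|R\mathbin\triangle S|\le2^{\Delta+1}-2$ globally through a minimal zero-sum sequence argument.
\end{itemize}
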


%% file: Sections/fractional-greedy-cutoffs-figure.tex
\begin{figure}[t]
  \centering
  \begin{tikzpicture}[
    x=0.6cm,
    y=0.8cm,
    full/.style={
      draw=black,
      fill=black!15,
      minimum height=7mm,
      inner sep=0pt,
      font=\scriptsize
    },
    boundary/.style={
      draw=black,
      fill=black!25,
      minimum height=7mm,
      inner sep=0pt,
      font=\scriptsize
    },
    outside/.style={
      draw=black,
      fill=black!8,
      minimum height=7mm,
      inner sep=0pt,
      font=\scriptsize
    },
    cut/.style={
      dashed,
      thick,
      draw=black
    }
  ]

    \foreach \x in {0,1,2,3,4}
      \node[full,anchor=south west,minimum width=0.6cm]
        at (\x,0) {$1$};
    \node[full,anchor=south west,minimum width=1.2cm]
      at (5,0) {$2$};
    \node[boundary,anchor=south west,minimum width=0.6cm]
      at (7,0) {$\frac14(5)$};
    \node[boundary,anchor=south west,minimum width=1.8cm]
      at (8,0) {$\frac34(5)$};
    \node[boundary,anchor=south west,minimum width=0.6cm]
      at (11,0) {$\frac12(3)$};
    \node[boundary,anchor=south west,minimum width=0.6cm]
      at (12,0) {$\frac12(3)$};
    \node[full,anchor=south west,minimum width=0.6cm]
      at (13,0) {$2$};
    \node[full,anchor=south west,minimum width=1.2cm]
      at (14,0) {$4$};

    \draw[decorate,decoration={brace,amplitude=3pt}]
      (0,0.95) -- (7,0.95)
      node[midway,above=4pt,font=\scriptsize]
      {$\Frequency{i}/\kappa_i=16$};
    \draw[decorate,decoration={brace,amplitude=3pt}]
      (7,0.95) -- (11,0.95)
      node[midway,above=4pt,font=\scriptsize] {$20$};
    \draw[decorate,decoration={brace,amplitude=3pt}]
      (11,0.95) -- (13,0.95)
      node[midway,above=4pt,font=\scriptsize] {$24$};
    \draw[decorate,decoration={brace,amplitude=3pt}]
      (13,0.95) -- (16,0.95)
      node[midway,above=4pt,font=\scriptsize] {$32$};

    \foreach \x/\lab in {4/{1/4},8/{1/2},12/{3/4},16/1} {
      \draw[cut] (\x,-0.25) -- (\x,0.75);
      \node[font=\scriptsize,anchor=north] at (\x,-0.3) {$\lab$};
    }

    \node[font=\scriptsize,anchor=east] at (-0.25,0.43)
      {frequency};
    \node[font=\scriptsize] at (8,-0.9)
      {rectangle width $=\kappa_i$};

    \node[full,minimum width=0.55cm] at (3,-1.55) {};
    \node[anchor=west,font=\scriptsize] at (3.5,-1.55)
      {integral};
    \node[boundary,minimum width=0.55cm] at (7.5,-1.55) {};
    \node[anchor=west,font=\scriptsize] at (8.0,-1.55)
      {fractional};
  \end{tikzpicture}
  \caption{Greedy fractional solutions for frequencies
    $(1,1,1,1,1,2,2,3,4,5)$. Items are ordered by density,
    and their widths are proportional to their Kraft values. \textcolor{black}{The cutoffs for target Kraft values
    $1/2$ and $3/4$} split the items of frequencies $5$ and $3$,
    respectively.}
  \label{fig:fractional-greedy-cutoffs}
\end{figure}

%% file: Sections/fractional-proximity-figure.tex
\begin{figure}[t]
	\centering
	\begin{tikzpicture}[
		x=0.82cm,
		y=0.82cm,
		leaf/.style={
			rounded corners=1.5pt,
			draw=black,
			minimum width=6.8mm,
			minimum height=4.5mm,
			inner sep=0pt,
			font=\scriptsize
		},
		fixedin/.style={
			leaf,
			draw=black,
			fill=black!15
		},
		window/.style={
			leaf,
			draw=black,
			fill=black!35
		},
		fixedout/.style={
			leaf,
			draw=black,
			fill=black!8
		},
		cut/.style={
			dashed,
			thick,
			draw=black
		}
		]

		\node[anchor=east,font=\small] at (-0.65,2.25)
			{$\LeafDepth{i}=h^*,\ \kappa_i=1/4$};
		\foreach \j/\v in {0/4,1/8,2/12}
			\node[fixedin] at (\j,2.25) {$\v$};
		\foreach \j/\v in {3/16,4/18,5/20,6/24,7/28,8/32}
			\node[window] at (\j,2.25) {$\v$};
		\foreach \j/\v in {9/36,10/40,11/44}
			\node[fixedout] at (\j,2.25) {$\v$};
		\draw[cut] (5.5,1.88) -- (5.5,2.62);

		\node[anchor=east,font=\small] at (-0.65,1.25)
			{$\LeafDepth{i}=h^*+1,\ \kappa_i=1/8$};
		\foreach \j/\v in {0/2,1/3,2/5,3/7}
			\node[fixedin] at (\j,1.25) {$\v$};
		\foreach \j/\v in {4/8,5/9,6/10,7/12,8/14,9/16}
			\node[window] at (\j,1.25) {$\v$};
		\foreach \j/\v in {10/18,11/20}
			\node[fixedout] at (\j,1.25) {$\v$};
		\draw[cut] (6.5,0.88) -- (6.5,1.62);

		\node[anchor=east,font=\small] at (-0.65,0.25)
			{$\LeafDepth{i}=h^*+2,\ \kappa_i=1/16$};
		\foreach \j/\v in {0/1,1/2}
			\node[fixedin] at (\j,0.25) {$\v$};
		\foreach \j/\v in {2/3,3/4,4/5,5/6,6/7,7/8}
			\node[window] at (\j,0.25) {$\v$};
		\foreach \j/\v in {8/9,9/10,10/11,11/12}
			\node[fixedout] at (\j,0.25) {$\v$};
		\draw[cut] (4.5,-0.12) -- (4.5,0.62);

		\draw[
			decorate,
			decoration={brace,amplitude=3pt},
			black!80!black
		] (2.58,2.67) -- (5.42,2.67)
			node[midway,above=4pt,font=\scriptsize] {$\leq 2c$};
		\draw[
			decorate,
			decoration={brace,amplitude=3pt},
			black!80!black
		] (5.58,2.67) -- (8.42,2.67)
			node[midway,above=4pt,font=\scriptsize] {$\leq 2c$};

		\node[fixedin] at (1.3,-0.75) {};
		\node[anchor=west,font=\scriptsize] at (1.75,-0.75)
			{fixed selected};
		\node[window] at (5.45,-0.75) {};
		\node[anchor=west,font=\scriptsize] at (5.9,-0.75)
			{retained};
		\node[fixedout] at (8.55,-0.75) {};
		\node[anchor=west,font=\scriptsize] at (9.0,-0.75)
			{fixed unselected};
	\end{tikzpicture}
	\caption{
		\textcolor{black}{Windows around the fractional density cutoff $\rho=88$.} Each box shows a frequency;
		lightly shaded choices are fixed, while darkly shaded frequencies are
		retained for the residual instance. \textcolor{black}{The braces indicate at most $2c$ retained leaves on each side of the cutoff within a depth group, where $c=2^{h_{\max}-h_{\min}}$.}
	}
	\label{fig:fractional-proximity-window}
\end{figure}

%% file: Sections/bounded-ratio-huffman.tex
\clearpage
\section{Bounded-Ratio Huffman Coding}
\label{sec:bounded-ratio-formal}

This section gives the formal algorithm for Bounded-Ratio Huffman
Coding.  The input is a sorted multiset
\(A=(f_1,\ldots,f_m)\), where \(m\geq2\) and
\(f_m/f_2\leq\lambda\).  The algorithm computes the optimum encoded
length and one optimal depth vector. 

We solve the bounded-ratio problem using a divide-and-conquer approach. We first show that the leaf depths in an optimal Huffman tree lie within a small range. This allows us to express the Kraft constraint as an integer budget. We then divide the sorted frequencies into consecutive blocks and construct a table for each block. Each table records the minimum cost of assigning depths to the frequencies in that block, for every possible Kraft budget. To solve the entire instance, we arrange these tables at the leaves of a balanced hierarchy and combine them from the bottom up. Each combination determines how to divide the available budget between two neighboring subproblems.

In this section, we establish the depth bound, explain how to construct the tables, and describe the hierarchy. Section~\ref{sec:layered-merging-formal} explains how to combine two tables efficiently.

Let \(\varphi=(1+\sqrt5)/2\), and define
\[
 \Delta_\lambda=\left\lceil\log_\varphi\lambda\right\rceil+1.
\]
The key observation is that node { frequencies} grow quickly as we move from a
leaf toward the root.  In fact, they grow at least as fast as the
Fibonacci sequence.  This growth lets us translate the ratio between two
frequencies into a bound on the difference between their depths.
Lemma~\ref{lem:pairwise-depth-bound} formalizes this connection.

\begin{lemma}
\label{lem:pairwise-depth-bound}
{ Let \((\LeafDepth{1},\ldots,\LeafDepth{m})\) be any optimal
depth assignment for the positive frequencies
\(\Frequency{1},\ldots,\Frequency{m}\). For any \(i,j\) with
\(\Frequency{i}\leq\Frequency{j}\),}
\[
 h_i-h_j\leq
 \left\lceil\log_\varphi(f_j/f_i)\right\rceil+1.
\]
\end{lemma}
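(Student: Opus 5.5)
We realize the given optimal depth vector by a binary tree and then argue along the path from leaf $i$ toward the root. Since $(\LeafDepth{1},\ldots,\LeafDepth{m})$ is optimal, the Kraft inequality in Equation~\eqref{eq:huffman-length-formulation} is tight. Otherwise some leaf of maximum depth could be moved up one level while keeping Kraft's sum at most $1$, which strictly lowers the cost because all frequencies are positive. So there is a full binary tree $\mathcal T$ whose leaf $i$ sits at depth $\LeafDepth{i}$ for every $i$. For a node $v$, let $w(v)$ be the total frequency of the leaves below $v$, so that the cost equals $\sum_i f_i h_i=\sum_{v\neq\text{root}} w(v)$.

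The main tool is an exchange principle, which I prove first: \emph{if $v$ is strictly deeper than $v'$ and neither is an ancestor of the other, then $w(v)\le w(v')$.} To see this, swap the subtrees rooted at $v$ and $v'$, and let $\delta>0$ be their depth difference. The cost changes by $-\delta\,(w(v)-w(v'))$. This change is strictly negative if $w(v)>w(v')$, which would contradict optimality of the depth vector, since the swap yields another valid depth vector.

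Now set $k=h_i-h_j$ and assume $k\ge2$; the cases $k\le1$ are immediate because the right-hand side is at least $1$. Let $a_t$ be the ancestor of leaf $i$ that is $t$ levels above it, and write $W_t=w(a_t)$, so $W_0=f_i$.
\begin{itemize}
\item For $t\ge1$, the sibling $s_t$ of $a_t$ lies at the same depth as $a_t$, and $a_{t-1}$ is strictly deeper than $s_t$ and not comparable to it in the tree. The exchange principle gives $w(s_t)\ge W_{t-1}$, and therefore $W_{t+1}=W_t+w(s_t)\ge W_t+W_{t-1}$.
\item At the bottom, $W_1=f_i+w(s_0)>f_i$ strictly, since all frequencies are positive.
\end{itemize}
By induction, with $F_1=F_2=1$ the Fibonacci numbers, we get $W_t>F_{t+1}f_i$ for all $t\ge1$; the recurrence keeps the strict inequality.

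Next compare with leaf $j$. The ancestor $a_{k-1}$ has depth $h_j+1$, so it is strictly deeper than leaf $j$. It is not comparable to leaf $j$, because a leaf has no descendants and $a_{k-1}$ is deeper. The exchange principle gives $f_j\ge W_{k-1}>F_k f_i$. Using $F_k\ge\GoldenRatio^{k-2}$, we obtain $\log_\GoldenRatio(f_j/f_i)>k-2$ strictly. Hence $\lceil\log_\GoldenRatio(f_j/f_i)\rceil\ge k-1$, which is exactly $h_i-h_j\le\lceil\log_\GoldenRatio(f_j/f_i)\rceil+1$.

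The delicate point is the additive constant. A non-strict Fibonacci bound only yields $+2$. The strictness coming from positivity of $f$ at $W_1$ must be carried through the recurrence, so that the final inequality is strict and the ceiling absorbs one unit. This matters in particular in the case $f_i=f_j$ with $k=2$, which is ruled out precisely because $W_1>f_i=f_j$.
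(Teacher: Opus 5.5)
Your proof is correct and follows essentially the same route as the paper: realize the optimal depths by a full binary tree, use the subtree exchange to get $W_{t+1}\ge W_t+W_{t-1}$ with the strict start $W_1>f_i$, and compare the ancestor at depth $h_j+1$ with leaf $j$ to obtain $F_k f_i<f_j$. The differences are cosmetic: you state the exchange principle once in general, and you derive fullness from Kraft tightness rather than from deleting unary internal nodes.
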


\begin{proof}
{
{\color{black} By the Kraft--McMillan theorem, there is a binary coding
tree in which leaf \(i\) has depth \(h_i\) for every \(i\).
Consider such a tree, and define the frequency of each node as the sum
of the frequencies of the leaves in its subtree. Since the depth
assignment is optimal, every internal node has two children; otherwise,
removing a node with only one child would decrease the cost.
Let \(d=h_i-h_j\). The claim is immediate when \(d\leq1\), so
assume \(d\geq2\).} We show that frequencies grow at least as fast as the
Fibonacci sequence along the path from leaf \(i\) toward the root.

Let \(w_0=f_i\), and let \(w_t\) be the frequency of its \(t\)-th
ancestor. For \(1\leq t<h_i\), denote the frequency of that ancestor's
sibling by \(w'_t\), as illustrated in
Figure~\ref{fig:optimal-depth-exchange}. The node with frequency \(w'_t\) is at depth \(h_i-t\), whereas the
node with frequency \(w_{t-1}\) is at depth \(h_i-t+1\).
Swapping these disjoint subtrees moves every leaf below \(w_{t-1}\)
up one depth and every leaf below \(w'_t\) down one depth.Therefore, the swap changes the total cost by $w'_t-w_{t-1}$. Since the original tree is optimal, this change cannot be negative, implying \(w'_t\geq w_{t-1}\), and hence
\[
 w_{t+1}=w_t+w'_t\geq w_t+w_{t-1}.
\]
Since \(w_0=f_i\)
and \(w_1>f_i\), the recurrence gives
\[
 w_t>\textcolor{black}{\FibonacciNumber{t+1}}f_i\qquad(1\leq t\leq h_i),
\]
where \(\textcolor{black}{\FibonacciNumber{t}}\) denotes the \textcolor{black}{\(t\)-th Fibonacci number}, with
\(\textcolor{black}{\FibonacciNumber{1}}=\textcolor{black}{\FibonacciNumber{2}}=1\).
A similar exchange argument gives \(w_{d-1}\leq f_j\). Hence
\[
 \textcolor{black}{\FibonacciNumber{d}}f_i<w_{d-1}\leq f_j.
\]
Using the standard Fibonacci bound
\(\textcolor{black}{\FibonacciNumber{d}}\geq\varphi^{d-2}\) for \(d\geq2\)\textcolor{black}{~\cite[Lemma~19.3]{cormen2009}}, we obtain
\(\varphi^{d-2}<f_j/f_i\). Since \(d\) is an integer,
\[
 d\leq\left\lceil\log_\varphi(f_j/f_i)\right\rceil+1.\qedhere
\]
}
\end{proof}
\input{Sections/optimal-depth-exchange-figure}

As a consequence of Lemma~\ref{lem:pairwise-depth-bound}, applying its
bound to \(f_2\) and \(f_m\) and using Kraft equality gives the depth
range stated in Corollary~\ref{cor:bounded-ratio-depth-range}.

\begin{corollary}
\label{cor:bounded-ratio-depth-range}
There is an optimal tree in which every leaf depth lies in
\([\ell,u]\), where
\[
 \ell=\textcolor{black}{\max\{1,\lceil\log_2m\rceil-\Delta_\lambda\}},
 \qquad
 u=\left\lfloor\log_2m\right\rfloor+\Delta_\lambda.
\]
\end{corollary}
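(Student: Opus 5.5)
Fix any optimal depth assignment $(h_1,\ldots,h_m)$. I will show that every depth already lies in $[\ell,u]$, except possibly $h_1$, which may need a tie-breaking adjustment. The tools are Lemma~\ref{lem:pairwise-depth-bound}, applied to pairs from $\{f_2,\ldots,f_m\}$, where all ratios are at most $\lambda$, together with Kraft equality $\sum_i 2^{-h_i}=1$. Kraft equality holds because in an optimal tree every internal node has two children, as in the proof of Lemma~\ref{lem:pairwise-depth-bound}.

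\textbf{Step 1: relative spread among $f_2,\ldots,f_m$.} For any $i,j\geq2$ with $f_i\leq f_j$, Lemma~\ref{lem:pairwise-depth-bound} gives $h_i-h_j\leq\lceil\log_\varphi(f_j/f_i)\rceil+1\leq\Delta_\lambda$, since $f_j/f_i\leq f_m/f_2\leq\lambda$. So the depths of leaves with larger frequency are not much smaller. To compare two equal frequencies in the other direction, I first normalize by an exchange: among equal frequencies, permute depths so that larger frequencies never get strictly larger depths. Swapping two leaves with $f_i\leq f_j$ and $h_i<h_j$ does not increase the cost, so this normalization preserves optimality. After it, $h_2\geq h_3\geq\cdots\geq h_m$, and by the lemma $h_2-h_m\leq\Delta_\lambda$. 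The same exchange also lets me assume $h_1\geq h_2$. Moreover, in an optimal full binary tree the deepest leaf has a sibling leaf at the same depth. Hence the two deepest leaves share the maximum depth, which gives $h_1=h_2=h_{\max}$ after sorting.

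\textbf{Step 2: pin the window using Kraft equality.} Set $H=h_{\max}=h_2$ and $L=h_{\min}=h_m$, so $H-L\leq\Delta_\lambda$ and all $m$ depths lie in $[L,H]$. From $1=\sum_i 2^{-h_i}\leq m2^{-L}$ we get $L\leq\log_2 m$, hence $L\leq\lfloor\log_2 m\rfloor$. From $1\geq m2^{-H}$ we get $H\geq\lceil\log_2 m\rceil$. Combining these with $H-L\leq\Delta_\lambda$:
\[
H\leq L+\Delta_\lambda\leq\lfloor\log_2 m\rfloor+\Delta_\lambda=u,
\qquad
L\geq H-\Delta_\lambda\geq\lceil\log_2 m\rceil-\Delta_\lambda .
\]
Since $m\geq2$ forces $L\geq1$, we also have $L\geq\ell$.

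\textbf{Main obstacle.} The delicate point is $f_1$, which is excluded from the ratio bound, so Lemma~\ref{lem:pairwise-depth-bound} cannot control $h_1-h_m$ directly. The step I rely on is the sibling argument: the smallest frequency can be placed as the sibling of $f_2$ at maximum depth. That is the standard first Huffman merge, and it yields $h_1=h_2$ in some optimal tree. I would make this rigorous with the exchange argument, carefully justifying that ties can be broken without increasing cost. I also need to check that the ceiling and floor directions match the stated $\ell$ and $u$ exactly; the computation above indicates they do.
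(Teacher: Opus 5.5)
Your proof is correct and follows essentially the same route as the paper, which uses the fact that the deepest leaves come in a sibling pair, so one of them has frequency at least $f_2$. The paper compares that leaf with a minimum-depth leaf via Lemma~\ref{lem:pairwise-depth-bound} to get $h_{\max}-h_{\min}\le\Delta_\lambda$, then combines this with the Kraft-equality bounds $h_{\min}\le\lfloor\log_2 m\rfloor$ and $h_{\max}\ge\lceil\log_2 m\rceil$, exactly as in your Step 2. Your tie-breaking normalization, which lets you compare $f_2$ with $f_m$ literally, is harmless. Your Step 1 observation that at least two leaves sit at the maximum depth already handles $f_1$, so the extra exchange argument you plan under ``Main obstacle'' is not needed.
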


\begin{proof}
{\color{black}
Let \(h_{\min}\) and \(h_{\max}\) be the minimum and maximum leaf
depths of an optimal tree. Kraft equality gives
\[
 h_{\min}\leq\lfloor\log_2m\rfloor,
 \qquad
 h_{\max}\geq\lceil\log_2m\rceil.
\]
Two sibling leaves occur at depth \(h_{\max}\), and at least one has
frequency at least \(f_2\). Comparing this leaf with a minimum-depth
leaf and applying Lemma~\ref{lem:pairwise-depth-bound} gives
\(h_{\max}-h_{\min}\leq\Delta_\lambda\). Therefore,
\[
 h_{\min}\geq h_{\max}-\Delta_\lambda
 \geq\lceil\log_2m\rceil-\Delta_\lambda,
\]
\[
 h_{\max}\leq h_{\min}+\Delta_\lambda
 \leq\lfloor\log_2m\rfloor+\Delta_\lambda=u.
\]
\textcolor{black}{Since leaf depths are at least one,} every leaf depth lies in \([\ell,u]\).
}
\end{proof}

{\color{black} Note that \(u-\ell\leq2\Delta_\lambda\).}
Corollary~\ref{cor:bounded-ratio-depth-range} allows us to restrict every
depth to \([\ell,u]\).  We next use this bound on leaf depths to express the Kraft constraint as an integer budget, which allows us to solve the problem using divide and conquer.  Recall from
Equation~\eqref{eq:huffman-length-formulation} that a feasible depth
assignment satisfies
\[
 \sum_i2^{-h_i}\leq1.
\]
Multiplying this inequality by \(2^u\) gives
\[
 \sum_i2^{u-h_i}\leq2^u.
\]
We can therefore interpret \(2^u\) as an integer Kraft budget, where assigning frequency \(f_i\) to depth \(h_i\) consumes \(2^{u-h_i}\) units. This interpretation suggests a divide-and-conquer approach: we divide the frequencies into smaller subproblems, compute the minimum cost for each possible budget, and then combine their solutions by distributing the available budget among them. To formalize this idea, we introduce the budget-constrained Huffman coding problem in Problem~\ref{prob:budget-constrained-huffman-formal}.

\begin{problem}[Budget-Constrained Huffman Coding]
\label{prob:budget-constrained-huffman-formal}
Given a submultiset \(A'\subseteq A\) and an integer budget
\(0\leq b\leq2^u\), we seek to assign a depth
\(\textcolor{black}{h_i\in\{\ell,\ell+1,\ldots,u\}}\) to every frequency \(f_i\in A'\) such
that the encoded length \(\sum_{f_i\in A'}f_ih_i\) is minimized and the
total budget contribution \(\sum_{f_i\in A'}2^{u-h_i}\) does not exceed
\(b\).  In other words, we want to compute
\begin{equation}
\label{eq:formal-budget-table}
 \mu(A',b)=
 \min\left\{
   \sum_{f_i\in A'}f_ih_i:
   \sum_{f_i\in A'}2^{u-h_i}\leq b,
   \ \textcolor{black}{h_i\in\{\ell,\ell+1,\ldots,u\}}
 \right\}.
\end{equation}
The output consists of this minimum value and a depth assignment that
attains it.  We set \(\mu(A',b)=+\infty\) when the constraints are
infeasible.
\end{problem}

The budget \(b\) in Equation~\eqref{eq:formal-budget-table} is an upper
bound on the total budget usage, so a feasible assignment may use fewer
than \(b\) units.  Corollary~\ref{cor:bounded-ratio-depth-range}
guarantees that restricting the depths to \([\ell,u]\) retains an
optimal solution to the original bounded-ratio instance.  Consequently,
\(\mu(A,2^u)\) equals its optimum encoded length.

Problem~\ref{prob:budget-constrained-huffman-formal} admits a natural
divide-and-conquer approach.  For disjoint multisets \(A_1,A_2\),
additivity of the cost and the budget contribution gives
\begin{equation}
\label{eq:formal-kraft-recurrence}
 \mu(A_1\uplus A_2,b)=
 \min_{b_1+b_2=b}
 \bigl\{\mu(A_1,b_1)+\mu(A_2,b_2)\bigr\}.
\end{equation}

Note that, since the budgets are upper bounds, any unused units can be assigned
to either part, and it is enough to consider divisions satisfying
\(b_1+b_2=b\).  Equation~\eqref{eq:formal-kraft-recurrence} therefore
gives a bottom-up construction.  We begin by partitioning the sorted
frequencies into consecutive blocks, as illustrated in
Figure~\ref{fig:representative-blocks}.
Set \(r_1=1\).  After choosing \(r_j\), if some later frequency is
larger than \(\varphi f_{r_j}\), let \(r_{j+1}\) be the smallest index
such that
$
 f_{r_{j+1}}>\varphi f_{r_j}.
$
Continue until no such index exists.  Let \(r_1,\ldots,r_q\) be the
resulting representative indices.  We append the end marker
\(r_{q+1}=m+1\) and define
\[
 B_j=(f_{r_j},f_{r_j+1},\ldots,f_{r_{j+1}-1}),
 \qquad j=1,\ldots,q.
\]
For every block except the last one, \(r_{j+1}\) is the first index
whose frequency exceeds \(\varphi f_{r_j}\).  Thus, every frequency
included in \(B_j\) is at most \(\varphi f_{r_j}\).  The same conclusion
holds for the last block because the construction stops only when no
later frequency exceeds this threshold.  Hence, the frequency ratio
within every block is at most \(\varphi\).

We next bound the number of blocks.  All representatives after \(r_1\)
have frequencies between \(f_2\) and \(f_m\), and consecutive
representatives differ by a factor greater than \(\varphi\).  Since
\(f_m/f_2\leq\lambda\), they define at most
\(1+\lceil\log_\varphi\lambda\rceil\) blocks.  If \(f_1\) is much
smaller than \(f_2\), it may form one additional initial block.
Therefore, the total number \(q\) of blocks satisfies
\[
 q\leq2+\left\lceil\log_\varphi\lambda\right\rceil.
\]

{\color{black}
To initialize the tables, we need to compute \(\mu(B_j,b)\) for each
block \(B_j\) and every relevant budget \(b\). The useful property of
these blocks is that their frequencies differ by a factor of at most
\(\varphi\). Using an exchange argument similar to the one in Lemma~\ref{lem:pairwise-depth-bound}, we can show that there exists an optimal assignment in which the depths of all frequencies in a block lie within three consecutive levels. This property allows us to consider only a small number of depth patterns when constructing the table for each block. We formalize this restricted
problem in Problem~\ref{prob:block-budget-table} and then describe how
to solve it.

\begin{problem}
\label{prob:block-budget-table}
Given a sorted block \(B=(w_1,\ldots,w_s)\) of \(A\) with
\(0<w_1\leq\cdots\leq w_s\leq\varphi w_1\), compute \(\mu(B,b)\)
for every \(b\in\{s,s+1,\ldots,\BlockBudgetMax\}\), where
\(\BlockBudgetMax=\min\{2^u,s2^{u-\ell}\}\).
\end{problem}

To solve Problem~\ref{prob:block-budget-table}, we first show that an
optimal assignment uses at most three consecutive depths. We then use
this property to evaluate a constant number of candidates for each
exact budget usage and compute the table by taking prefix minima.
Lemma~\ref{lem:constant-time-base-case} gives the resulting \MPC bounds.
In these bounds, logarithms are understood to be at least one.
}

\begin{lemma}
\label{lem:constant-time-base-case}
{\color{black}
For any \(\sigma\geq2\), Problem~\ref{prob:block-budget-table} can be
solved using \(O(\sigma)\) memory per machine in
\(O(\log_\sigma\BlockBudgetMax)\) \MPC rounds, with total memory
\[
 O\!\left(\BlockBudgetMax
       \log_\sigma\BlockBudgetMax\right).
\]
The total computation is
\[
 O\!\left(\BlockBudgetMax\left[
 \log_\sigma\BlockBudgetMax\log\BlockBudgetMax
 +\log u\right]\right).
\]

}
\end{lemma}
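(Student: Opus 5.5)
The plan has three stages: first prove a structural claim that limits the depth patterns, then reduce each table entry to a constant number of closed-form candidates, and finally implement the computation with standard $\sigma$-ary \MPC primitives.

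\textbf{Structural claim.} For every budget $b$, some optimal assignment for $\mu(B,b)$ has the following form.
\begin{itemize}
\item[(i)] Depths are nonincreasing in frequency, so heavier items are never deeper.
\item[(ii)] All depths lie in $\{h,h+1,h+2\}$ for some $h\in[\ell,u]$.
\end{itemize}
Part (i) is a standard swap. If $w_a\le w_b$ but $h_a<h_b$, exchanging the two depths keeps the budget usage unchanged and does not increase the cost.

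For part (ii), suppose two items $a,b$ satisfy $h_a\ge h_b+3$. Move $a$ up one level and $b$ down one level. The budget changes by
\[
2^{u-h_a}-2^{u-h_b-1}\le 2^{u-h_b-3}-2^{u-h_b-1}<0,
\]
so the new assignment is still feasible. The cost changes by $w_b-w_a\le(\varphi-1)w_a$, which can be positive, so this exchange alone does not finish the argument. I would therefore use a double move instead. Move $a$ up two levels and $b$ down one level. This frees $2^{u-h_b-1}$ units and consumes $3\cdot 2^{u-h_a}\le\tfrac38\cdot 2^{u-h_b}$ units, so the budget still decreases. The cost changes by $w_b-2w_a\le(\varphi-2)w_a<0$. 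The strict cost decrease contradicts optimality, provided the depths stay inside $[\ell,u]$. Boundary effects would be handled by tie-breaking and by the fact that $h_a-2\ge h_b+1\ge\ell$.

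\textbf{Candidate enumeration.} By parts (i) and (ii), an optimal assignment is determined by a top depth $h$ and two cut points $0\le p\le p'\le s$. The heaviest $s-p'$ items sit at depth $h$, the next $p'-p$ items at depth $h+1$, and the lightest $p$ items at depth $h+2$. Write $t_k$ for the number of items at depth $h+k$. The budget usage is $\sum_k t_k2^{u-h-k}$, and the cost follows in $O(1)$ time from the prefix sums $\PrefixSums{B}$.

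This space is still too large to enumerate, so I would parametrize by the exact usage $\beta$. For a fixed $h$ and a fixed exact usage $\beta$, the triple $(t_0,t_1,t_2)$ satisfies the linear equation above together with $t_0+t_1+t_2=s$. A second exchange step then shows that among optimal solutions one can take $\min(t_0,t_2)\le 1$. The argument is that one item at depth $h$ plus one at depth $h+2$ uses less budget than two items at depth $h+1$, and with weights within a factor $\varphi$ that pair can be rebalanced without increasing the cost. Under this normalization, $h$ and $\beta$ leave only $O(1)$ choices for the triple.

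The only depths $h$ that matter are those whose usage range contains $\beta$. Since $\beta\in[s2^{u-h-2},s2^{u-h}]$, there are $O(1)$ such values. Hence each exact usage $\beta$ has a constant number of candidates, each evaluated with $O(1)$ prefix-sum lookups. Finally, $\mu(B,b)$ is the prefix minimum of these exact-usage values over all $\beta\le b$.

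\textbf{\MPC implementation.}
\begin{itemize}
\item Computing $\PrefixSums{B}$ with fan-out $\sigma$ takes $O(\log_\sigma s)$ rounds.
\item Each of the $\BlockBudgetMax$ budgets generates its candidates. The candidates fetch the prefix-sum entries they need by sorting and routing, which costs $O(\log_\sigma\BlockBudgetMax)$ rounds.
\item A $\sigma$-ary prefix-minimum tree over the budgets takes another $O(\log_\sigma\BlockBudgetMax)$ rounds.
\end{itemize}
The total memory is $O(\BlockBudgetMax\log_\sigma\BlockBudgetMax)$, one copy of the data per tree level. The computation bound carries an extra $\log\BlockBudgetMax$ factor for local sorting and an additive $\log u$ term for the bit arithmetic on depths.

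\textbf{Main obstacle.} The hardest step is the three-level claim with the precise exchange bookkeeping, including the boundary cases where $h$ equals $\ell$ or $u$. The next hardest is the normalization $\min(t_0,t_2)\le1$, which is needed to keep the candidate set constant per budget. If that normalization fails, the fallback is to enumerate $t_2$ within a bounded window using convexity in $t_2$. That would cost an extra search factor, which would still fit inside the $\log_\sigma$ round bound.
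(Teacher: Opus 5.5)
Your architecture matches the paper's. You use the same double move (deeper item up two levels, shallower item down one) to get the three-consecutive-depth claim. You then tabulate exact usages $\beta$, keep $O(1)$ window depths $h$ per usage, price candidates with prefix sums, and finish with a prefix minimum and $\sigma$-ary sorting and scan primitives.

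The gap is in the step that makes the candidate set constant for a fixed pair $(h,\beta)$. Your normalization $\min(t_0,t_2)\le 1$ is true, but the exchange you give for it fails.
\begin{itemize}
\item One item at depth $h$ plus one at depth $h+2$ uses $4+1=5$ units of $2^{u-h-2}$. That is \emph{more} than the $2+2=4$ units of two items at depth $h+1$, not less. So the move changes the exact usage $\beta$ you had fixed.
\item With the heavier item $w_a$ at the shallower depth (your part (i)), moving the pair to depth $h+1$ changes the cost by $w_a-w_c\ge 0$. No ratio bound can make this negative, so the pair move never contradicts optimality.
\end{itemize}

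The exchange that works keeps the usage fixed and involves three items. Move one item $w_a$ from depth $h$ down to $h+1$, and two items $w_c,w_d$ from depth $h+2$ up to $h+1$.
\begin{itemize}
\item The usage is unchanged, since $4+1+1=2+2+2$.
\item The cost changes by $w_a-w_c-w_d\le w_s-2w_1<0$, because $w_s\le\varphi w_1<2w_1$.
\end{itemize}
Hence an optimal exact-usage assignment in the window has $t_0=0$ or $t_2\le 1$. In the one-parameter family $t_1=v-s-3t_0$, $t_2=2s-v+2t_0$ with $v=\beta/2^{u-h-2}$, this says the smallest feasible $t_0$ is optimal. This is exactly the paper's argument: increasing $x$ by one raises the cost by $w_{z+1}+w_{z+2}-w_{s-x}>0$. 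It leaves a single candidate per $(h,\beta)$.

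Your fallback would not rescue the stated bounds. A $\sigma$-ary search over $t_0$ can need $\Theta(\log_\sigma s)$ steps. Each step must fetch distributed prefix sums through $O(\log_\sigma\BlockBudgetMax)$ rounds of sorting-based routing. The result is $O(\log_\sigma s\cdot\log_\sigma\BlockBudgetMax)$ rounds, a multiplicative rather than additive cost. That exceeds the claimed $O(\log_\sigma\BlockBudgetMax)$ whenever the block is large.
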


\begin{proof}
	We first show that every optimal assignment uses at most three
	consecutive depths. Consider an optimal assignment with budget
	usage at most $b$. Suppose two frequencies $w_i,w_j$ are assigned
	to depths $a<c$, where $c\geq a+3$. By optimality, $w_i\geq w_j$,
	while the block's frequency ratio implies
	$w_i\leq\varphi w_j<2w_j$.
	Now move $w_i$ from depth $a$ to $a+1$, and $w_j$ from depth
	$c$ to $c-2$. Both new depths remain in $[\ell,u]$.
	This change decreases the cost by $2w_j-w_i>0$ and the
	Kraft sum by $2^{-a-1}-3\cdot 2^{-c}>0$.
	Hence, the new assignment has both a smaller cost and a smaller
	budget usage, contradicting optimality.
	
	Let $\ExactBlockCost(B,b')$ denote the minimum cost of an assignment
	that uses exactly $b'$ budget units and at most three consecutive
	depths in $[\ell,u]$. Set $\ExactBlockCost(B,b')=+\infty$ if no
	such assignment exists. Since every optimal assignment for budget
	$b$ uses at most three consecutive depths, we have
	\[
	\mu(B,b)=
	\min_{b'\in\{s,\ldots,b\}}\ExactBlockCost(B,b').
	\]
	Thus, we can compute all values of $\mu(B,b)$ by first evaluating
	$\ExactBlockCost(B,b')$ for every
	$b'\in\{s,\ldots,\BlockBudgetMax\}$ and then taking prefix minima.
	We next explain how to compute $\ExactBlockCost(B,b')$ for a fixed $b'$.
	
	If $\ell=u$, every frequency must be assigned to depth $\ell$.
	The total budget usage is $s$, so
	$\ExactBlockCost(B,b')=\ell\sum_{i=1}^s w_i$ when $b'=s$,
	and $+\infty$ otherwise.
	
	If $u=\ell+1$, each frequency must be assigned to either
	depth $\ell$ or depth $u$. Suppose $x$ frequencies are assigned
	to depth $\ell$ and the remaining $s-x$ to depth $u$.
	The total budget usage is then
	$2x+(s-x)=s+x$, which gives $x=b'-s$.
	If $0\leq x\leq s$, the minimum cost is obtained by assigning
	the $x$ largest frequencies to the smaller depth $\ell$.
	Otherwise, no assignment with budget usage $b'$ exists.
	
	It remains to consider $u\geq\ell+2$.
	By the argument above, we only need to consider assignments
	using three consecutive depths $\BaseWindowDepth,\BaseWindowDepth+1,\BaseWindowDepth+2$.
	For a fixed budget usage $b'$, we first identify the possible
	values of $\BaseWindowDepth$.
	
	Since each of the $s$ frequencies is assigned to one of these
	three depths, the total budget usage is at least
	$s2^{u-\BaseWindowDepth-2}$ and at most $s2^{u-\BaseWindowDepth}$.
	Therefore,
	\[
	s2^{u-\BaseWindowDepth-2}\leq b'\leq s2^{u-\BaseWindowDepth},
	\]
	which implies
	\[
	u-\log_2(b'/s)-2
	\leq \BaseWindowDepth\leq
	u-\log_2(b'/s).
	\]
	This interval contains at most three integers.
	We consider those satisfying $\ell\leq \BaseWindowDepth\leq u-2$.
	These candidates can be identified in $O(\log u)$
	local operations by binary search over powers of two.
	
	Fix one of the candidate values of $\BaseWindowDepth$.
	We now determine how many frequencies should be assigned to
	each of the three depths.
	Let $x$, $y$, and $z$ denote the numbers of frequencies assigned
	to depths $\BaseWindowDepth$, $\BaseWindowDepth+1$, and $\BaseWindowDepth+2$, respectively.
	Since there are $s$ frequencies, we must have $x+y+z=s$.
	Moreover, a frequency at these depths consumes
	$4$, $2$, or $1$ units of $2^{u-\BaseWindowDepth-2}$, respectively.
	Thus, the total budget usage must satisfy
	\[
	x+y+z=s,
	\qquad
	4x+2y+z=\frac{b'}{2^{u-\BaseWindowDepth-2}}.
	\]
	
	Let $v=b'/2^{u-\BaseWindowDepth-2}$.
	If $v$ is not an integer, no assignment is possible for
	this choice of $\BaseWindowDepth$.
	Otherwise, the two equations determine $y$ and $z$
	once $x$ is fixed:
	\[
	y=v-s-3x,
	\qquad
	z=2s-v+2x.
	\]
	Thus, instead of choosing three counts, we only need to choose $x$.
	For a valid assignment, all three counts must be nonnegative.
	Substituting the expressions for $y$ and $z$ gives
	\[
	\max\left\{
	0,\left\lceil\frac{v-2s}{2}\right\rceil
	\right\}
	\leq x\leq
	\left\lfloor\frac{v-s}{3}\right\rfloor.
	\]
	If this interval is empty, the chosen triple is infeasible.
	
	We next show that the smallest feasible value of $x$
	minimizes the cost.
	For fixed counts $x,y,z$, larger frequencies must receive
	smaller depths, as illustrated in
	Figure~\ref{fig:three-depth-base-case}.
	Thus, the largest $x$ frequencies receive depth $\BaseWindowDepth$,
	the next $y$ receive depth $\BaseWindowDepth+1$, and the smallest $z$
	receive depth $\BaseWindowDepth+2$.
	
	Recall that $y=v-s-3x$ and $z=2s-v+2x$.
	Increasing $x$ by one therefore changes the three counts
	by $+1,-3,+2$, respectively.
	In the optimal sorted assignment, this moves one frequency
	from depth $\BaseWindowDepth+1$ to depth $\BaseWindowDepth$, and the two \textcolor{black}{smallest} frequencies
	previously at depth $\BaseWindowDepth+1$ are reassigned to depth $\BaseWindowDepth+2$
	after the group boundaries are updated.
	More precisely, the frequency moving to depth $\BaseWindowDepth$
	is $w_{s-x}$, while the two frequencies moving to depth
	$\BaseWindowDepth+2$ are $w_{z+1}$ and $w_{z+2}$.
	The resulting change in cost is
	\[
	w_{z+1}+w_{z+2}-w_{s-x}
	\geq 2w_1-w_s>0,
	\]
	where the last inequality follows from the bounded frequency
	ratio of the block.
	Therefore, increasing $x$ always increases the cost,
	and the smallest feasible value of $x$ is optimal.
	
	\input{Sections/three-depth-base-case-figure}
	
	Once $x,y,z$ are fixed, we can compute the cost using
	prefix sums of the block frequencies.
	The largest $x$ frequencies have sum
	$\PrefixSum{B}{s}-\PrefixSum{B}{s-x}$,
	the next $y$ have sum
	$\PrefixSum{B}{s-x}-\PrefixSum{B}{z}$,
	and the smallest $z$ have sum $\PrefixSum{B}{z}$.
	Multiplying these sums by their assigned depths gives
	\[
	\begin{aligned}
		\BaseWindowDepth\bigl(\PrefixSum{B}{s}-\PrefixSum{B}{s-x}\bigr)
		 +(\BaseWindowDepth+1)\bigl(
		\PrefixSum{B}{s-x}-\PrefixSum{B}{z}
		\bigr)
		+(\BaseWindowDepth+2)\PrefixSum{B}{z}.
	\end{aligned}
	\]
	Thus, once the prefix sums are available, each candidate
	cost can be evaluated using $O(1)$ arithmetic operations.
	
	For each exact usage $b'$, we evaluate the feasible candidates
	and take their minimum to obtain $\ExactBlockCost(B,b')$.
	If none is feasible, we set this value to $+\infty$.
	We then compute prefix minima over these values.
	Since budget $b$ is an upper bound on the usage, its table entry
	is the minimum cost among all exact usages $b'\leq b$,
	which is precisely $\mu(B,b)$.
	For each entry, we also store the usage, depths, and counts
	attaining its cost, using $O(1)$ additional words.
	
	We finally analyze the MPC implementation.
	There are $s$ input frequencies and
	$\BlockBudgetMax-s+1$ table entries.
	Since each entry requires only a constant number of candidates
	and prefix-sum accesses, the total number of input,
	candidate, query, and output records is $O(\BlockBudgetMax)$.
	
We compute the prefix sums of the block frequencies once.
As each candidate requires only a constant number of prefix-sum
values, the total number of data accesses is
$O(\BlockBudgetMax)$. These accesses, including repeated
requests, as well as the final prefix minima, can be handled
using standard MPC sorting and scan primitives
~\cite[Section~3.1]{goodrich2011sorting}.

These primitives process $O(\BlockBudgetMax)$ records in
$O(\log_\sigma\BlockBudgetMax)$ \MPC rounds using $O(\sigma)$
memory per machine. Candidate-depth searches are local and add no
communication rounds. Thus the overall round bound is
$
 O(\log_\sigma\BlockBudgetMax).
$
Allowing $O(\BlockBudgetMax)$ records per round gives the total-memory bound
$
 O\!\left(\BlockBudgetMax
       \log_\sigma\BlockBudgetMax\right).
$
Sorting and scans use
$O(\BlockBudgetMax\log_\sigma\BlockBudgetMax\log\BlockBudgetMax)$
total computation, and identifying the candidate depths adds
$O(\BlockBudgetMax\log u)$.
The total computation is therefore
\[
 O\!\left(\BlockBudgetMax\left[
 \log_\sigma\BlockBudgetMax\log\BlockBudgetMax
 +\log u\right]\right),
\]
matching the bounds in the lemma.

\end{proof}

\textcolor{black}{Algorithm~\ref{alg:block-budget-table} summarizes the
computation in Lemma~\ref{lem:constant-time-base-case}. The loop over
\(\BaseWindowDepth\) considers at most three values, found by the power-of-two search
in the proof.}
\input{Sections/block-budget-algorithm}

For a submultiset \(I\), define its relevant budget range by
\[
 \mathcal B(I)=
 \bigl\{|I|,|I|+1,\ldots,
 \min\{2^u,|I|2^{u-\ell}\}\bigr\}.
\]
Lemma~\ref{lem:constant-time-base-case} computes
\(\mu(B_j,b)\) for every block \(B_j\) and every
\(b\in\mathcal B(B_j)\).  We then combine consecutive blocks in
stages.  The first stage considers
\(B_1\uplus B_2,B_3\uplus B_4,\ldots\).  For every such union \(I\)
and every \(b\in\mathcal B(I)\), we compute \(\mu(I,b)\) using
Equation~\eqref{eq:formal-kraft-recurrence}.  If the number of blocks is
odd, the last block proceeds unchanged to the next stage.  The second
stage considers
\(B_1\uplus\cdots\uplus B_4,B_5\uplus\cdots\uplus B_8,\ldots\), and
the same process continues until the union is \(A\).  The unions within
each stage are disjoint and are processed in parallel.
Algorithm~\ref{alg:bottom-up-budget-tables} formalizes this construction.

\begin{algorithm}[t]
\caption{Bounded-Ratio Huffman Coding}
\label{alg:bottom-up-budget-tables}
{\color{black}
\begin{algorithmic}[1]
\Require Blocks \(B_1,\ldots,B_q\) and depth bounds \(\ell,u\)
\Ensure \(\mu(A,2^u)\) and the stored choices needed to recover the depths
\ForAll{\(j\in\{1,\ldots,q\}\) \textbf{in parallel}}
  \State Initialize \(T_{B_j}\) and its depth assignments using Algorithm~\ref{alg:block-budget-table}
\EndFor
\State \(\mathcal I\gets(B_1,\ldots,B_q)\)
\While{\(|\mathcal I|>1\)}
  \State \(k\gets|\mathcal I|\); create the ordered list \(\mathcal J\) for the next stage
  \ForAll{consecutive pairs \((I_{2j-1},I_{2j})\) \textbf{in parallel}}
    \State \(I\gets I_{2j-1}\uplus I_{2j},\quad J_j\gets I\)
    \ForAll{\(b\in\mathcal B(I)\) \textbf{in parallel}}
      \State \((b_1,b_2)\gets
        \arg\min_{\substack{b_1+b_2=b\\
          b_1\in\mathcal B(I_{2j-1}),\ b_2\in\mathcal B(I_{2j})}}
        \{T_{I_{2j-1}}[b_1]+T_{I_{2j}}[b_2]\}\)
      \State \(T_I[b]\gets T_{I_{2j-1}}[b_1]+T_{I_{2j}}[b_2]\)
      \State Store the split \((b_1,b_2)\) with \(T_I[b]\)
    \EndFor
  \EndFor
  \If{\(k\) is odd}
    \State Carry the unpaired block into the last position of \(\mathcal J\)
  \EndIf
  \State \(\mathcal I\gets\mathcal J\)
\EndWhile
\State \Return \(T_A[2^u]\), the stored splits, and the block depth assignments
\end{algorithmic}
}
\end{algorithm}

The construction has \(\lceil\log_2q\rceil=
O(\log(2+\log\lambda))\) merging stages.  Section~\ref{sec:layered-merging-formal}
shows how to compute the values for every resulting union without
enumerating all pairs of budget divisions.  At the final stage,
\(\mu(A,2^u)\) is the optimum encoded length.

The upward construction stores, for every computed value \(\mu(I,b)\),
one budget split that attains the minimum in
Equation~\eqref{eq:formal-kraft-recurrence}.  To recover a depth vector,
we start from \((A,2^u)\) and follow these choices down the hierarchy.
If \(I=I_1\uplus I_2\) and the stored choice splits \(b\) into
\(b_1+b_2=b\), we assign budget \(b_1\) to \(I_1\) and budget \(b_2\)
to \(I_2\).  Repeating this step determines one budget for every leaf
block. \textcolor{black}{Once the budget of each block is determined,
its stored assignment gives the leaf depths. In the \MPC implementation,
all nodes at the same hierarchy level pass the stored budgets to their
children in parallel, using \(O(1)\) rounds per level. Thus this top-down
pass takes \(O(1+\log q)=O(\log(2+\log\lambda))\) rounds.}

%% file: Sections/optimal-depth-exchange-figure.tex
\begin{figure}[t]
\centering
{
\begin{tikzpicture}[x=1cm,y=1cm,
  freq/.style={draw,rounded corners=2pt,fill=white,inner sep=4pt},
  depth/.style={anchor=east,font=\small}]
  \node[depth] at (-3.5,0) {depth $h_i-t-1$};
  \node[depth] at (-3.5,-1.1) {depth $h_i-t$};
  \node[depth] at (-3.5,-2.2) {depth $h_i-t+1$};
  \draw[fill=black!5] (1.7,-1.1)--(0.8,-2.8)--(2.6,-2.8)--cycle;
  \draw[fill=black!5] (-2,-2.2)--(-2.9,-3.8)--(-1.1,-3.8)--cycle;
  \node[freq] (parent) at (0,0) {$w_{t+1}$};
  \node[freq] (ancestor) at (-1,-1.1) {$w_t$};
  \node[freq] (sibling) at (1.7,-1.1) {$w'_t$};
  \node[freq] (child) at (-2,-2.2) {$w_{t-1}$};
  \node at (0,-2.2) {$\cdots$};
  \node (leaf) at (-2,-3.5) {$w_0=f_i$};
  \draw (parent)--(ancestor) (parent)--(sibling)
        (ancestor)--(child) (ancestor)--(0,-2.1);
  \draw[dotted,thick] (child)--(leaf);
  \draw[<->,dashed,>=Stealth] (child.east) to[bend right=12]
      node[above,sloped,font=\small,fill=white,inner sep=1pt] {exchange} (sibling.south west);
  \node[align=left,font=\small,anchor=west] at (3,-1.8)
      {$w'_t\geq w_{t-1}$\\[4pt]$w_{t+1}\geq w_t+w_{t-1}$};
\end{tikzpicture}
\caption{The exchange used in Lemma~\ref{lem:pairwise-depth-bound}.
Labels denote node frequencies, and triangles denote disjoint subtrees.
Moving the subtree of frequency $w_{t-1}$ up one depth and the subtree
of frequency $w'_t$ down one depth changes the cost by $w'_t-w_{t-1}$,
which must be nonnegative. The dotted path may have length zero when
$t=1$.}
\label{fig:optimal-depth-exchange}
}
\end{figure}

%% file: Sections/three-depth-base-case-figure.tex
\begin{figure}[t]
  \centering
  \begin{tikzpicture}[
      x=1cm,
      y=0.82cm,
      band/.style={rounded corners=2pt,draw=black,
        minimum width=6.1cm,minimum height=0.62cm,
        font=\small,align=center},
      depthlabel/.style={font=\small,anchor=east},
      count/.style={font=\scriptsize,anchor=west},
      budget/.style={rounded corners=2pt,draw=black,fill=gray!5,
        font=\scriptsize,align=center,minimum width=1.45cm,
        minimum height=0.55cm}
    ]
    \node[depthlabel] at (0,2.0) {depth $\BaseWindowDepth$};
    \node[band,draw=black,fill=black!10] (shallow) at (3.25,2.0)
      {$w_{s-x+1},\ldots,w_s$};
    \node[count] at (6.48,2.0) {$x$ frequencies};
    \node[budget] at (9.25,2.0) {$4$ units\\each};

    \node[depthlabel] at (0,1.0) {depth $\BaseWindowDepth+1$};
    \node[band,draw=black,fill=black!13] (middle) at (3.25,1.0)
      {$w_{z+1},\ldots,w_{s-x}$};
    \node[count] at (6.48,1.0) {$y$ frequencies};
    \node[budget] at (9.25,1.0) {$2$ units\\each};

    \node[depthlabel] at (0,0) {depth $\BaseWindowDepth+2$};
    \node[band,draw=black,fill=black!8] (deep) at (3.25,0)
      {$w_1,\ldots,w_z$};
    \node[count] at (6.48,0) {$z$ frequencies};
    \node[budget] at (9.25,0) {$1$ unit\\each};

    \node[font=\scriptsize,text=black] at (3.25,-0.58)
      {$w_1\leq\cdots\leq w_s$};
    \node[font=\scriptsize,text=black,align=center] at (9.25,-0.68)
      {unit: $2^{u-\BaseWindowDepth-2}$};
  \end{tikzpicture}
  \caption{An assignment within a three-depth window.  The sorted block
  is divided into three consecutive frequency ranges with per-frequency
  budget contributions $4,2,1$, in units of $2^{u-\BaseWindowDepth-2}$.  The total
  frequency of each range is obtained from two prefix sums.}
  \label{fig:three-depth-base-case}
\end{figure}

%% file: Sections/block-budget-algorithm.tex
\begin{algorithm}[t]
\caption{Computing the table for one block}
\label{alg:block-budget-table}
{\color{black}
\begin{algorithmic}[1]
\Require Sorted \(B=(w_1,\ldots,w_s)\), \(w_s\leq\varphi w_1\), depth bounds \(\ell,u\)
\Ensure \(T_B[b]=\mu(B,b)\) and the corresponding depths and counts
\State \(\BlockBudgetMax\gets\min\{2^u,s2^{u-\ell}\}\)
\State Compute the prefix sums \(\PrefixSums{B}\) of the block frequencies
\ForAll{\(b'\in\{s,\ldots,\BlockBudgetMax\}\) \textbf{in parallel}}
  \State \(\ExactBlockCost(B,b')\gets+\infty\)
  \If{\(u-\ell<2\)}
    \State Determine the counts at the allowed depths from \(s\) and \(b'\)
    \State If feasible, set \(\ExactBlockCost(B,b')\) to their cost and store the counts
  \Else
    \For{each \(\BaseWindowDepth\in\{\ell,\ldots,u-2\}\) satisfying \(s2^{u-\BaseWindowDepth-2}\leq b'\leq s2^{u-\BaseWindowDepth}\)}
      \State Find the smallest feasible \(x\) and its counts \(y,z\), as in Lemma~\ref{lem:constant-time-base-case}
      \If{such counts exist}
        \State \(c\gets \BaseWindowDepth\PrefixSum{B}{s}+\PrefixSum{B}{s-x}+\PrefixSum{B}{z}\)
        \If{\(c<\ExactBlockCost(B,b')\)}
          \State \(\ExactBlockCost(B,b')\gets c\)
          \State Store depths \(\BaseWindowDepth,\BaseWindowDepth+1,\BaseWindowDepth+2\) and counts \(x,y,z\)
        \EndIf
      \EndIf
    \EndFor
  \EndIf
\EndFor
\State \(T_B[b]\gets\min_{s\leq b'\leq b}\ExactBlockCost(B,b')\) for all \(b\), using a prefix minimum
\State Keep the depths and counts of a minimizing entry with each finite \(T_B[b]\)
\State \Return \(T_B\) and the stored depths and counts
\end{algorithmic}
}
\end{algorithm}

%% file: Sections/fast-layered-merging.tex
\newpage
\section{Fast Merging by Layered Convexity}
\label{sec:layered-merging-formal}

Section~\ref{sec:bounded-ratio-formal} leaves one task unresolved.
Given the optimum values for two disjoint frequency multisets, we must
compute the corresponding values for their union, for every budget.
By Equation~\eqref{eq:formal-kraft-recurrence}, this requires choosing
how to divide each budget between the two multisets.  This is precisely
Problem~\ref{prob:kraft-budget-merging}, which we restate here.

\begin{mergingproblemrestatement}
Let \(A_1\) and \(A_2\) be two disjoint frequency multisets, and suppose
that the values \(\mu(A_1,\cdot)\) and \(\mu(A_2,\cdot)\) are known.
For \(A=A_1\uplus A_2\), compute \(\mu(A,b)\) for every
\(b\in\mathcal B(A)\), together with a budget division attaining each
minimum. Recall that
\[
 \mathcal B(A)=\{|A|,|A|+1,\ldots,
 \min\{2^u,|A|2^{u-\ell}\}\}.
\]
\end{mergingproblemrestatement}

Fix a budget \(b\) for \(A=A_1\uplus A_2\). If we allocate \(x\)
units to \(A_1\) and \(b-x\) units to \(A_2\), the minimum total cost is
\[
 G_b(x)=\mu(A_1,x)+\mu(A_2,b-x).
\]
Thus, by Equation~\eqref{eq:formal-kraft-recurrence}, computing
\(\mu(A,b)\) amounts to minimizing \(G_b(x)\) over the integers \(x\)
for which \(x\in\mathcal B(A_1)\) and \(b-x\in\mathcal B(A_2)\).
To find this minimum efficiently, we use a structural property of
\(G_b\) called \emph{layered convexity}, which is a weaker form of
convexity. We define it formally in Definition~\ref{def:layered-convexity}.

\begin{definition}
\label{def:layered-convexity}
{\itshape
Let \(g\) be a real-valued function on an integer interval.  We call
\(g\) \emph{convex} if its successive differences are nondecreasing;
that is, whenever \(x,x+1,x+2\) belong to its domain,
\[
 g(x+2)-g(x+1)\geq g(x+1)-g(x).
\]
For an integer \(d\geq1\), we call \(g\) \emph{layered convex modulo
\(d\)} if its restriction to every residue class modulo \(d\) is
convex.  Equivalently, whenever \(x,x+d,x+2d\) belong to its domain,
\[
 g(x+2d)-g(x+d)\geq g(x+d)-g(x).
\]
}
\end{definition}

A function can be convex on each residue class without
being convex on its entire domain. Figure~\ref{fig:nonconvex-merge-example}
gives \(G_b(2)=6\), \(G_b(3)=7\), and \(G_b(4)=6\), which violate
ordinary convexity. Figure~\ref{fig:residue-convex-profile} illustrates
the residue classes with modulus \(d=4\): the values in each of the
four classes form a convex sequence.

We prove that \(G_b\) is layered convex modulo \(d=2^{u-\ell-1}\)
by expressing the depth-assignment problem as an equivalent knapsack
problem. We show that the knapsack optimum is concave on each residue
class modulo \(d\), which in turn implies the layered convexity of
\(G_b\).

Our goal is to find an optimal depth assignment for \(A\) with budget
at most \(b\). \textcolor{black}{We begin with an assignment in which every frequency is placed at depth \(u\).} The cost is then
\(u\sum_{f_i\in A}f_i\), and the assignment uses \(|A|\) budget units.
We can reduce this cost by moving frequencies to smaller depths, spending
some of the remaining \(b-|A|\) budget units.

Consider moving \(f_i\) from depth \(u\) to depth \(u-1\). Its cost
decreases by \(f_i\), while its budget contribution increases from
\(1\) to \(2\), using one additional unit. Moving it one more depth
up saves another \(f_i\), but now the budget contribution increases
from \(2\) to \(4\), using two additional units. Continuing in this
way, each one-depth move, which we call an upgrade, saves \(f_i\),
while the additional budget required is successively
\(1,2,4,\ldots,2^{u-\ell-1}\).

We now need to choose upgrades that give the largest total saving
without using more than the remaining \(b-|A|\) budget units.
This suggests a knapsack formulation. Each upgrade becomes an item:
its size is the additional budget it uses, and its value is the
reduction in coding cost. The knapsack capacity is \(b-|A|\), so
maximizing the total item value corresponds to maximizing the saving
within the available budget.

We formalize this knapsack problem in
Problem~\ref{prob:equivalent-knapsack}. We then show that an optimal
selection of items can be converted into a valid depth assignment
with the same saving.

\begin{problem}
\label{prob:equivalent-knapsack}
Given a frequency multiset \(A\) and depth bounds \(\ell,u\),
create, for each \(f_i\in A\), one item of each size
\(1,2,4,\ldots,2^{u-\ell-1}\), all with value \(f_i\).
For a capacity \(c\geq0\), compute the maximum total value
of a subset of these items whose total size is at most \(c\).
Denote this optimum by \(\KnapsackValue{A}{c}\).
\end{problem}

Every depth assignment using at most \(b\) budget units gives a
solution to Problem~\ref{prob:equivalent-knapsack} with capacity
\(b-|A|\): assigning \(f_i\) depth \(u-s\) selects its \(s\) smallest
items. The total item value is exactly the saving in coding cost.

Conversely, a knapsack solution may select items for a frequency
without selecting its smallest items first. If it selects \(s\) items
for \(f_i\), replace them by the \(s\) smallest items for that frequency.
Their value remains \(sf_i\), and their total size cannot increase.
Doing this for every frequency gives a depth assignment with the same
saving and within the original budget. Therefore,
\begin{equation}
\label{eq:mu-knapsack-relation}
 \mu(A,b)=u\sum_{f_i\in A}f_i-\KnapsackValue{A}{b-|A|}.
\end{equation}

Figure~\ref{fig:upgrade-knapsack-interpretation} illustrates this
correspondence.

\input{Sections/upgrade-knapsack-figure}
\FloatBarrier

Lemma~\ref{lem:layered-knapsack-concavity} establishes
the concavity of \(\KnapsackValue{A}{\cdot}\) on each residue class modulo \(d\).

\begin{lemma}
\label{lem:layered-knapsack-concavity}
Let \(A\) be a frequency multiset, and let \(\KnapsackValue{A}{c}\) be the optimum
value of the equivalent knapsack problem with capacity \(c\).  Set
\(d=2^{u-\ell-1}\).  For every \(r\in\{0,\ldots,d-1\}\), the sequence
\[
 \KnapsackValue{A}{r},\KnapsackValue{A}{r+d},\KnapsackValue{A}{r+2d},\ldots
\]
is concave.
\end{lemma}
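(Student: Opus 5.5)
We will make rigorous the pairing argument sketched in Section~\ref{sec:kraft-budget-distribution}. The plan is to prove a slightly more general statement by induction on the number of distinct item sizes. Consider any knapsack instance in which every item size is a power of two between \(1\) and \(d=2^k\), with nonnegative values, and with an unlimited supply of zero-value items of size \(1\). The claim is that \(c\mapsto\KnapsackValue{\cdot}{c}\) is concave on every residue class modulo \(d\). The zero-value padding does not change \(\KnapsackValue{A}{c}\), since unused capacity is allowed. With the padding, we may assume every optimal solution uses exactly capacity \(c\).

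\textbf{Base case (\(k=0\)).} All items have size \(1\). Then \(\KnapsackValue{A}{c}\) is the sum of the \(c\) largest values, padded with zeros. Its increments are the values in nonincreasing order, so it is concave.

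\textbf{Inductive step.} Fix a residue \(r\) modulo \(d\) and write \(r\equiv r_0 \pmod 2\).

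\emph{Step 1: reduce the size-\(1\) items.} For capacities \(c\equiv r \pmod d\), every solution of exact total size \(c\) uses a number of size-\(1\) items with parity \(r_0\), because all other sizes are even. Sort the size-\(1\) items by value \(v_1\geq v_2\geq\cdots\), padding with zeros. By an exchange argument, some optimal solution uses a prefix of this sorted list, that is, \(v_1,\ldots,v_t\) with \(t\equiv r_0\).
\begin{itemize}
\item If \(r_0=1\), commit to \(v_1\) and pair the rest as \((v_2,v_3),(v_4,v_5),\ldots\). Each pair becomes a new item of size \(2\) whose value is the pair sum.
\item If \(r_0=0\), pair the list as \((v_1,v_2),(v_3,v_4),\ldots\) in the same way.
\end{itemize}
The pair values are nonincreasing. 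This is important: taking a prefix of the original list of the right parity is exactly taking a prefix of the pairs.

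\emph{Step 2: rescale.} After Step 1 we have \(\KnapsackValue{A}{c}=r_0v_1+\Psi'\bigl((c-r_0)/2\bigr)\). Here \(\Psi'\) is the optimum of a new instance obtained by merging the original size-\(\geq2\) items with the pair items and halving all sizes. Its sizes are \(1,\ldots,2^{k-1}\), and it still has an unlimited zero-value size-\(1\) supply, coming from pairs of zero-value items. We must check that \(\Psi'\) is the full optimum of this new instance, not merely an upper or lower bound. In the \(\geq\) direction, any solution of the new instance lifts to a solution of the old one with parity \(r_0\). In the \(\leq\) direction, we use the prefix normal form from Step 1.

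\emph{Step 3: apply induction.} As \(c\) runs over \(r,r+d,r+2d,\ldots\), the argument \((c-r_0)/2\) runs over a single residue class modulo \(d/2=2^{k-1}\). By the inductive hypothesis, \(\Psi'\) is concave along that class. Adding the constant \(r_0v_1\) preserves concavity.

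The statement of Lemma~\ref{lem:layered-knapsack-concavity} is the case where items have sizes \(1,\ldots,2^{u-\ell-1}\) and the base case is \(k=0\). If \(u=\ell\) there are no items, so \(\KnapsackValue{A}{\cdot}\equiv 0\) and the claim is trivial.

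\textbf{Main obstacle.} The delicate point is the exchange argument in Step 1. We must show that, restricted to parity \(r_0\), an optimal solution can be taken to use the top-\(t\) size-\(1\) items. We must also show that the resulting "prefix of pairs" structure is compatible with arbitrary choices among the larger items, so that the reduction is an exact equality rather than an inequality. I would argue this directly: fix the multiset of larger items chosen. The remaining capacity then determines \(t\) exactly, and the best choice is the top \(t\) values. That reduces this step to a routine check. A second, minor point is that capacity \(c<r_0\) never occurs, since \(c\geq r\geq r_0\).
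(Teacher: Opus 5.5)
Your proposal is correct and follows essentially the paper's argument. At each size you commit to the most valuable item when the corresponding binary digit of $r$ forces an odd count, pair the remaining items into packages, and finish by greedily taking the largest-value size-$d$ items. Your induction with halving of sizes is a recursive restatement of the paper's small-to-large iteration, with a more explicit two-sided check that the packaging preserves the optimum exactly.
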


\begin{proof}
Consider the upgrade items for \(A\), whose sizes are powers of two
up to \(d\).  Fix a finite prefix of the capacities in the sequence and
add enough zero-value items of size one to fill every capacity exactly.
These dummy items do not change any optimum, so it is enough to consider
exact capacities.

If \(d=1\), then \(u=\ell+1\), so each frequency has only one
possible upgrade, from depth \(u\) to depth \(\ell\). This upgrade
uses one additional budget unit, so all knapsack items have size one.
For each capacity, we select the most valuable items, so successive increases in
the optimum are nonincreasing. This proves the claim in this case.
Henceforth assume \(d\geq2\).

Fix a residue \(r\) modulo \(d\), and process the item sizes from small
to large.  Start with the size-one items.  Every larger item has even
size, so the parity of \(r\) determines the parity of the number of
selected size-one items.  If \(r\) is odd, an optimal solution must select
an odd number of size-one items.  The most valuable size-one item is
therefore included in an optimal solution: replacing any selected
size-one item by it preserves the size and does not decrease the value.
We fix this item as selected.  If \(r\) is even, no size-one item is
forced.

After removing the fixed item, if any, the number of selected
size-one items must be even. We sort the remaining items by
nonincreasing value. For any fixed number of selected items,
the maximum value is obtained by choosing the most valuable ones.
We therefore pair consecutive items in this order and replace
each pair by a size-two package whose value is the sum of the
two item values. Selecting the first $t$ packages is equivalent
to selecting the first $2t$ items, which preserves the optimum.
Any unpaired last item can be discarded, since only an even
number of these items can be selected.

If \(2<d\), we next process size two. Combine the original size-two
items with the packages just created and sort them together by
nonincreasing value. The remaining capacity modulo four is determined
by \(r\) and is either zero or two. In the latter case, we fix the most
valuable size-two item as selected, by the same exchange argument; in
the former case, no item is fixed. Pair the remaining items in sorted
order, discard any unpaired last item, and pass the resulting packages
to size four. Repeat this procedure at each size smaller than \(d\).

The items fixed as selected depend only on \(r\), and their total size
is exactly \(r\): at each size, the procedure accounts for the
corresponding binary digit of \(r\). The other smaller items are
represented by packages; their selection is not yet fixed and may
vary with the capacity.

After these steps, all remaining items have size \(d\).  Let their values
in nonincreasing order be \(p_1\geq p_2\geq\cdots\).  For capacity
\(r+td\), the fixed items use \(r\) units,
leaving \(td\) units for these size-\(d\) items. The optimal solution
therefore adds the first \(t\) of them.  Thus,
increasing the capacity from \(r+td\) to \(r+(t+1)d\) increases the
optimum by \(p_{t+1}\).  These increments are nonincreasing, so
\(\KnapsackValue{A}{r},\KnapsackValue{A}{r+d},\KnapsackValue{A}{r+2d},\ldots\) is concave.
\end{proof}

Fix a frequency multiset \(A\) and a residue \(r\) modulo \(d\), and
write \(x_t=r+td\).  Lemma~\ref{lem:layered-knapsack-concavity} gives
\[
 \KnapsackValue{A}{x_{t+2}}-\KnapsackValue{A}{x_{t+1}}
 \;\leq\;
 \KnapsackValue{A}{x_{t+1}}-\KnapsackValue{A}{x_t}.
\]
By Equation~\eqref{eq:mu-knapsack-relation}, each value of
$\mu$ is equal to the same baseline cost minus the corresponding
knapsack optimum. When taking differences, the baseline costs
cancel, leaving
\begin{align*}
 \mu(A,x_{t+2}+|A|)-\mu(A,x_{t+1}+|A|)
 &\quad=-\bigl(\KnapsackValue{A}{x_{t+2}}-\KnapsackValue{A}{x_{t+1}}\bigr)\\
 &\quad\geq-\bigl(\KnapsackValue{A}{x_{t+1}}-\KnapsackValue{A}{x_t}\bigr)\\
 &\quad=\mu(A,x_{t+1}+|A|)-\mu(A,x_t+|A|).
\end{align*}
Thus, \(\mu(A,\cdot)\) is convex on every residue class modulo \(d\).
Now fix a budget \(b\) for \(A_1\uplus A_2\) and a residue \(r\) of feasible divisions.
The first term of \(G_b(x_t)\) is convex in \(t\),
and the second term is convex in reverse order because its argument
decreases by \(d\).  Combining these two convexity inequalities gives
\begin{align*}
 G_b(x_{t+2})-G_b(x_{t+1})
 &=\bigl[\mu(A_1,x_{t+2})-\mu(A_1,x_{t+1})\bigr]
   +\bigl[\mu(A_2,b-x_{t+2})-\mu(A_2,b-x_{t+1})\bigr]\\
 &\geq\bigl[\mu(A_1,x_{t+1})-\mu(A_1,x_t)\bigr]
   +\bigl[\mu(A_2,b-x_{t+1})-\mu(A_2,b-x_t)\bigr]\\
 &=G_b(x_{t+1})-G_b(x_t).
\end{align*}
Therefore, \(G_b\) is convex on every residue class modulo \(d\), which
is precisely layered convexity.

\begin{lemma}
\label{lem:layered-convexity}
For every merge \(A=A_1\uplus A_2\) and every budget \(b\) for \(A\), the
function \(G_b\) is layered convex modulo
\(d=2^{u-\ell-1}\).
\end{lemma}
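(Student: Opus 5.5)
The plan is to derive the statement from Lemma~\ref{lem:layered-knapsack-concavity} together with the knapsack identity in Equation~\eqref{eq:mu-knapsack-relation}. There are two transfers: first from the knapsack optimum \(\KnapsackValue{A_i}{\cdot}\) to the table \(\mu(A_i,\cdot)\), and then from the two tables to the sum \(G_b\). The argument has three steps.

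\textbf{Step 1: each table is layered convex.} Fix \(i\in\{1,2\}\). For every budget \(y\) in the domain \(\mathcal B(A_i)\), Equation~\eqref{eq:mu-knapsack-relation} gives
\[
 \mu(A_i,y)=u\sum_{f_j\in A_i}f_j-\KnapsackValue{A_i}{y-|A_i|}.
\]
Shifting the argument by the constant \(|A_i|\) only relabels the residue classes modulo \(d\). So for any \(y\) with \(y,y+d,y+2d\in\mathcal B(A_i)\), concavity of \(\KnapsackValue{A_i}{\cdot}\) on the residue class of \(y-|A_i|\) gives, after negating and adding the baseline,
\[
 \mu(A_i,y+2d)-\mu(A_i,y+d)\geq\mu(A_i,y+d)-\mu(A_i,y).
\]
One point needs care. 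Lemma~\ref{lem:layered-knapsack-concavity} is stated for the full sequence starting at residue \(r\), while here we need it on a window of three consecutive terms of that sequence. Any three consecutive terms of a concave sequence satisfy the concavity inequality, so this is immediate. The same reasoning covers the upper end of \(\mathcal B(A_i)\), where the cap \(\min\{2^u,|A_i|2^{u-\ell}\}\) only truncates the domain.

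\textbf{Step 2: reversal preserves convexity.} Fix \(b\) and a residue class of \(x\) modulo \(d\). Put \(x_t=x_0+td\), and restrict to those \(t\) for which \(x_t\in\mathcal B(A_1)\) and \(b-x_t\in\mathcal B(A_2)\). This set is an interval of consecutive \(t\), because it is the intersection of two integer intervals. Along it, the first term \(\mu(A_1,x_t)\) is convex in \(t\) by Step 1. The second term is \(\mu(A_2,b-x_t)\), where \(b-x_t\) runs through a single residue class modulo \(d\) in decreasing order. Convexity of a sequence is invariant under reversing its index, since the second difference is symmetric. Hence \(t\mapsto\mu(A_2,b-x_t)\) is also convex.

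\textbf{Step 3: sum.} Adding the two second-difference inequalities gives
\[
 G_b(x_{t+2})-G_b(x_{t+1})\geq G_b(x_{t+1})-G_b(x_t),
\]
which is exactly Definition~\ref{def:layered-convexity} with modulus \(d=2^{u-\ell-1}\).

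\textbf{Main obstacle: infinite values.} The only delicate point is feasibility. Inside \(\mathcal B(A_i)\) every table value is finite: any budget between \(|A_i|\) and \(|A_i|2^{u-\ell}\) can be realized with depths in \([\ell,u]\), since the sizes are powers of two and the deficit can be spent greedily. So restricting the domain of \(G_b\) to the feasible divisions avoids any \(\infty-\infty\) issue. The second thing to check is that the feasible \(t\) form a contiguous range in each residue class, which was argued in Step 2. Together these make the "whenever \(x,x+d,x+2d\) belong to the domain" clause of Definition~\ref{def:layered-convexity} coincide with consecutive indices \(t,t+1,t+2\) in the argument above.
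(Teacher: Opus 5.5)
Your proposal is correct and follows the paper's own argument. You transfer concavity of $\psi(A_i,\cdot)$ on residue classes to convexity of $\mu(A_i,\cdot)$ via Equation~\eqref{eq:mu-knapsack-relation}, observe that the second term of $G_b$ is convex in reverse order, and add the two inequalities. One small correction: finiteness on $\mathcal B(A_i)$ holds simply because the budget is an upper bound (putting everything at depth $u$ uses only $|A_i|\le b$ units); your claim that every budget in that range is realized exactly is false, e.g.\ $|A_i|=2$, $u-\ell=2$, budget $7$.
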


\textcolor{black}{Layered convexity lets us minimize \(G_b\) by searching
each residue class separately and taking the smallest of their minima.
Lemma~\ref{lem:fast-table-merge} describes the search and bounds its
\MPC resources, including access to the distributed child tables.}

\begin{lemma}
\label{lem:fast-table-merge}

For every \(\eta\geq2\), the table for \(A=A_1\uplus A_2\) can be
computed in
\[
 O\!\left((1+\log_\eta|A|)
          (1+\log_\eta\min\{|A_1|,|A_2|\})\right)
\]
\MPC rounds using \(O(\eta\lambda^{2\log_\varphi2})\) local memory.
Total memory and computation are each bounded by
\[
 O\!\left(|A|\eta\lambda^{4\log_\varphi2}
 (1+\log_\eta|A|)^2\log(2|A|\eta\lambda)\right).
\]

\end{lemma}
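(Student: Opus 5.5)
We compute, for every parent budget \(b\in\mathcal B(A)\) and every residue \(r\in\{0,\ldots,d-1\}\) with \(d=2^{u-\ell-1}\), the minimum of \(G_b\) over the feasible divisions \(x\equiv r\pmod d\). We then take the smallest of these \(d\) minima and keep the division that attains it. This is correct by Lemma~\ref{lem:layered-convexity}: \(G_b\) is convex on each residue class, so the minimum over the whole domain is the minimum of the layer minima.

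The feasible divisions for \((b,r)\) form an arithmetic progression \(x_0,x_0+d,\ldots\). They are the \(x\) with \(x\in\mathcal B(A_1)\) and \(b-x\in\mathcal B(A_2)\), and their endpoints are computable in \(O(1)\) arithmetic. The number of terms is at most \(\min\{|\mathcal B(A_1)|,|\mathcal B(A_2)|\}/d+1\). Since \(|\mathcal B(A_i)|\leq |A_i|2^{u-\ell}=2d|A_i|\), this is \(O(\min\{|A_1|,|A_2|\})\).

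On each layer we run an \(\eta\)-ary search for the minimum of a convex sequence. In one step we probe \(\eta+1\) evenly spaced indices of the current interval and evaluate \(G_b\) at each probe and at its successor in the layer. The first probe whose forward difference is nonnegative then brackets the minimizer within \(O(1)\) adjacent subintervals. Ties in flat regions are harmless: any point with nonnegative forward difference and nonpositive backward difference is a global minimum of a convex sequence. The interval shrinks by a factor \(\Theta(\eta)\) per step, so \(O(1+\log_\eta\min\{|A_1|,|A_2|\})\) steps suffice. Once the interval has \(O(\eta)\) elements, we evaluate it exhaustively.

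Each search step needs child table entries \(\mu(A_1,x)\) and \(\mu(A_2,b-x)\) that lie on other machines. Per step we issue \(O(\eta)\) requests for each of the \(|\mathcal B(A)|\cdot d\) pairs \((b,r)\). Since \(|\mathcal B(A)|\le 2d|A|\), this is \(O(|A|\eta d^2)=O(|A|\eta\lambda^{4\log_\varphi 2})\) requests. Here we use \(d\le 2^{2\Delta_\lambda}=O(\lambda^{2\log_\varphi2})\), which follows from \(u-\ell\le 2\Delta_\lambda\).

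We answer each batch of requests by the standard MPC sort-and-scan join of \cite{goodrich2011sorting}. We sort the requests together with the table records by key, propagate each table value to the requests following it, and sort back. With local memory \(\Theta(\eta\lambda^{2\log_\varphi2})\), this takes \(O(1+\log_\eta|A|)\) rounds, because the number of records is polynomial in \(|A|\eta\lambda\) and \(\lambda\) is absorbed by the larger memory. The local memory is also chosen large enough to hold the \(O(\eta)\) probes of one \((b,r)\) pair, or all \(d\) layer minima of one budget \(b\) for the final reduction. That reduction is a segmented minimum, which also needs \(O(1+\log_\eta|A|)\) rounds.

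Multiplying the per-step cost by the number of steps gives the round bound \(O((1+\log_\eta|A|)(1+\log_\eta\min\{|A_1|,|A_2|\}))\). The total memory and computation are the request volume \(O(|A|\eta\lambda^{4\log_\varphi2})\), multiplied by the \((1+\log_\eta|A|)^2\) rounds and by the \(\log(2|A|\eta\lambda)\) factor for local sorting and comparisons. We do not bother sharpening this product.

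I expect the main obstacle to be the routing step, which is where the exact \(\lambda\)-exponents come from. The local memory must be large enough to keep the sorting network depth \(O(\log_\eta|A|)\) and independent of \(\lambda\). At the same time, no single machine may receive too many requests for one popular table entry, so duplicated requests must be handled by the scan rather than by fan-in to a single machine. The other point needing care is the search over a plateau or a boundary-truncated layer, which is covered by the ternary-style bracketing argument above.
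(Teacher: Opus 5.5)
Your proposal is correct and takes essentially the same route as the paper. Both run a per-residue-class \(\eta\)-ary search justified by Lemma~\ref{lem:layered-convexity}, issue \(O(|A|\eta\lambda^{4\log_\varphi 2})\) child-table requests per search step, answer them by a sort-and-scan join with local memory \(\Theta(\eta\lambda^{2\log_\varphi 2})\) in \(O(1+\log_\eta|A|)\) rounds, and multiply by the number of search steps. Your differences are cosmetic: bracketing by the first nonnegative forward difference instead of the first sampled minimum, and a segmented minimum over layers instead of one machine per budget.
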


\begin{proof}

Write \(\alpha=\log_\varphi2\). We first count
how many search steps are needed, and then explain how to carry out one
step in \MPC. 
Fix a budget $b$. \textcolor{black}{We seek a feasible budget division between}
$A_1$ and $A_2$ that minimizes
\[
G_b(x)=\mu(A_1,x)+\mu(A_2,b-x).
\]
By Lemma~\ref{lem:layered-convexity}, $G_b$ is convex within
each residue class. We can therefore find the minimum in each
class separately and return the smallest value among them.

We first bound the number of feasible budget divisions in
each residue class. The feasible values of $x$ span an interval
of length $O(\min\{|A_1|,|A_2|\}2^{u-\ell})$.
Within a residue class, consecutive values of $x$ differ by
$d=2^{u-\ell-1}$. Thus, each residue class contains
$O(\min\{|A_1|,|A_2|\})$ feasible divisions.
At each step, we select $O(\eta)$ evenly spaced positions
in the current search interval, including its endpoints,
and evaluate $G_b$ at these positions.
We then find the sampled position with the smallest value.
By convexity, an optimum must lie between the two samples
immediately before and after this position.
If the minimum occurs at an endpoint, we keep the interval
between that endpoint and its only neighboring sample.
In case of ties, we choose the first sampled minimum.

Since the samples are evenly spaced, the remaining interval
contains only an $O(1/\eta)$ fraction of the previous
positions. \textcolor{black}{We repeat this process until at most $O(\eta)$
positions remain, at which point we evaluate all of them
and return the minimum.}
Since each step reduces the search interval by a factor
of order $\eta$, and each residue class initially contains
$O(\min\{|A_1|,|A_2|\})$ feasible divisions,
the search takes
$O(1+\log_\eta\min\{|A_1|,|A_2|\})$ steps.
We perform these searches in parallel for all budgets
and residue classes.

\textcolor{black}{To evaluate $G_b$ at a position $x$, we need the two table}
entries $\mu(A_1,x)$ and $\mu(A_2,b-x)$.
For each budget $b$, a search step examines $O(\eta)$
\textcolor{black}{positions in each of the $d$ residue classes.}
\textcolor{black}{The bound \(u-\ell\leq2\Delta_\lambda\) following
Corollary~\ref{cor:bounded-ratio-depth-range} gives
\(d=O(\lambda^{2\alpha})\).}
Thus, each search step examines
\textcolor{black}{$O(\eta\lambda^{2\alpha})$ positions per budget.}
We assign one machine to each budget $b$.
This machine maintains the searches across all residue
classes using $O(\eta\lambda^{2\alpha})$ local memory.
The required child-table entries are retrieved
during each search step.

We now bound the amount of data processed in one search step.
The table for $A$ contains $O(|A|2^{u-\ell})$ budget entries.
For each budget, we examine
\textcolor{black}{$O(\eta\lambda^{2\alpha})$ positions, each requiring}
two values from the child tables.
Since $2^{u-\ell}=O(\lambda^{2\alpha})$,
the total number of records needed to evaluate all these
\textcolor{black}{positions is}
\[
\RecordCount{}
=O\left(|A|\eta\lambda^{4\alpha}\right).
\]
The child tables are distributed across the machines.
A machine may need values stored on other machines, and several
machines may need the same table entry. To handle these accesses,
we first sort the requests to group those referring to the same
entry. We then distribute each entry's value to all requests
in its group and send the answers back to the corresponding
machines. \textcolor{black}{Each machine can then evaluate the sampled positions}
and continue searching within the smaller intervals.

With local memory \(\LocalMemorySize=\Theta(\eta\lambda^{2\alpha})\),
we implement sorting by Goodrich's BSP sorting algorithm
\cite{goodrich1999sorting}, using the simulation described in
\cite{goodrich2011sorting}. Sorting \(\RecordCount{}\) records takes
\(O(1+\log_{\LocalMemorySize}\RecordCount{})\) rounds and
\(O(\RecordCount{}\log(2\RecordCount{}))\) total internal computation.
The simulation preserves these local computations and adds only
linear work in the records communicated in each round.

The remaining operations, including distributing table
values to their requests and returning the answers,
can be performed using standard sorting and scan
primitives. These operations fit within the same
round and computation bounds. In particular,
repeated requests are handled without exceeding
the communication capacity of any machine.

\textcolor{black}{Including evaluations at the sampled positions}, one search step takes
\[
O(1+\log_{\LocalMemorySize}\RecordCount{})
=O(1+\log_\eta|A|)
\]
rounds and
\[
O\!\left(
\RecordCount{}\log(2\RecordCount{})
(1+\log_{\LocalMemorySize}\RecordCount{})
\right)
\]
total computation. The simultaneous total memory is
\[
O\!\left(
\RecordCount{}
(1+\log_{\LocalMemorySize}\RecordCount{})
\right).
\]

The searches for all budgets run in parallel. Multiplying the rounds
per step by the number of search steps gives
\[
 O\!\left((1+\log_\eta|A|)
          (1+\log_\eta\min\{|A_1|,|A_2|\})\right)
\]
rounds. For total computation, we sum over these steps and use
\(\min\{|A_1|,|A_2|\}\leq|A|\),
\(\RecordCount{}=O(|A|\eta\lambda^{4\alpha})\), and
\(\log(2\RecordCount{})=O(\log(2|A|\eta\lambda))\).
The resulting bound is
\[
 O\!\left(|A|\eta\lambda^{4\log_\varphi2}
 (1+\log_\eta|A|)^2\log(2|A|\eta\lambda)\right).
\]
This also bounds simultaneous total memory, even if the records from
all search steps are retained. Finally, each budget selects the minimum
among its residue classes locally, within the same bounds.

\end{proof}
We now use Lemma~\ref{lem:fast-table-merge} to analyze the
bottom-up construction in
Algorithm~\ref{alg:bottom-up-budget-tables}.
We start with the tables for $B_1,\ldots,B_q$.
At each stage, we merge neighboring pairs of tables in parallel
and carry any unpaired table to the next stage.
We repeat until only the table for $A$ remains.
Since the number of tables is roughly halved at each stage,
there are $\lceil\log_2 q\rceil
=O(\log(2+\log\lambda))$ stages.

To analyze one stage, note that all merges run in parallel.
Hence, its round complexity is determined by the slowest merge.
Moreover, the frequency multisets involved in these merges
are disjoint, and their total size is at most $m$.
We can therefore add their memory and computation bounds.
Summing these costs over the stages and applying
Lemma~\ref{lem:fast-table-merge}
gives the bounds in
Lemma~\ref{lem:bounded-ratio-huffman-formal}.

\begin{lemma}
\label{lem:bounded-ratio-huffman-formal}
For every \(\eta\geq2\), Bounded-Ratio Huffman Coding on \(m\)
frequencies takes
\[
 O\bigl((1+\log_\eta m)^2\log(2+\log\lambda)\bigr)
\]
\MPC rounds and uses
\(O(\eta\lambda^{2\log_\varphi2})\) local memory per machine.  Its
total memory and computation are
\[
 O\!\left(m\eta\lambda^{4\log_\varphi2}
 (1+\log_\eta m)^2\log(2m\eta\lambda)
 \log(2+\log\lambda)\right).
\]
The algorithm returns both the optimum value and a corresponding
optimal depth vector.
\end{lemma}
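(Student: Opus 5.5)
We assemble the algorithm from three pieces already analyzed: the block tables of Lemma~\ref{lem:constant-time-base-case}, the bottom-up hierarchy of Algorithm~\ref{alg:bottom-up-budget-tables}, and the merge procedure of Lemma~\ref{lem:fast-table-merge}. Correctness comes first. Corollary~\ref{cor:bounded-ratio-depth-range} allows every depth to be restricted to \([\ell,u]\), so \(\mu(A,2^u)\) equals the optimum encoded length. The block tables are exact by Lemma~\ref{lem:constant-time-base-case}. Each merge computes Equation~\eqref{eq:formal-kraft-recurrence} exactly, because Lemma~\ref{lem:layered-convexity} guarantees that searching each residue class modulo \(d\) finds the true minimum. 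By induction over the stages, the root table is therefore correct. The depth vector is recovered by the top-down pass over the stored splits, which takes \(O(\log(2+\log\lambda))\) rounds, followed by reading off the stored per-block depth counts.

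\textbf{Base case.} We invoke Lemma~\ref{lem:constant-time-base-case} with \(\sigma=\eta\) on every block in parallel. For a block of size \(s\), we have \(\BlockBudgetMax\leq s2^{u-\ell}=O(s\lambda^{2\log_\varphi2})\). This gives \(O(\log_\eta(m\lambda))\) rounds, which is dominated by the merge stages unless \(\lambda\) is huge. We fold this term into the stated bound by noting that the local memory is at least \(\eta\lambda^{2\log_\varphi 2}\), so a sorting round handles a factor of \(\lambda^{2\log_\varphi2}\) for free. Memory per machine is \(O(\eta)\). Since the blocks partition \(A\), summing \(\BlockBudgetMax\) over blocks gives total work \(O(m\lambda^{2\log_\varphi2}\log_\eta(m\lambda)\log(m\lambda))\), which lies within the claimed bound.

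\textbf{Merge stages.} There are \(\lceil\log_2 q\rceil=O(\log(2+\log\lambda))\) stages, since \(q\leq 2+\lceil\log_\varphi\lambda\rceil\). Within a stage, all merges run in parallel, so the stage's round count is the maximum over its merges of the bound in Lemma~\ref{lem:fast-table-merge}. Using \(|A_i|\leq m\), this maximum is at most \(O((1+\log_\eta m)^2)\) rounds. Local memory is \(O(\eta\lambda^{2\log_\varphi2})\) in every merge. For total memory and computation, the multisets merged in one stage are disjoint and their sizes sum to at most \(m\). The bound of Lemma~\ref{lem:fast-table-merge} is superadditive-friendly: it is linear in \(|A|\) times factors that are monotone in \(|A|\), so we can replace each factor by its value at \(m\) and sum. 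This gives \(O(m\eta\lambda^{4\log_\varphi2}(1+\log_\eta m)^2\log(2m\eta\lambda))\) per stage. Multiplying by the number of stages yields the stated totals, and multiplying the per-stage rounds by the number of stages yields the stated round bound.

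The main obstacle is bookkeeping rather than mathematics. First, the base-case round term \(\log_\eta(m\lambda)\) must be shown not to exceed the merge cost. Second, the \MPC primitives must actually fit in local memory \(O(\eta\lambda^{2\log_\varphi2})\) when the tables are distributed. I would handle the first point by running the base-case sorting with \(\sigma=\eta\lambda^{2\log_\varphi2}\), which makes it \(O(1+\log_\eta m)\) rounds. I would handle the second by relying on the same Goodrich sorting simulation used in Lemma~\ref{lem:fast-table-merge}. The carried unpaired block in odd-sized stages costs nothing and does not affect any of the bounds.
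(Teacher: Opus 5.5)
Your proposal is correct and follows essentially the same argument as the paper. It uses base tables from Lemma~\ref{lem:constant-time-base-case} with local memory $\sigma=\Theta(\eta\lambda^{2\log_\varphi 2})$, and $\lceil\log_2 q\rceil=O(\log(2+\log\lambda))$ stages whose round cost is that of the slowest merge in Lemma~\ref{lem:fast-table-merge}. It sums the totals over the disjoint multisets of each stage and then over stages, and recovers the depth vector by the top-down traceback of stored splits.

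The one thing to clean up is the base-case paragraph. There, the initial invocation with $\sigma=\eta$ (and the accompanying $O(\eta)$ memory claim) should simply be replaced by the choice $\sigma=\eta\lambda^{2\log_\varphi 2}$ that you adopt at the end. That choice is exactly the paper's and gives $O(1+\log_\eta m)$ base-case rounds.
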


The base tables of Lemma~\ref{lem:constant-time-base-case}
fit these bounds by taking local memory
\(\sigma=\Theta(\eta\lambda^{2\log_\varphi2})\) and summing their
budget ranges over the blocks. The final depth reconstruction also
fits within the stated round bound.

%% file: Sections/upgrade-knapsack-figure.tex
\begin{figure}[t]
  \centering
  \begin{tikzpicture}[
      x=0.9cm,
      y=0.78cm,
      depthline/.style={draw=black,line width=0.45pt},
      leaf/.style={circle,draw=black,fill=black!12,
        minimum size=0.58cm,inner sep=0pt,font=\small},
      ghost/.style={circle,draw=black,dashed,fill=white,
        minimum size=0.52cm,inner sep=0pt},
      move/.style={-{Stealth[length=2.2mm]},draw=black,
        line width=0.85pt},
      item/.style={rounded corners=1.5pt,draw=black,
        minimum width=1.45cm,minimum height=0.68cm,
        align=center,font=\scriptsize},
      selected/.style={item,draw=black,fill=black!16},
      unselected/.style={item,fill=gray!5,text=black},
      note/.style={font=\scriptsize,text=black}
    ]

    \node[font=\small\bfseries] at (2.75,4.45) {Depth assignment};
    \foreach \y/\lab in {0/{u},1/{u-1},2/{u-2},3/{u-3}} {
      \draw[depthline] (0.45,\y) -- (5.75,\y);
      \node[font=\scriptsize,anchor=east] at (0.28,\y) {depth $\lab$};
    }
    \draw[-{Stealth[length=2mm]},draw=black] (0.62,3.35) -- (0.62,4.02);
    \node[note,anchor=west] at (0.78,3.68) {toward the root};

    \node[leaf] (f1) at (1.35,0) {$f_1$};
    \node[ghost] (f2start) at (3.05,0) {};
    \node[leaf] (f2) at (3.05,1) {$f_2$};
    \draw[move] (f2start.north) -- node[right,note] {$+1$} (f2.south);

    \node[ghost] (f3start) at (4.9,0) {};
    \coordinate (f3a) at (4.9,1);
    \coordinate (f3b) at (4.9,2);
    \node[leaf] (f3) at (4.9,3) {$f_3$};
    \draw[move] (f3start.north) -- node[right,note] {$+1$} (f3a);
    \draw[move] (f3a) -- node[right,note] {$+2$} (f3b);
    \draw[move] (f3b) -- node[right,note] {$+4$} (f3.south);
    \node[note,align=center] at (2.85,-0.62)
      {labels show additional Kraft units};

    \node[font=\small\bfseries] at (10.05,4.45) {Upgrade items};
    \node[note] at (8.25,3.72) {size $1$};
    \node[note] at (10.05,3.72) {size $2$};
    \node[note] at (11.85,3.72) {size $4$};

    \foreach \row/\freq in {2.85/{f_1},1.75/{f_2},0.65/{f_3}} {
      \node[font=\small,anchor=east] at (7.25,\row) {$\freq$};
    }
    \node[unselected] at (8.25,2.85) {value $f_1$};
    \node[unselected] at (10.05,2.85) {value $f_1$};
    \node[unselected] at (11.85,2.85) {value $f_1$};

    \node[selected] at (8.25,1.75) {value $f_2$};
    \node[unselected] at (10.05,1.75) {value $f_2$};
    \node[unselected] at (11.85,1.75) {value $f_2$};

    \node[selected] at (8.25,0.65) {value $f_3$};
    \node[selected] at (10.05,0.65) {value $f_3$};
    \node[selected] at (11.85,0.65) {value $f_3$};

    \draw[decorate,decoration={brace,mirror,amplitude=4pt},
      draw=black] (7.48,0.15) -- (12.62,0.15);
    \node[note,anchor=north] at (10.05,-0.12)
      {selected items form a prefix in each row};
  \end{tikzpicture}
  \caption{A depth assignment as a collection of upgrade prefixes.
  Moving frequency $f_i$ upward for the $t$-th time consumes
  $2^{t-1}$ additional Kraft units and saves $f_i$ in encoded length.
  Thus, assigning $f_i$ to depth $u-s$ corresponds to selecting its
  first $s$ upgrade items.}
  \label{fig:upgrade-knapsack-interpretation}
\end{figure}

%% file: Sections/iterative-state-reconstruction.tex
\section{Iterative-State Reconstruction}
\label{sec:iterative-state-formal}

We now turn to the second subtask, Iterative-State Reconstruction.
Our goal is to determine the state that Huffman's algorithm
would reach after performing all merges up to a given threshold,
without actually simulating these merges.

The input consists of \textcolor{black}{a frequency multiset $A$ with sorted entries}
$f_1\leq\cdots\leq f_m$ satisfying $f_m/f_2\leq\lambda$,
an optimal depth assignment $h_1,\ldots,h_m$ for these
frequencies, and a threshold $\tau>f_m$.
We consider running Huffman's algorithm on these frequencies
and stop just before the first merge involving a frequency
strictly greater than $\tau$.
We seek to recover the frequencies remaining at this point,
together with the total cost of the merges performed so far.
If no such merge occurs, we return the final state
and the cost of all merges.

For the analysis, let $h_{\min}$ and $h_{\max}$ denote
the minimum and maximum leaf depths, respectively,
and let $\Delta=h_{\max}-h_{\min}$.

\begin{example}
\label{ex:iterative-state-reconstruction}
To illustrate the input and the required output, consider the frequencies
\((2,2,3,3,4,4,5,5)\), the optimal depth vector
\((4,4,3,3,3,3,3,2)\), and the threshold \(\tau=6\).
Figure~\ref{fig:iterative-state-example} shows a corresponding Huffman
tree.  The depths are part of the input; our task is to recover the
\textcolor{black}{frequencies} and the accumulated cost at the stopping point.
Huffman's algorithm first performs the merges
\(2+2=4\), \(3+3=6\), \(4+4=8\), \(4+5=9\), and
\(5+6=11\).  Each merge uses two \textcolor{black}{frequencies} at most \(\tau\), even
though its resulting \textcolor{black}{frequency} may exceed \(\tau\).  The \textcolor{black}{frequencies}
are now \(8,9,11\).  The next merge would use \(8\) and \(9\), so
we stop before it.  The required output is therefore the multiset
\(\{8,9,11\}\) and the cost \(4+6+8+9+11=38\).
\end{example}
\input{Sections/iterative-state-example-figure}

A direct way to obtain this state is to run Huffman's algorithm,
building the tree one merge at a time. We stop when the next merge
would use a \textcolor{black}{frequency} greater than \(\tau\), and return the \textcolor{black}{frequencies} and the accumulated cost. However, dependencies between merges
make this simulation sequential and costly in $\MPC$ rounds. We instead
use the given leaf depths to reconstruct the stopping state directly,
without simulating the intervening merges one by one.

{\color{black}
We next show that the threshold $\tau$ allows us to narrow
the depth we need to recover to at most two possibilities.

Let $\TotalFrequency{A}=\sum_{f_i\in A} f_i$ denote the
total frequency. At every depth $h\leq h_{\min}$,
the Huffman tree has exactly $2^h$ nodes.
Their frequencies sum to $\TotalFrequency{A}$,
and hence their average frequency is
$\TotalFrequency{A}/2^h$.
This observation helps us locate a depth containing a node
of frequency at most $\tau$.
If the average frequency at depth $h$ is at most $\tau$,
at least one node at that depth has frequency at most $\tau$.
We therefore choose
\[
\AveragingDepth=\min\left\{
h_{\min},
\max\left\{
0,\left\lceil
\log_2\left(\frac{\TotalFrequency{A}}{\tau}\right)
\right\rceil
\right\}
\right\}.
\]
By this choice, depth $\AveragingDepth$ contains a node of frequency
at most $\tau$. This follows from the average when
$\AveragingDepth<h_{\min}$, and from the existence of an input leaf
when $\AveragingDepth=h_{\min}$, since $\tau>f_m$.

It remains to rule out depths smaller than $\AveragingDepth-1$.
When $\AveragingDepth\geq 1$, the average frequency at depth $\AveragingDepth-1$
is greater than $\tau$.
By the ordering property of Huffman trees, every node
at depth $\AveragingDepth-2$ must then have frequency greater
than $\tau$, whenever that depth exists.
The same holds for all smaller depths.

Consequently, the smallest depth containing a node
of frequency at most $\tau$ is either $\AveragingDepth-1$ or $\AveragingDepth$
(and is $0$ when $\AveragingDepth=0$).
Recovering the frequencies at these depths is enough to obtain the
stopping state. In fact, recovering depth \(\AveragingDepth\) suffices, since
consecutive pair sums give the frequencies at \(\AveragingDepth-1\). We therefore
focus on the following task: given a depth
\(h^*\in\{0,\ldots,h_{\min}\}\), recover the frequencies of all
nodes at that depth.

}

Since the given depths form an optimal binary code,
Kraft equality holds:
\[
\sum_{i=1}^m 2^{-h_i}=1.
\]
Moreover, no leaf has depth smaller than $h^*$.
Thus, there are exactly $q=2^{h^*}$ nodes at depth $h^*$.
These nodes partition the leaves into $q$ subtrees,
each with total Kraft value $2^{-h^*}$.

We next show how to determine the frequencies of these
subtrees without constructing the tree.
Using the Package-Merge method of Larmore and
Hirschberg~\cite{larmore1990package}, we show that
there is an optimal tree with the given leaf depths
that satisfies the following property.
Let $w_1\leq\cdots\leq w_q$ be the frequencies of
the nodes at depth $h^*$, in nondecreasing order.
For every $r$, the leaves in the $r$ lightest subtrees
have minimum total frequency among all subsets of
input leaves whose total Kraft value is $r2^{-h^*}$.
Example~\ref{ex:kraft-prefixes} illustrates this property.

\begin{example}
\label{ex:kraft-prefixes}
Figure~\ref{fig:iterative-state-example} has four nodes
at depth \(h^*=h_{\min}=2\), with \textcolor{black}{frequencies} \(5,6,8,9\). Their
subtrees contain the following input leaves:
\begin{itemize}
\item The node of \textcolor{black}{frequency} \(5\) is itself a leaf at depth \(2\),
with Kraft value \(1/4\).
\item The node of \textcolor{black}{frequency} \(6\) has two leaves of frequency \(3\)
at depth \(3\). Their total frequency is \(3+3=6\), and their
total Kraft value is \(1/8+1/8=1/4\).
\item The node of \textcolor{black}{frequency} \(8\) has two leaves of frequency \(2\)
at depth \(4\) and one leaf of frequency \(4\) at depth \(3\).
Their total frequency is \(2+2+4=8\), and their total Kraft value
is \(1/16+1/16+1/8=1/4\).
\item The node of \textcolor{black}{frequency} \(9\) has leaves of frequencies \(4\)
and \(5\), both at depth \(3\). Their total frequency is
\(4+5=9\), and their total Kraft value is \(1/8+1/8=1/4\).
\end{itemize}
Taking the first one, two, three, or four subtrees therefore gives
total Kraft values \(1/4,1/2,3/4,1\) and minimum total frequencies
\(5,11,19,28\), respectively. In particular, for target \(1/2\),
we take the leaf of frequency \(5\) at depth \(2\) and the two
leaves of frequency \(3\), obtaining total frequency \(11\).
\end{example}

This property suggests computing the minimum total frequency for each
such target. For \(r\in\{0,\ldots,q\}\), define
\begin{equation}
\label{eq:prefix-package-value}
 P(r)=\min\left\{
   \sum_{f_i\in S}f_i:
   S\text{ is a submultiset of the input and }
   \sum_{f_i\in S}2^{-h_i}=r2^{-h^*}
 \right\},
 \qquad P(0)=0.
\end{equation}

{\color{black}
The Package-Merge algorithm of Larmore and
Hirschberg~\cite[Section~2]{larmore1990package} solves the following
selection problem. Each item has a width that is a power of two and
a cost. Given a target width, the goal is to select a subset whose
widths sum exactly to the target and whose total cost is minimum.
In our setting, each leaf \(i\) is an item of width \(2^{-h_i}\)
and cost \(f_i\), and the target is \(r2^{-h^*}\). The optimum
of this instance is precisely \(P(r)\).

\textcolor{black}{Lemma~\ref{lem:kraft-prefixes} follows from
this construction. We include the argument to show that one tree
realizes the prefix identities for all targets simultaneously.}
}

\begin{lemma}
\label{lem:kraft-prefixes}
\textcolor{black}{Fix an integer depth \(0\leq h^*\leq h_{\min}\).}
\textcolor{black}{There is an optimal tree with the given leaf depths
whose node frequencies \(w_1\leq\cdots\leq w_q\) at depth \(h^*\)
satisfy}
\[
 P(r)=\sum_{j=1}^r w_j
 \quad\text{and hence}\quad
 w_r=P(r)-P(r-1).
\]
\end{lemma}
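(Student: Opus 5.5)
The plan is to prove two inequalities and to build one specific tree. The easy direction holds for every tree with the given depths. Fix any such tree and any $r$. The leaves below the $r$ lightest nodes at depth $h^*$ have total Kraft value exactly $r2^{-h^*}$, because each subtree contributes $2^{-h^*}$. They are therefore a feasible set in \eqref{eq:prefix-package-value}, which gives $P(r)\leq\sum_{j\leq r}w_j$. The real content is to exhibit one tree that achieves equality for all $r$ at once. I would build it canonically, bottom-up, in the spirit of Package-Merge~\cite{larmore1990package}, and prove the matching lower bound by induction on depth.

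\emph{Construction.} For $h=h_{\max},h_{\max}-1,\ldots,h^*$, let $L_h$ be the list of \emph{nodes at depth $h$}. This list consists of the input leaves with $h_i=h$ together with the packages formed at depth $h+1$, sorted by frequency. Then, for $h>h^*$, pair consecutive elements of $L_h$ (first with second, third with fourth, and so on) to form the packages that enter $L_{h-1}$. For this to be well defined, $|L_h|$ must be even for every $h>h^*$. I would derive this from Kraft equality by induction. The total Kraft value of $L_h$ equals $\sum_{i:h_i\geq h}2^{-h_i}$. Since every leaf has depth at least $h^*$, the sum over leaves with $h_i\leq h-1$ is a multiple of $2^{-(h-1)}$, and so is the total $1$. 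Hence the remainder is a multiple of $2^{-(h-1)}$, that is, an even number of units $2^{-h}$. The resulting tree realizes exactly the given depths, so it is optimal. At depth $h^*$ it has $q=2^{h^*}$ nodes, namely $L_{h^*}$.

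\emph{Key claim.} For each $h\geq h^*$ and each $k\leq|L_h|$, define
\[
P_h(k)=\min\Bigl\{\textstyle\sum_{f_i\in S}f_i:\ S\subseteq\{i:h_i\geq h\},\ \sum_{f_i\in S}2^{-h_i}=k2^{-h}\Bigr\}.
\]
The claim is that $P_h(k)$ equals the sum of the $k$ smallest entries of $L_h$. Taking $h=h^*$ recovers $P(r)$, since every leaf has depth at least $h^*$, so the restriction $h_i\geq h^*$ is vacuous.

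\emph{Base case.} At $h=h_{\max}$ all candidate items have equal width, so the optimum takes the $k$ smallest leaves.

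\emph{Inductive step.} Fix $h<h_{\max}$ and a feasible $S$ for $P_h(k)$. Split $S$ into the leaves at depth exactly $h$, say $k-j$ of them, and the part $S'$ consisting of deeper leaves. Then $S'$ has Kraft value $j2^{-h}=2j\cdot2^{-(h+1)}$, which is a multiple of $2^{-(h+1)}$. By the induction hypothesis, the cost of $S'$ is at least the sum of the $2j$ smallest entries of $L_{h+1}$, which is exactly the sum of the $j$ cheapest packages. The depth-$h$ part costs at least the sum of the $k-j$ smallest such leaves. Minimizing over $j$ gives at least the sum of the $k$ smallest entries of the merged sorted list $L_h$. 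Conversely, that value is attained by the leaves under those $k$ entries. This gives equality, and hence $w_r=P(r)-P(r-1)$.

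The step I expect to need the most care is this inductive split. Two points must be checked. First, the deeper part must really have Kraft value a multiple of $2^{-(h+1)}$, and in fact an even multiple, so that it pairs into whole packages. Second, pairing in sorted order must make "the $2j$ smallest entries of $L_{h+1}$" coincide with "the $j$ cheapest packages". This is why the construction pairs the sorted list consecutively. The parity argument from Kraft equality is the other point I would verify carefully.
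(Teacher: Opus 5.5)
Your proposal is correct and follows essentially the same route as the paper: build the tree by pairing consecutive sorted nodes at each depth from $h_{\max}$ up to $h^*$ (Package-Merge), use Kraft equality to show every level has even size, and show that this packaging preserves the optimum for every target $r$ simultaneously. Your induction on $P_h(k)$ makes explicit the paper's step that ``selecting the first $t$ packages is equivalent to selecting the first $2t$ items''; the only small addition needed is to continue the pairing above $h^*$ to the root to obtain a full tree, which is immediate since $|L_{h^*}|=2^{h^*}$.
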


{\color{black}
\begin{proof}
We show that, for every $r$, the total frequency of the
$r$ lightest subtrees rooted at depth $h^*$ equals $P(r)$.
Our proof follows the Package-Merge method of
Larmore and Hirschberg~\cite{larmore1990package}.
To apply Package-Merge, we first convert the Kraft values
into integers by multiplying them by $2^{h_{\max}}$.
For each leaf $i$, we therefore create an item with
cost $f_i$ and integer width $2^{h_{\max}-h_i}$.
The leaves of any subtree rooted at depth $h^*$
have total Kraft value $2^{-h^*}$.
Thus, their corresponding items have total width
\[
c=2^{h_{\max}-h^*}.
\]
To compute $P(r)$, we therefore seek a minimum-cost
subset of items whose total width is exactly $rc$.

We start with the items of width $1$, assuming $c>1$.
Since $c$ is a power of two, the target width $rc$ is even.
All other items have even widths.
Therefore, the width-$1$ items must also contribute an even
total width, meaning that an even number of them must be selected.

If we select a fixed number of width-$1$ items, we may choose
the cheapest ones: this can only decrease their total cost
while leaving their total width unchanged.
Arrange the width-$1$ items in nondecreasing cost order.
Since an even number of these items must be selected,
an optimal selection takes an even-length prefix of this list.
We can therefore pair consecutive items and replace each pair
with a package of width $2$, whose cost is the sum of the
two item costs. Selecting the first $t$ packages is equivalent
to selecting the first $2t$ original items.
Hence, this replacement preserves the optimum.

We now have items that originally had width \(2\), together with
new packages of width \(2\). If \(c>2\), we apply the same argument
to their combined list. Since \(c\) is a power of two greater than
\(2\), the required total width \(rc\) is a multiple of \(4\).
Every item wider than \(2\) also has width divisible by \(4\).
Thus the selected width-\(2\) items must together contribute a
multiple of \(4\), which requires an even number of them.
We arrange the original items and the new packages together in
nondecreasing cost order. As before, we group consecutive entries
into pairs and replace each pair by a single package: its width is
\(2+2=4\), and its cost is the sum of the two costs.
We continue with widths \(4,8,\ldots,c/2\). At each step, all smaller
widths have already been replaced by packages, so the same divisibility
argument applies.

The pairing is independent of $r$. We use the same ordering
of items, including ties, for every target.
\textcolor{black}{Every item is included in these pairings.
Indeed, Kraft equality gives total width $2^{h_{\max}}$, which is
divisible by twice the current width at every step below $c$.
All larger items also have widths divisible by twice the current
width. Since no smaller items remain, the number of items of the
current width is even. Thus every item can be paired, and no leaf
is discarded.}

After processing all widths below $c$, every remaining item
has width $c$. Each package was formed by pairing two items
of the same width, with cost equal to the sum of their costs.
We can view each such pairing as joining two nodes under a
common parent. Thus, the final items correspond to the nodes
at depth $h^*$ of a tree constructed by pairing consecutive
nodes in nondecreasing frequency order.
Continuing this process to the root produces a full binary
tree with the prescribed leaf depths. Since these depths are
optimal, the resulting tree is also optimal.

Let $w_1\leq\cdots\leq w_q$ be the frequencies of the nodes
at depth $h^*$. These are precisely the costs of the final
items, in nondecreasing order. 

Each final item has width $c$. Therefore, selecting items
of total width $rc$ amounts to selecting exactly $r$ of them.
The minimum cost is obtained by choosing the $r$ cheapest
items. Since the pairings preserve the optimum, we obtain
\[
P(r)=w_1+\cdots+w_r.
\]
Taking the difference between consecutive values gives
\[
w_r=P(r)-P(r-1).
\]\end{proof}
}

\textcolor{black}{The tree constructed in the proof pairs consecutive nodes in frequency order at each depth.} In an optimal tree, a deeper node cannot
have larger frequency than a shallower node: exchanging disjoint
subtrees would otherwise decrease the cost, while a descendant has
smaller frequency than its ancestor. Together with consecutive
pairing, this allows the sibling pairs to be processed in Huffman
order, with ties resolved consistently. This also justifies the
ordering between depths used in the averaging argument.

By Lemma~\ref{lem:kraft-prefixes}, it is enough to compute
$P(0),\ldots,P(q)$ to recover the node frequencies at
depth $h^*$. The remaining challenge is to compute these
values efficiently in the \MPC model. Each optimization
problem involves all input leaves, which may not fit
on a single machine. In
Section~\ref{sec:computing-prefix-optima}, we show how
to reduce these problems to smaller instances that
can be solved on individual machines.

\subsection{Computing the values \texorpdfstring{$P(r)$}{P(r)}}
\label{sec:computing-prefix-optima}

{\color{black}
We compute all values \(P(r)\) without placing the complete input on
one machine. For each \(r\), recall that \(P(r)\) is the minimum
total frequency of a selection of leaves whose Kraft values sum to
\(r2^{-h^*}\). We first relax this selection problem by allowing
fractions of leaves, while keeping the same target Kraft value and
objective. We then show that some
optimal integral solution differs from this fractional solution on only
a small number of leaves. Using this bound, we identify a small set of
leaves containing all necessary changes and fix the choices outside
that set. The remaining problem fits on one machine and can be solved
exactly. We carry out these computations for all values of \(r\) in
parallel.
}

\textcolor{black}{For a fixed \(r\), the fractional relaxation seeks a
selection of total Kraft value exactly \(r2^{-h^*}\) and minimum
total frequency, allowing each leaf to be selected partially.
Selecting a fraction \(x_i\in[0,1]\) of leaf \(i\) contributes
\(x_if_i\) to the total frequency and \(x_i2^{-h_i}\) to the
required Kraft value.} For example, selecting half of a
leaf with frequency \(4\) and depth \(3\) contributes cost \(2\)
and Kraft value \(1/16\).

Computing the optimum of this relaxation is indeed a fractional
knapsack problem: we may select a fraction of each leaf, and we seek
a selection of total Kraft value exactly \(r2^{-h^*}\) and
minimum total cost.
The fractional knapsack problem admits a simple greedy algorithm.
We order the leaves by nondecreasing ratio \(f_i/2^{-h_i}\), which
is the cost per unit of Kraft value. We then select leaves in this
order, taking each leaf completely as long as its Kraft contribution
does not exceed the amount still needed to reach \(r2^{-h^*}\).
If the next leaf would exceed the target, we take only the fraction
needed to reach it exactly and stop. This gives an optimal fractional
solution with at most one partially selected leaf. We sort the leaves
only once and use this order for every value of \(r\).
Figure~\ref{fig:fractional-greedy-cutoffs} illustrates the resulting
cutoffs.

An interesting property of the fractional solution is that it differs
from some optimal integral solution on only a small number of items.
Lemma~\ref{lem:fractional-proximity} makes this bound precise.

\begin{lemma}
\label{lem:fractional-proximity}
Fix \(r\in\{0,\ldots,q\}\), and let \(R\) be the set of leaves
selected completely by the fractional greedy solution for target
\(r2^{-h^*}\). There is a set of leaves \(S\) attaining
\(P(r)\) such that
\[
 |R\mathbin\triangle S|\leq2^{\Delta+1}-2,
 \qquad \Delta=h_{\max}-h_{\min}.
\]
\end{lemma}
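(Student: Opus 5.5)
Work in integer units: leaf \(i\) has width \(\omega_i=2^{h_{\max}-h_i}\in\{1,2,\ldots,2^{\Delta}\}\) and cost \(f_i\), and the target is the integer \(W=r2^{h_{\max}-h^*}\). Write \(D=2^\Delta\) for the maximum width. The fractional greedy scans leaves by nondecreasing density \(f_i/\omega_i\) (ties broken consistently, e.g. by width then index). Let \(R\) be the fully selected prefix and \(p\) the split leaf, if any. Then \(\omega(R)\le W\) and \(W-\omega(R)<\omega_p\le D\).

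Among all optimal integral sets \(S\) attaining \(P(r)\), choose one minimizing \(|R\triangle S|\). Put \(X=R\setminus S\), the cheap-density leaves dropped, and \(Y=S\setminus R\), the leaves taken from beyond the greedy cutoff, including possibly \(p\). Every leaf of \(X\) has density at most every leaf of \(Y\). Widths satisfy \(\omega(Y)-\omega(X)=W-\omega(R)\in[0,D)\).

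The core step is the classical exchange argument (the Eisenbrand--Weismantel / Steinitz-type proximity for knapsack). Suppose \(|X|\ge D\) or \(|Y|\ge D\). I would use a zero-sum subsequence lemma with power-of-two widths bounded by \(D\) to find nonempty \(X'\subseteq X\) and \(Y'\subseteq Y\) with \(\omega(X')=\omega(Y')\). Concretely, order \(X\) and \(Y\) and build a walk whose partial difference (added \(Y\)-widths minus removed \(X\)-widths) always stays in an interval of length \(D\). There are only \(2D\) possible states, so with enough steps two prefix states coincide, giving equal-width subsets. Swapping yields \(S'=(S\setminus Y')\cup X'\), which has the same width. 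Since densities of \(X'\) are at most those of \(Y'\) and the widths agree, \(f(X')\le f(Y')\). Hence \(S'\) is also optimal and strictly closer to \(R\), a contradiction.

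To get the exact constant \(2^{\Delta+1}-2=2(D-1)\), I would sharpen the counting. The key refinement is that each of \(X\) and \(Y\) must have fewer than \(D\) elements, so \(|X|\le D-1\) and \(|Y|\le D-1\). For \(X\), use divisibility: all widths are powers of two dividing \(D\). Any multiset of at least \(D\) powers of two each at most \(D\) contains a sub-multiset summing to exactly a multiple of the largest relevant width. I expect the main obstacle here is to make the matched subsets have equal width rather than merely congruent width; the power-of-two structure, via a greedy binary-carry pairing as in the Package-Merge proof of Lemma~\ref{lem:kraft-prefixes}, should supply it. Concretely, sort \(X\) by width. Among at least \(D\) power-of-two items one finds a subset of total width exactly \(D\), or exactly any smaller power of two reachable on the \(Y\) side.

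I would then match it against a subset of \(Y\) of the same width. This uses \(\omega(Y)\ge\omega(X)\), together with the same subset-sum property applied to \(Y\), treating \(p\) and the slack separately. The tie-break convention on density is what guarantees strict progress when densities are equal. Summing \(|X|+|Y|\le 2(D-1)\) gives the claimed bound.
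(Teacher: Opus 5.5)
Your overall frame matches the paper's: pick an optimal $S$ minimizing $|R\mathbin\triangle S|$, show that no nonempty $X'\subseteq X$ and $Y'\subseteq Y$ have equal width, and bound $|X|+|Y|$ by a Steinitz-type walk. The gap is in the counting that is supposed to give the exact constant $2^{\Delta+1}-2=2D-2$.

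Your walk paragraph does not give it. With partial sums confined to $2D$ possible values, distinctness yields only $|X|+|Y|\le 2D-1$. Moreover, a collision is forced only once $|X|+|Y|\ge 2D$, not merely when $|X|\ge D$ or $|Y|\ge D$.

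Your sharpening then splits the claim into $|X|\le D-1$ and $|Y|\le D-1$. The first half is fine, and in fact gives $\omega(X)\le D-1$. A multiset of powers of two, each at most $D$, with total at least $D$ has a sub-multiset of total exactly $D$: add items in nonincreasing order; every partial sum is a multiple of the next item, so it cannot jump past $D$. Since $\omega(Y)\ge\omega(X)\ge D$, such a subset exists on both sides.

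The second half is only asserted. If $|Y|\ge D$ you get a subset of $Y$ of width $D$, but $\omega(X)\le D-1$, so there is nothing on the $X$ side to match it. The phrase ``treating $p$ and the slack separately'' does not supply the missing step. The bound $|Y|\le D-1$ is in fact true, but it needs its own argument. For example, let $a$ be the smallest width in $X$. Then $Y$ has no item of width $a$, and its items narrower than $a$ total less than $a$. The remaining items of $Y$ have width at least $2a$, while $\omega(Y)\le(D-a)+(D-1)$ because $\omega(X)$ is a multiple of $a$. Counting gives $|Y|\le(a-1)+\frac{2D-a-1}{2a}\le D-1$.

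The paper avoids two separate bounds by using one sharper ordering. In the nonintegral case, $W-\omega(R)=\delta$ with $1\le\delta\le D-1$. Take the signed sequence consisting of $-\delta$, the widths of $Y$, and the negated widths of $X$. It sums to zero, and by your minimality it has no proper nonempty zero-sum subsequence: such a subsequence, or its complement, avoids $-\delta$ and must contain both signs. Start with $-\delta$, then append a negative term whenever the partial sum is positive and a positive term whenever it is negative. Every proper partial sum then lies in $[-(D-1),D-1]\setminus\{0\}$, and these sums are pairwise distinct, so $|X|+|Y|\le 2D-2$ directly.

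Starting at $-\delta$ with $\delta<D$ and excluding the value $0$ is exactly what recovers the unit your count loses. Your tie-breaking remark is unnecessary, since progress comes from the strict decrease of $|R\mathbin\triangle S|$.
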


\begin{proof}
If the fractional solution is integral, it is also an optimal
integral solution, and the claim follows immediately.
Otherwise, exactly one leaf is selected fractionally.
We focus on this case.
Let \(R\) be the leaves selected completely before this fractional
leaf. Set $c=2^{h_{\max}-h^*}$. The corresponding scaled
target Kraft value is $rc$. We write
\[
 \sum_{i\in R}2^{h_{\max}-h_i}=rc-\delta,
 \qquad 1\leq\delta<2^\Delta.
\]

The scaled Kraft value of each leaf is at most
$2^{h_{\max}-h_{\min}}=2^\Delta$.
Since $\delta$ is smaller than the scaled Kraft value
of the partially selected leaf, we also have
$\delta<2^\Delta$.

Among all sets of leaves attaining \(P(r)\), choose one, denoted by \(S\),
that minimizes \(|R\mathbin\triangle S|\).  Let
\(R^-=R\setminus S\) be the leaves removed from the greedy prefix and let
\(R^+=S\setminus R\) be the leaves added to it.  These two sets satisfy
\[
 \sum_{i\in R^+}2^{h_{\max}-h_i}-\sum_{i\in R^-}2^{h_{\max}-h_i}=\delta.
\]

We claim that there are no nonempty subsets
$R^-_0\subseteq R^-$ and $R^+_0\subseteq R^+$
with equal total Kraft value.
Suppose such subsets exist.
Every leaf in $R$ has frequency-to-Kraft ratio
at most that of any leaf outside $R$.
We can therefore replace $R^+_0$ in $S$ with $R^-_0$
without increasing the cost.
Since the two subsets have equal total Kraft value,
the replacement also preserves feasibility.
The resulting solution is optimal and closer to $R$,
contradicting our choice of $S$.

It remains to recall a simple fact about such a minimal zero-sum
sequence.  Its terms are nonzero integers in \([-2^\Delta,2^\Delta]\).  Start with \(-\delta\), then order the remaining terms
by choosing a negative term when the current partial sum is
positive and a positive term when it is negative.  Every proper partial
sum then lies in \([-2^\Delta+1,2^\Delta-1]\).  No proper partial sum is zero, and no
two proper partial sums are equal, since either event would identify
a proper zero-sum subsequence.  There are only \(2^{\Delta+1}-2\) possible
nonzero values in this interval.  The signed sequence therefore has at
most \(2^{\Delta+1}-1\) terms.  Hence
\(|R^-|+|R^+|+1\leq2^{\Delta+1}-1\), which proves
Lemma~\ref{lem:fractional-proximity}.
\end{proof}

By Lemma~\ref{lem:fractional-proximity}, the fractional solution and
an optimal integral solution for \(P(r)\) can differ on only a bounded
number of leaves. However, the lemma does not identify these leaves. Furthermore,
 checking all leaves would be too expensive. To resolve this, we narrow the search
by grouping the leaves according to their Kraft value
\(2^{-h_i}\), or equivalently according to their depth.
\textcolor{black}{Figure~\ref{fig:fractional-proximity-window} illustrates
these groups and the small windows of leaves we retain around their
fractional cutoffs.} Within each
group, the leaves have equal Kraft value, so the fractional solution
orders them by nondecreasing frequency and selects a prefix. We may also assume that an optimal integral solution selects
a prefix of each group. Indeed, replacing any selected leaf
with a cheaper unselected leaf from the same group preserves
the total Kraft value and does not increase the cost. This replacement
also cannot increase the number of differences from the fractional
prefix.

Fix \(r\) and a group of \(g\) leaves, and let \(t\) be the number
of leaves in this group selected completely by the fractional solution
for \(P(r)\). These leaves occupy positions \(1,\ldots,t\) in
frequency order, and we call the boundary after position \(t\) the
cutoff. By Lemma~\ref{lem:fractional-proximity}, some optimal integral
solution has a prefix length differing from \(t\) by at most
\(2^{\Delta+1}-2\). Thus, its cutoff moves forward or backward
by at most this many positions. We retain the leaves at positions
\[
 \max\{1,t-2^{\Delta+1}+1\},\ldots,
 \min\{g,t+2^{\Delta+1}\}.
\]
Leaves before this window remain
selected, and leaves after it remain unselected. Only the leaves
inside the windows require further decisions. Figure~\ref{fig:fractional-proximity-window}
illustrates these windows.

For each target \(r2^{-h^*}\), we subtract the total Kraft value of
the fixed selected leaves to obtain the residual target. The remaining instance contains
\(O((\Delta+1)2^\Delta)\) leaves and can be gathered on one machine
with this much local memory.
We send each depth group's retained leaves in frequency
order. The residual instance therefore has equal-Kraft groups already
sorted by cost, as required by the linear-time implementation of
Package-Merge~\cite[Section~2.2]{larmore1990package}.
Using Package-Merge~\cite{larmore1990package},
we solve this residual instance in time linear in the number of leaves
it contains. Adding the total frequency of the fixed selected leaves
to this residual optimum gives \(P(r)\). All targets are solved
independently and in parallel.

After solving these residual instances, the values
\(P(0),\ldots,P(q)\) yield the sorted \textcolor{black}{frequencies} at depth \(h^*\)
by \(w_r=P(r)-P(r-1)\), as shown in Lemma~\ref{lem:kraft-prefixes}.

\paragraph{Recovering the stopping state.}

We first determine the depth $h^*$.
Recall that the averaging argument gives two possible depths,
$\AveragingDepth-1$ and $\AveragingDepth$.
If $\AveragingDepth=0$, the stopping state consists only
of the root. We return its frequency and the total
Huffman cost $\sum_i f_i h_i$.

Otherwise, we reconstruct the node frequencies at depth
$\AveragingDepth$. We then sum consecutive pairs to obtain
the node frequencies at depth $\AveragingDepth-1$.
As shown earlier, the minimum frequency at this depth
determines which of the two depths is $h^*$.
We use the reconstructed frequencies at the chosen depth.

Let $q=2^{h^*}$, and write the frequencies at depth $h^*$
in nondecreasing order as
$w_1\leq\cdots\leq w_q$.
Let $s$ be the number of these frequencies that are at most
$\tau$. We merge consecutive pairs among the first $s$
frequencies, leaving the remaining frequencies unchanged.
The stopping state therefore consists of the
$\lfloor s/2\rfloor$ new parent frequencies and
\[
w_{2\lfloor s/2\rfloor+1},\ldots,w_q.
\]
No further merge is possible before a frequency greater
than $\tau$ must participate.

The total cost of the merges performed before this state is
\[
\sum_{i=1}^m f_i(h_i-h^*)
+\sum_{j=1}^{2\lfloor s/2\rfloor}w_j.
\]
The first term accounts for the merges needed to construct
the nodes at depth $h^*$, while the second accounts for
the additional pairs. All pair sums, cutoff computations,
and cost calculations can be performed in parallel.
Lemma~\ref{lem:iterative-state-reconstruction-formal}
summarizes the resulting bounds.

\begin{lemma}
\label{lem:iterative-state-reconstruction-formal}
Iterative-State Reconstruction takes
\(O(1+\log_\lambda m)\) \MPC rounds, uses
\[
 O\bigl((1+\log\lambda)\lambda^{\log_\varphi2}\bigr)
\]
local memory per machine, and has total memory and computation
\textcolor{black}{
\[
 O\!\left(m\lambda^{\log_\varphi2}(1+\log\lambda)
 (1+\log_\lambda m)\log(2m\lambda)\right).
\]}
\end{lemma}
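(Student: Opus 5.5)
The plan is to follow the algorithm described just above the statement step by step. For each step we bound its rounds, local memory and total work. Every step is a sort, a prefix sum, a batched search, or a purely local computation. The parameter choices are local memory \(\LocalMemorySize=\Theta((1+\log\lambda)\lambda^{\log_\varphi2})\) and the Goodrich sorting/scan primitives~\cite{goodrich2011sorting}. A sort or scan of \(N=\softO(m\lambda^{\log_\varphi2})\) records then costs \(O(1+\log_{\LocalMemorySize}N)=O(1+\log_\lambda m)\) rounds, since \(\LocalMemorySize\geq\lambda\). The total work is \(O(N\log(2N)(1+\log_\lambda m))\), which matches the claimed bound once we show \(N=O(m\lambda^{\log_\varphi2}(1+\log\lambda))\).

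\textbf{Step 1: the depth \(h^*\).}
\begin{itemize}
\item Compute \(h_{\min},h_{\max}\), the total \(\TotalFrequency{A}\) and the cost \(\sum_if_ih_i\) by one aggregation.
\item Derive \(\AveragingDepth\) locally.
\item Note that \(\Delta=h_{\max}-h_{\min}\leq\Delta_\lambda\). This follows from Corollary~\ref{cor:bounded-ratio-depth-range}, or directly from Lemma~\ref{lem:pairwise-depth-bound} applied to a deepest leaf of frequency at least \(f_2\). Hence \(2^{\Delta}=O(\lambda^{\log_\varphi2})\).
\item The number of targets \(q=2^{\AveragingDepth}\) is at most \(2^{h_{\min}}\leq m\).
\end{itemize}

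\textbf{Step 2: building the residual instances for all \(r\).}
\begin{itemize}
\item Sort the leaves once by density \(f_i2^{h_i}\) and compute prefix sums of scaled Kraft values and of frequencies.
\item Within each depth group (at most \(\Delta+1\) groups), compute the rank of each leaf in frequency order.
\item For each \(r\le q\), find the greedy cutoff by a batched search of \(rc\) in the Kraft prefix sums, done as a merge by sorting the \(q\) queries together with the leaves.
\item For each depth group, find the within-group prefix length \(t\) at that cutoff, again by sorted batched search.
\item By Lemma~\ref{lem:fractional-proximity} and the prefix-within-group argument, keep a window of \(O(2^\Delta)\) leaves around \(t\) in each group.
\item The fixed-selected part of group \(g\) contributes the prefix sums of frequency and Kraft value below the window, which are available from per-group prefix sums.
\item Each \(r\) therefore issues \(O((\Delta+1)2^\Delta)\) record requests. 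Over all \(r\le q\leq m\) this is \(N=O(m(1+\log\lambda)\lambda^{\log_\varphi2})\).
\item Route the requests by sorting, so that duplicate requests for the same leaf are handled as in Lemma~\ref{lem:fast-table-merge}.
\end{itemize}

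\textbf{Step 3: solving locally and assembling the output.}
\begin{itemize}
\item Each machine holding one residual instance of size \(O((1+\log\lambda)\lambda^{\log_\varphi2})\) runs linear-time Package-Merge locally to obtain \(P(r)\).
\item Differences \(w_r=P(r)-P(r-1)\) need one shifted join, done by sorting.
\item Pair sums give the depth-\((\AveragingDepth-1)\) frequencies, and a global minimum decides \(h^*\).
\item Counting \(s\), forming \(\lfloor s/2\rfloor\) parents and evaluating \(\sum_if_i(h_i-h^*)+\sum_{j\le2\lfloor s/2\rfloor}w_j\) are scans and aggregations.
\end{itemize}
All of this takes \(O(1)\) primitives, each costing \(O(1+\log_\lambda m)\) rounds.

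\textbf{Expected main obstacle.} The delicate point is Step 2: the per-\(r\) windows must be consistent across groups. In particular, the fixed-selected leaves must be exactly the group prefixes, so that the residual target is a correct nonnegative integer and the residual optimum plus the fixed cost equals \(P(r)\). The approach is as follows.
\begin{itemize}
\item Take \(S\) as in Lemma~\ref{lem:fractional-proximity}, modified to be a prefix in each group. This modification does not increase \(|R\triangle S|\).
\item The prefix length of \(S\) in each group then differs from \(t\) by at most \(2^{\Delta+1}-2\), so \(S\) agrees with the fixed choices outside the windows.
\item Consequently the residual problem restricted to the windows has optimum exactly \(P(r)\) minus the fixed cost.
\item Conversely, any residual solution combined with the fixed choices is feasible for \(P(r)\), giving the reverse inequality.
\end{itemize}
The partially selected leaf is included in its group's window, which handles the boundary case.
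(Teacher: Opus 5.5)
Your proposal is correct and follows the paper's proof essentially step by step. The paper uses the same ingredients:
\begin{itemize}
\item windows of $O(2^\Delta)$ leaves in each of the $\Delta+1$ depth groups, with $2^\Delta\le 4\lambda^{\log_\varphi 2}$;
\item at most $q=2^{\hat h}\le m$ targets, giving $O(m(1+\log\lambda)\lambda^{\log_\varphi 2})$ records in total;
\item a constant number of sorting, scan and routing primitives, each costing $O(1+\log_\lambda m)$ rounds, followed by local Package-Merge solves.
\end{itemize}
Your explicit argument that an optimal set, normalized to a prefix in each group, agrees with the fixed choices outside the windows is the same reasoning the paper gives in the text just before the lemma.
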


\begin{proof}
We first bound the memory needed to compute one value \(P(r)\).
Lemma~\ref{lem:fractional-proximity} allows us to retain
\(O(2^\Delta)\) leaves from each of the \(\Delta+1\) depth groups.
The resulting residual instance therefore contains
\(O((\Delta+1)2^\Delta)\) leaves.
{\color{black}
The proof of Corollary~\ref{cor:bounded-ratio-depth-range} gives
\(\Delta\leq\lceil\log_\varphi\lambda\rceil+1\). Hence
\[
 2^\Delta
 \leq 2^{\lceil\log_\varphi\lambda\rceil+1}
 \leq 4\lambda^{\log_\varphi2}.
\]
Together with \(\Delta=O(1+\log\lambda)\), this bounds the number
of retained leaves by
\(O((1+\log\lambda)\lambda^{\log_\varphi2})\), which fits on one
machine with the stated local memory.
} Package-Merge
solves this instance in time linear in its number of leaves.

We only need to compute $P(1),\ldots,P(q)$ at depth
$\AveragingDepth$, where $q=2^{\AveragingDepth}$.
The averaging argument determines this depth directly
from $\TotalFrequency{A}$ and $h_{\min}$, without requiring
any additional residual instances.

By Kraft equality,
$q\leq 2^{h_{\min}}\leq m$.
Since each value of $P(r)$ requires
$O((\Delta+1)2^\Delta)$ retained records,
the total number of records is
\[
O\bigl(q(\Delta+1)2^\Delta\bigr)
=O\bigl(m(1+\log\lambda)\lambda^{\log_\varphi 2}\bigr).
\]

Since each local solve takes linear time, this also bounds their total
computation. \textcolor{black}{Let
\(\RecordCount{}=O(m(1+\log\lambda)\lambda^{\log_\varphi2})\) be the number
of input, query, and retained-leaf records. Sorting, scans, and batched
routing use \(O(1+\log_\lambda m)\) rounds, with at most \(O(\RecordCount{})\)
records and \(O(\RecordCount{}\log(2\RecordCount{}))\) computation per round. Since
\(\log(2\RecordCount{})=O(\log(2m\lambda))\), these operations and the local
solves satisfy the explicit total-resource bound in the lemma.}

{\color{black}
Finally, we bound the number of rounds. We compute \(\TotalFrequency{A}\) and
\(h_{\min}\) by reductions and choose \(\AveragingDepth\). After sorting the
leaves, prefix sums and batched searches locate the cutoff windows,
and routing sends each residual instance to its machine. These
operations take \(O(1+\log_\lambda m)\) \MPC rounds with the stated
resources~\cite{goodrich2011sorting}. All residual instances are then
solved locally in parallel. Pair sums, the choice between \(\AveragingDepth-1\)
and \(\AveragingDepth\), and the final frequencies and merge cost fit within the
same round bound. Thus the overall bound remains
\(O(1+\log_\lambda m)\).
}

\end{proof}

%% file: Sections/iterative-state-example-figure.tex
\begin{figure}[t]
\centering
\begin{tikzpicture}[x=1cm,y=0.85cm,
  v/.style={circle,draw=black,fill=white,minimum size=6mm,inner sep=1pt,font=\small},
  done/.style={v,fill=black!12,draw=black},
  frontier/.style={line width=1.2pt,draw=black},
  pending/.style={draw=black,dashed}]
\draw[black!25,dashed] (-0.7,-2) -- (8,-2);
\node[anchor=west,font=\scriptsize,text=black] at (8.1,-2) {$h_{\min}=2$};
\node[v] (root) at (3.5,0) {28};
\node[done,frontier] (n11) at (1,-1) {11};
\node[v] (n17) at (5.5,-1) {17};
\node[v] (f5b) at (0,-2) {5};
\node[done] (n6) at (2,-2) {6};
\node[done,frontier] (n8) at (4.5,-2) {8};
\node[done,frontier] (n9) at (7,-2) {9};
\node[v] (f3a) at (1.4,-3) {3};
\node[v] (f3b) at (2.6,-3) {3};
\node[done] (n4) at (3.9,-3) {4};
\node[v] (f4a) at (5.1,-3) {4};
\node[v] (f4b) at (6.4,-3) {4};
\node[v] (f5a) at (7.6,-3) {5};
\node[v] (f2a) at (3.3,-4) {2};
\node[v] (f2b) at (4.5,-4) {2};
\foreach \p/\c in {root/n11,root/n17,n17/n8,n17/n9}
  \draw[pending] (\p) -- (\c);
\foreach \p/\c in {n11/f5b,n11/n6,n6/f3a,n6/f3b,n8/n4,n8/f4a,n9/f4b,n9/f5a,n4/f2a,n4/f2b}
  \draw[black!65!black] (\p) -- (\c);
\end{tikzpicture}
\caption{Iterative-State Reconstruction with $\tau=6$.
Shaded internal nodes represent completed merges; thick outlines mark
the \textcolor{black}{frequencies} $8,9,11$. Dashed edges belong to merges not yet
performed. The horizontal line marks the minimum leaf depth.}
\label{fig:iterative-state-example}
\end{figure}

%% file: Sections/putting-pieces-together.tex
\section{Putting the Pieces Together}
\label{sec:putting-pieces-together}

In this section, we combine Bounded-Ratio Huffman Coding and
Iterative-State Reconstruction to solve Huffman coding for
arbitrary positive integer frequencies.

Our algorithm processes the frequencies in batches, starting
with the smallest ones. For each batch, Bounded-Ratio Huffman
Coding computes an optimal depth assignment.
Iterative-State Reconstruction then uses these depths to
determine which Huffman merges can be performed before
involving the remaining input frequencies.
We add the cost of these merges to a running total and
replace the processed batch with the frequencies remaining
after the merges.
We repeat this process until the remaining frequencies form
a bounded-ratio instance, which we solve directly.

We first analyze our algorithm for a list of $m$
frequencies with total sum $n$. The complexity of
the two subtasks depends on $m$, while the sum of
the frequencies bounds the number of iterations.
We then introduce a preprocessing step that performs
Huffman merges on small frequencies, reducing the
number of frequencies passed to the main algorithm.
We use it to obtain nearly linear total computation
in the text length $n$, as stated in
Theorem~\ref{thm:main-result}.

Fix \(0<\epsilon<1\), and set
\[
 \lambda=\max\{2,m^{\epsilon/16}\},\qquad
 \eta=\max\{2,m^{\epsilon/4}\}.
\]
These parameters remain fixed throughout the algorithm.
We maintain a multiset $B$ of frequencies and an accumulated
merge cost $\AccumulatedMergeCost$, initially set to $A$
and zero, respectively.

In each iteration, let $k=|B|$ and sort the current
frequencies as $f_1\leq\cdots\leq f_k$.
We choose the largest prefix $f_1,\ldots,f_j$
satisfying $f_j\leq\lambda f_2$ and solve it using
Bounded-Ratio Huffman Coding.

If $j=k$, the entire remaining instance has been solved.
We return its optimal cost plus $\AccumulatedMergeCost$.
Otherwise, we invoke Iterative-State Reconstruction
with threshold $\tau=f_{j+1}$.
It returns the frequencies remaining after the completed
Huffman merges within the prefix, together with their
total cost. We replace the prefix in $B$ by these
frequencies, add the returned cost to
$\AccumulatedMergeCost$, and proceed to the next iteration.

Algorithm~\ref{alg:complete-huffman} summarizes the procedure.
Throughout the analysis, $m$ denotes the original number
of frequencies, while $k$ is the number in the current iteration.

\begin{algorithm}[t]
\caption{Huffman coding from frequencies}
\label{alg:complete-huffman}
\begin{algorithmic}[1]
\Require Positive integer frequencies \(A\), a size bound \(m\geq\max\{2,|A|\}\), and \(0<\epsilon<1\)
\Ensure The optimum encoded length
\State \(\lambda\gets\max\{2,m^{\epsilon/16}\}\);
  \(\eta\gets\max\{2,m^{\epsilon/4}\}\)
\State \(\AccumulatedMergeCost\gets0\); \(B\gets A\)
\While{\(|B|\geq2\)}
  \State Sort \(B\) as \(f_1\leq\cdots\leq f_k\)
  \State \(j\gets\max\{i\in\{2,\ldots,k\}:f_i\leq\lambda f_2\}\)
  \State Apply Bounded-Ratio Huffman Coding with \(\eta\) to \((f_1,\ldots,f_j)\)
  \State Let \(v\) be its optimum and \((h_1,\ldots,h_j)\) its depth assignment
  \If{\(j=k\)}
    \State \Return \(\AccumulatedMergeCost+v\)
  \EndIf
  \State \(\tau\gets f_{j+1}\)
  \State Apply Iterative-State Reconstruction to the prefix, its depths, and \(\tau\)
  \State Let \(D\) be the returned multiset and \(\RecoveredMergeCost\) the returned merge cost
  \State \(\AccumulatedMergeCost\gets\AccumulatedMergeCost+\RecoveredMergeCost\); \(B\gets D\uplus\{f_{j+1},\ldots,f_k\}\)
\EndWhile
\State \Return \(\AccumulatedMergeCost\)
\end{algorithmic}
\end{algorithm}

\textcolor{black}{Note that, for frequency input, local and total memory are measured in words of $O(\log n)$ bits, where $n$ is the sum of the frequencies. Arithmetic and comparisons on these words take constant time.}

The resource bounds are stated in Theorem~\ref{thm:frequency-input-mpc}.
For later use, we allow \(m\) to be an upper bound on the input size;
the same fixed parameter choices and analysis apply.

\begin{theorem}
\label{thm:frequency-input-mpc}
For every \(0<\epsilon<1\), Huffman coding on at most \(m\geq2\)
positive integer frequencies of total sum \(n\) can be solved in
\[
 O\!\left(\epsilon^{-3}
 \left(1+\frac{\log n}{\log m}\right)\log(2+\log m)\right)
\]
\MPC rounds using \(O(m^\epsilon\log(2m))\) local memory per machine.
The total memory is 
\[
 O\!\left(m^{1+\epsilon}\epsilon^{-2}
 \log^2(2m)\log(2+\log m)\right),
\]
and the total computation is 
\[
 O\!\left(m^{1+\epsilon}\epsilon^{-3}
 \left(1+\frac{\log n}{\log m}\right)
 \log^2(2m)\log(2+\log m)\right).
\]
\end{theorem}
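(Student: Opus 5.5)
The proof has three parts: correctness of Algorithm~\ref{alg:complete-huffman}, a bound on the number of iterations, and substitution of the fixed parameters into Lemma~\ref{lem:bounded-ratio-huffman-formal} and Lemma~\ref{lem:iterative-state-reconstruction-formal}, summing over iterations.

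\textbf{Correctness.} We maintain the invariant that $\AccumulatedMergeCost$ plus the optimum for $B$ equals the optimum for $A$, and that $|B|\leq m$ throughout. In each iteration, Iterative-State Reconstruction returns the exact state Huffman's algorithm reaches on the prefix $f_1,\ldots,f_j$ just before a frequency exceeding $\tau=f_{j+1}$ would participate. Every merge performed up to that point uses two frequencies at most $\tau$. Since all suffix frequencies are at least $\tau$, these merges are also valid Huffman merges for the whole list $B$, with ties broken consistently. Hence replacing the prefix by $D$ and adding $\RecoveredMergeCost$ preserves the invariant. Merges never increase the number of frequencies, so $|B|\leq m$ and the parameters $\lambda,\eta$ stay valid. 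When $j=k$, Bounded-Ratio Huffman Coding returns the exact optimum of $B$.

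\textbf{Iteration count.} By maximality of $j$, we have $\tau=f_{j+1}>\lambda f_2$. At the stopping state, at most one returned frequency is at most $\tau$; otherwise the two smallest would still be merged, as the stopping-state description shows. Together with the untouched suffix, whose frequencies are at least $\tau$, this means the new second-smallest frequency is at least $\tau>\lambda f_2$. I need to check the edge case where $D$ has a single element. In that case the new $f_2$ is $f_{j+1}\geq\tau$ as well, so the bound still holds.

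Since $f_2\geq1$ initially and $f_2\leq n$ always, the number of iterations is at most $1+\log_\lambda n=O\bigl(1+\tfrac{\log n}{\epsilon\log m}\bigr)$ when $\lambda=m^{\epsilon/16}$. When $\lambda=2$ we have $m$ constant, and the bound is $O(\log n)$, which matches the theorem.

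\textbf{Per-iteration resources.} With $\eta=m^{\epsilon/4}$, we get $1+\log_\eta m=O(1/\epsilon)$, and $\log(2+\log\lambda)\leq\log(2+\log m)$. So Bounded-Ratio Huffman Coding uses $O(\epsilon^{-2}\log(2+\log m))$ rounds, and Iterative-State Reconstruction uses $O(1+\log_\lambda m)=O(1/\epsilon)$ rounds. Multiplying by the iteration count gives the claimed $\epsilon^{-3}$ round bound; one factor of $\epsilon^{-1}$ here is absorbed into the $\log n/\log m$ term.

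For local memory, use $\lambda^{2\log_\varphi2}\leq\lambda^{2.89}=m^{0.181\epsilon}$. Then $\eta\lambda^{2\log_\varphi 2}\leq m^{\epsilon/4+0.19\epsilon}\leq m^{\epsilon}$. The reconstruction memory is $(1+\log\lambda)\lambda^{\log_\varphi2}=O(m^{\epsilon}\log(2m))$.

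For total memory, use $\lambda^{4\log_\varphi2}\leq m^{0.361\epsilon}$, so $\eta\lambda^{4\log_\varphi 2}\leq m^{\epsilon}$. Each iteration then needs $O(m^{1+\epsilon}\epsilon^{-2}\log(2m\eta\lambda)\log(2+\log m))$ memory, and $\log(2m\eta\lambda)=O(\log 2m)$. Total memory is a simultaneous bound, so it is one iteration's worth, reusable across iterations. Total computation multiplies this by the iteration count. The reconstruction terms are dominated.

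\textbf{Main obstacle.} The delicate step is the iteration bound: justifying that the stopping state leaves at most one frequency $\leq\tau$, including ties equal to $\tau$, and that the prefix merges agree with a global Huffman execution. I would argue this directly from the stopping-state construction, which pairs the first $s$ node frequencies at depth $h^*$ and leaves at most one of them unpaired. The remaining work is careful constant bookkeeping of the exponents $\log_\varphi 2\approx1.44$.
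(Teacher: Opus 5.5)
Your proposal is correct and follows essentially the same route as the paper. Correctness uses the same invariant and the same argument that prefix merges are valid for the whole list. The iteration count comes from the second-smallest frequency growing by a factor greater than $\lambda$. Resources come from substituting $\lambda=\max\{2,m^{\epsilon/16}\}$ and $\eta=\max\{2,m^{\epsilon/4}\}$ into Lemmas~\ref{lem:bounded-ratio-huffman-formal} and~\ref{lem:iterative-state-reconstruction-formal}, reusing memory across iterations and multiplying computation by the iteration count.

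Two small cleanups. First, instead of saying $m$ is constant when $\lambda=2$ (not justified once $\epsilon$ is tracked explicitly), use $\log\lambda\geq(\epsilon/16)\log m$; this covers both cases and gives $1+\log_\lambda n=O(\epsilon^{-1}(1+\log n/\log m))$ directly. Second, the reconstruction contribution $O(m^{1+\epsilon}\epsilon^{-1}\log^2(2m))$ is where the $\log^2(2m)$ factor in the stated bounds comes from, so it should be carried through rather than called dominated by the bounded-ratio term.
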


\begin{proof}
We first prove that the algorithm returns the optimal Huffman cost.
Our argument maintains the following invariant: the accumulated
merge cost plus the optimal cost of the remaining frequencies
equals the optimal cost of the original input.
This holds initially, since no merges have been performed.

Consider an iteration in which the chosen prefix does not
contain all frequencies. Let $\tau$ be the smallest frequency
outside the prefix. Iterative-State Reconstruction performs
valid Huffman merges on the prefix, stopping before a merge
would involve a frequency greater than $\tau$.
Every merge performed before this point combines the two
smallest frequencies in the prefix, both at most $\tau$.
Since all frequencies outside the prefix are at least $\tau$,
these are also the two smallest frequencies in the full
instance, with ties broken in favor of the prefix.
Thus, the same merges are valid for the full instance.

Each such merge preserves the optimal Huffman cost after
adding its cost to the accumulated total. The algorithm
therefore maintains the invariant when it replaces the
prefix with the remaining frequencies.
When the entire list satisfies the bounded-ratio condition,
we compute its optimal cost and add it to the accumulated
total. If only one frequency remains, the accumulated
cost is already optimal.

{\color{black}
\paragraph{Number of rounds.}
We first bound the number of iterations.
By the maximality of the chosen prefix, we have
$\tau>\lambda f_2$. After reconstruction, at most one
frequency is smaller than $\tau$, while all frequencies
outside the chosen prefix are at least $\tau$.
Hence, if another iteration is needed, the second-smallest
frequency increases by a factor greater than $\lambda$.
Since the frequencies remain positive and their sum is $n$,
the second-smallest frequency cannot exceed $n$.
The number of iterations is therefore
$O(1+\log_\lambda n)$.
Recall that $\lambda=\max\{2,m^{\epsilon/16}\}$.
Using $\log\lambda\geq(\epsilon/16)\log m$, we obtain
\[
1+\log_\lambda n
\leq 1+\frac{16\log n}{\epsilon\log m}
=O\!\left(
\epsilon^{-1}\left(1+\frac{\log n}{\log m}\right)
\right).
\]

We next bound the rounds needed for one iteration.
There are at most $m$ current frequencies.
Our choices of $\eta$ and $\lambda$ give
\[
1+\log_\eta m=O(\epsilon^{-1}),
\qquad
1+\log_\lambda m=O(\epsilon^{-1}).
\]
Moreover, since $\lambda=O(m^{\epsilon/16})$,
\[
\log(2+\log\lambda)=O(\log(2+\log m)).
\]
By Lemma~\ref{lem:bounded-ratio-huffman-formal},
Bounded-Ratio Huffman Coding takes
\[
O\!\left(
(1+\log_\eta m)^2\log(2+\log\lambda)
\right)
=
O\!\left(\epsilon^{-2}\log(2+\log m)\right)
\]
rounds.
By Lemma~\ref{lem:iterative-state-reconstruction-formal},
Iterative-State Reconstruction takes
$O(1+\log_\lambda m)=O(\epsilon^{-1})$ rounds.
Sorting the current frequencies and updating the
accumulated cost also take $O(\epsilon^{-1})$ rounds.
Thus, each iteration takes
$O(\epsilon^{-2}\log(2+\log m))$ rounds.

Combining the number of iterations with the rounds
per iteration gives the total round complexity
\[
O\!\left(
\epsilon^{-3}
\left(1+\frac{\log n}{\log m}\right)
\log(2+\log m)
\right),
\]
as claimed.

\paragraph{Local memory.}
Let $\alpha=\log_\varphi 2<2$.
We bound the local memory required by each subtask.
Since $m\geq2$, our parameter choices give
$\eta=O(m^{\epsilon/4})$ and
$\lambda=O(m^{\epsilon/16})$.

By Lemma~\ref{lem:bounded-ratio-huffman-formal},
Bounded-Ratio Huffman Coding uses
\begin{align*}
	O(\eta\lambda^{2\alpha})
	&=O\!\left(m^{(1/4+\alpha/8)\epsilon}\right)\\
	&=O(m^{\epsilon/2}),
\end{align*}
where the last inequality follows from $\alpha<2$.

Similarly, by Lemma~\ref{lem:iterative-state-reconstruction-formal},
Iterative-State Reconstruction uses
\begin{align*}
	O((1+\log\lambda)\lambda^\alpha)
	&=O\!\left(m^{\alpha\epsilon/16}\log(2m)\right)\\
	&=O\!\left(m^{\epsilon/8}\log(2m)\right).
\end{align*}

Both subtasks therefore fit within the claimed
$O(m^\epsilon\log(2m))$ local memory per machine.
Sorting the at most $m$ frequencies can also be performed
using standard MPC sorting algorithms with this local memory.
Updating the accumulated cost requires only a parallel sum
and a constant number of additional values per machine,
which fit within the same bound.
\paragraph{Total memory and computation.}
We first bound the memory and computation required in one
iteration. By Lemma~\ref{lem:bounded-ratio-huffman-formal},
the total memory and computation of a bounded-ratio call
are each bounded by
\begin{align*}
	&O\!\left(
	m\eta\lambda^{4\alpha}\epsilon^{-2}
	\log(2m)\log(2+\log m)
	\right)\\
	&\qquad=
	O\!\left(
	m^{1+(1+\alpha)\epsilon/4}\epsilon^{-2}
	\log(2m)\log(2+\log m)
	\right)\\
	&\qquad=
	O\!\left(
	m^{1+\epsilon}\epsilon^{-2}
	\log(2m)\log(2+\log m)
	\right).
\end{align*}
Here we substitute the chosen values of $\eta$ and
$\lambda$ and use $\alpha<2$.
We also use
$\log(2m\eta\lambda)=O(\log(2m))$,
which follows from $\epsilon<1$.

Similarly, Lemma~\ref{lem:iterative-state-reconstruction-formal}
bounds the total memory and computation of reconstruction by
\begin{align*}
	O\!\left(
	m\lambda^\alpha\epsilon^{-1}\log^2(2m)
	\right)
	&=O\!\left(
	m^{1+\alpha\epsilon/16}\epsilon^{-1}
	\log^2(2m)
	\right)\\
	&=O\!\left(
	m^{1+\epsilon}\epsilon^{-1}
	\log^2(2m)
	\right),
\end{align*}
where the last bound again uses $\alpha<2$.

Including sorting and updates to the current frequency list,
the total memory and computation of one iteration are each
bounded by
\[
O\!\left(
m^{1+\epsilon}\epsilon^{-2}
\log^2(2m)\log(2+\log m)
\right).
\]
Since memory can be reused between iterations, this also
bounds the total memory of the algorithm.

For total computation, we multiply the per-iteration bound
by the number of iterations established above. This gives
\[
O\!\left(
m^{1+\epsilon}\epsilon^{-3}
\left(1+\frac{\log n}{\log m}\right)
\log^2(2m)\log(2+\log m)
\right),
\]
as claimed.

}

\end{proof}

{\color{black}
For a text of length $n$, the character frequencies sum to $n$,
and there are at most $n$ distinct characters.
Theorem~\ref{thm:frequency-input-mpc} therefore gives
Corollary~\ref{cor:text-input-direct}.

\begin{corollary}
\label{cor:text-input-direct}
For every \(0<\epsilon<1\), Huffman coding for a text of length
\(n\geq2\) can be solved in
\(O(\epsilon^{-3}\log(2+\log n))\) \MPC rounds using
\(O(n^\epsilon\log(2n))\) local memory per machine.
The total memory is
\[
 O\!\left(n^{1+\epsilon}\epsilon^{-2}
 \log^2(2n)\log(2+\log n)\right),
\]
and the total computation is
\[
 O\!\left(n^{1+\epsilon}\epsilon^{-3}
 \log^2(2n)\log(2+\log n)\right).
\]
\end{corollary}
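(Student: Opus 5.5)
The plan is to obtain Corollary~\ref{cor:text-input-direct} by invoking Theorem~\ref{thm:frequency-input-mpc} directly with the size bound $m:=n$. The only new ingredient is a preprocessing phase that turns the text into its frequency list.

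\textbf{Step 1: computing the frequencies.} From the text $\Text$ of length $n$, we compute the frequency list $\Frequency{1},\ldots,\Frequency{\AlphabetSize}$. To do this, we sort the $n$ characters with the standard \MPC sorting primitive~\cite{goodrich2011sorting}, using $O(n^\epsilon)$ local memory. We then use a segmented scan to count the length of each run of equal characters. Both operations take $O(1+\log_{n^\epsilon} n)=O(\epsilon^{-1})$ rounds, with $O(n)$ total memory per round and $O(n\log n\cdot\epsilon^{-1})$ total computation. These costs are dominated by the claimed bounds. If $\AlphabetSize=1$, we output $0$, as agreed in the preliminaries.

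\textbf{Step 2: applying the theorem.} The frequencies are positive integers with $\sum_i\Frequency{i}=n$ and $\AlphabetSize\leq n$. Theorem~\ref{thm:frequency-input-mpc} explicitly allows $m$ to be an upper bound on the number of frequencies, so we apply it with $m=n\geq2$ and total sum $n$. The factor $1+\log n/\log m$ becomes $2=O(1)$. The round bound $O(\epsilon^{-3}(1+\log n/\log m)\log(2+\log m))$ therefore becomes $O(\epsilon^{-3}\log(2+\log n))$. The local memory $O(m^\epsilon\log(2m))$, total memory $O(m^{1+\epsilon}\epsilon^{-2}\log^2(2m)\log(2+\log m))$, and total computation transform in the same way by substituting $m=n$.

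\textbf{Step 3: word size.} The theorem measures memory in words of $O(\log n)$ bits, where $n$ is the sum of the frequencies. Here that sum equals the text length, so the word model is consistent with the text-input setting.

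The only point needing any care is the preprocessing in Step 1. It must fit in $O(n^\epsilon\log(2n))$ local memory and the stated round budget, and this follows from standard sorting and scan primitives. The rest is direct substitution into the theorem.
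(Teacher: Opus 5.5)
Your proposal is correct and matches the paper's proof: compute the character frequencies by sorting and aggregation in $O(\epsilon^{-1})$ rounds, which costs less than the stated bounds. Then apply Theorem~\ref{thm:frequency-input-mpc} with $n$ as the size bound, so that $1+\log n/\log m=2$ and the remaining bounds follow by substituting $m=n$.
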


\begin{proof}
We first compute the character frequencies by sorting and aggregation.
These operations take $O(\epsilon^{-1})$ rounds using
$O(n^\epsilon)$ local memory and
$O(n\epsilon^{-1}\log(2n))$ total memory and computation.
We then apply Theorem~\ref{thm:frequency-input-mpc}, using $n$
as an upper bound on the number of frequencies.
The factor involving the total frequency becomes
$1+\log n/\log n=2$, giving the stated bounds.
The initial aggregation also fits within these bounds.
\end{proof}

Corollary~\ref{cor:text-input-direct} gives the desired round bound,
but its total memory and computation still contain the factor
$n^{1+\epsilon}$. We next remove the extra polynomial factor
by preprocessing the small frequencies, obtaining nearly linear
total memory and computation in $n$.
}
We first perform Huffman
merges on small frequencies to reduce the size of the
instance. We then apply Theorem~\ref{thm:frequency-input-mpc}
to the remaining frequencies and add the cost of the
merges performed during preprocessing.

\paragraph{Compressing small frequencies.}
\textcolor{black}{The idea is to reduce the number of frequencies before}
running the main algorithm. We perform Huffman merges
until at most one frequency smaller than
$\lceil n^\epsilon\rceil$ remains.
Rather than processing these merges individually,
we group equal frequencies and process them together.

For each integer $x<\lceil n^\epsilon\rceil$, let
$\FrequencyMultiplicity{x}$ denote the number of input
frequencies equal to $x$.
We gather these $\lceil n^\epsilon\rceil-1$ counters
on one machine and process them in increasing order,
starting from $x=1$.
The machine also maintains at most one \emph{carry},
representing an unpaired frequency smaller than
the current value of $x$.

When processing frequency $x$, we first merge the carry
with one copy of $x$, if both are present.
We then pair the remaining copies of $x$.
If there are $a$ such copies, these pairs produce
$\lfloor a/2\rfloor$ new frequencies equal to $2x$
and contribute $2x\lfloor a/2\rfloor$ to the
accumulated merge cost.
Any remaining unpaired copy of $x$ becomes the new carry.

Each newly created frequency smaller than
$\lceil n^\epsilon\rceil$ is added to its corresponding
counter and will be processed later.
Frequencies at least $\lceil n^\epsilon\rceil$
are kept for the main algorithm.
We also retain the final carry, if one exists.

These merges follow Huffman's algorithm.
Before processing $x$, at most one frequency is smaller
than $x$. We merge it with the smallest available
frequency and then merge the remaining copies of $x$
in pairs. All newly created frequencies are greater
than $x$, allowing us to process them at later steps.

After processing all counters, at most one remaining
frequency is smaller than $\lceil n^\epsilon\rceil$.
Every other frequency is at least $\lceil n^\epsilon\rceil$,
and their total sum is $n$.
Thus, the remaining instance contains at most
\[
\frac{n}{\lceil n^\epsilon\rceil}+1
\leq n^{1-\epsilon}+1
\]
frequencies.
We pass this smaller instance to the main algorithm
and add the preprocessing merge cost to its answer. Lemma~\ref{lem:small-weight-preprocessing} establishes the
correctness of this preprocessing step and bounds its
resource requirements.
\begin{lemma}
\label{lem:small-weight-preprocessing}
For a text of length \(n\), the preprocessing with threshold
\(\lceil n^\epsilon\rceil\) returns a multiset \(A_0\) of \textcolor{black}{frequencies} and the total cost \(c_0\) of the merges performed, with the
following properties:
\begin{itemize}
\item The optimum encoded length for the input frequencies equals
\(c_0\) plus the optimum encoded length for \(A_0\).
\item Either \(|A_0|\leq1\), or the second-smallest \textcolor{black}{frequency} in
\(A_0\) is at least \(\lceil n^\epsilon\rceil\).
\item The number of remaining \textcolor{black}{frequencies} satisfies
\(|A_0|\leq n^{1-\epsilon}+1\).
\end{itemize}
The procedure takes \(\textcolor{black}{O(\epsilon^{-1})}\) \MPC rounds,
uses \(O(n^\epsilon)\) local memory, and has \(\textcolor{black}{O(n\epsilon^{-1}\log(2n))}\) total memory and
computation.
\end{lemma}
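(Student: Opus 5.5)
The proof has four parts: correctness, the second-smallest bound, the size bound, and the \MPC costs.

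\textbf{Correctness.} I would argue by induction over the sequential processing of the counters \(\FrequencyMultiplicity{1},\FrequencyMultiplicity{2},\ldots\) that every merge performed is a valid Huffman merge. The invariant is: when counter \(x\) is about to be processed, the current multiset contains at most one frequency smaller than \(x\) (the carry), and every other frequency is at least \(x\). Under this invariant the carry and one copy of \(x\) are the two smallest elements, so merging them is a Huffman step, with ties broken arbitrarily, which is allowed. The parent has frequency at least \(x+1\). Pairing the remaining copies of \(x\) is then a sequence of Huffman steps, because each parent \(2x\) exceeds \(x\) and so never interferes with the remaining copies. If the carry-plus-\(x\) parent is smaller than \(\lceil n^\epsilon\rceil\), it is added to its counter, which is processed later; otherwise it is set aside. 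Either way the invariant holds for \(x+1\). The standard fact that a Huffman merge of the two smallest frequencies preserves "accumulated cost plus optimum of the remainder'' then gives the first bullet. For the cost \(c_0\), each pair contributes its parent frequency, and the carry merge contributes its sum.

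\textbf{Second-smallest and size bounds.} When all counters below \(\lceil n^\epsilon\rceil\) have been processed, every frequency below that threshold has been consumed, except possibly the final carry. So at most one element of \(A_0\) is below \(\lceil n^\epsilon\rceil\), which is the second bullet. The total sum is still \(n\), so the elements at least \(\lceil n^\epsilon\rceil\) number at most \(n/\lceil n^\epsilon\rceil\le n^{1-\epsilon}\). Adding the carry gives the third bullet.

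\textbf{\MPC costs.} Computing the character frequencies and then the counters \(\FrequencyMultiplicity{x}\) for \(x<\lceil n^\epsilon\rceil\) uses sorting and aggregation. With \(O(n^\epsilon)\) local memory this takes \(O(\log_{n^\epsilon} n)=O(\epsilon^{-1})\) rounds and \(O(n\epsilon^{-1}\log(2n))\) total memory and computation. The \(\lceil n^\epsilon\rceil-1\) counters fit on one machine, which runs the carry/pairing scan locally. Each counter is handled in \(O(1)\) arithmetic, because pairing \(a\) copies is a single division; this is why grouping equal values matters. The newly created parents at least \(\lceil n^\epsilon\rceil\) can number up to \(\Theta(n^{1-\epsilon})\), so they cannot be listed on that machine. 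The machine instead outputs them as (value, multiplicity) pairs, at most one pair per counter. One further distribution round, again via sorting or prefix sums to assign output positions, expands these pairs into explicit records. Finally, the resulting list is concatenated with the original large frequencies, and \(c_0\) is computed with a parallel sum. All of this stays within \(O(\epsilon^{-1})\) rounds and the stated totals.

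\textbf{Main obstacle.} I expect the main obstacle to be keeping the single-machine simulation inside \(O(n^\epsilon)\) memory while its output can be large. The run-length output described above handles this. The other delicate point is the carry-plus-\(x\) parent that lands exactly at or above the threshold, which must be routed to the output rather than a counter. The invariant argument covers this, since that parent exceeds \(x\) in either case.
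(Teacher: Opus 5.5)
Your proposal is correct and follows the paper's proof essentially step by step: the same induction showing the preprocessing is a prefix of a valid Huffman execution, the same counting argument for the size and second-smallest bounds, and the same run-length encoding of the above-threshold parents, expanded afterwards via prefix sums (the paper stores one counter per value, using that all such parents lie below $2\lceil n^\epsilon\rceil$). One minor imprecision: processing a single counter $x$ can emit two above-threshold records, namely the carry parent and the parent $2x$, not just one; this does not affect the $O(n^\epsilon)$ bound.
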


\begin{proof}
When we begin processing a frequency $x$, all smaller
frequencies have already been merged, except possibly
one carry. If both the carry and a copy of $x$ exist,
they are the two smallest remaining frequencies, and
merging them is a valid Huffman step.
We then merge the remaining copies of $x$ in pairs.
Each pair consists of the two smallest remaining
frequencies and is therefore a valid Huffman merge.
Every new frequency is greater than $x$ and will be
processed at a later step.
Thus, processing each value of $x$ performs only valid
Huffman merges. By induction on $x$, the entire
preprocessing procedure is an initial sequence of
a valid Huffman execution.

Since the preprocessing performs a prefix of a valid Huffman
execution, we can complete that execution by running Huffman's
algorithm on \(A_0\). The merges already performed contribute
\(c_0\), and the remaining merges contribute the optimum encoded
length for \(A_0\).

After processing all \textcolor{black}{frequencies} below \(\lceil n^\epsilon\rceil\), at most the carry has \textcolor{black}{frequency}
smaller than \(\lceil n^\epsilon\rceil\).  Hence, if at least two \textcolor{black}{frequencies} remain, the
second-smallest one is at least \(\lceil n^\epsilon\rceil\).  The sum of the \textcolor{black}{frequencies} is
\(n\). Hence \textcolor{black}{
\begin{align*}
 |A_0|-1&\leq\frac{n}{\lceil n^\epsilon\rceil}\\
 &\leq\frac{n}{n^\epsilon}\\
 &=n^{1-\epsilon}.
\end{align*}}
Thus \(|A_0|\leq n^{1-\epsilon}+1\).

For the \MPC implementation, we first compute the character frequencies
and aggregate all frequencies below \(\lceil n^\epsilon\rceil\) into
\(\FrequencyMultiplicity{1},\ldots,\FrequencyMultiplicity{\lceil n^\epsilon\rceil-1}\).  Standard sorting, prefix-sum, and routing
primitives perform this aggregation~\cite{goodrich2011sorting}.
\textcolor{black}{With \(O(n^\epsilon)\) local memory, their round bound is
\begin{align*}
 O(1+\log_{n^\epsilon}n)
 &=O\!\left(1+\frac{\log n}{\epsilon\log n}\right)\\
 &=O(\epsilon^{-1}).
\end{align*}
These MPC primitives take $O(\epsilon^{-1})$ rounds,
with $O(n\log(2n))$ total computation per round.
Their total computation is therefore
$O(n\epsilon^{-1}\log(2n))$.
The $O(n^\epsilon)$ counters and the local simulation
fit within these resource bounds.
}

The compressed simulation processes each frequency below
$\lceil n^\epsilon\rceil$ in constant time, using
$O(n^\epsilon)$ local time and memory.

Every new frequency that reaches or exceeds the threshold
is smaller than $2\lceil n^\epsilon\rceil$.
We store the number of copies of each such frequency
using another $O(n^\epsilon)$ counters.

To construct the remaining instance, we use prefix sums
of these counters to determine where each copy should
appear in the output list.
We then add the original frequencies at least
$\lceil n^\epsilon\rceil$ and the final carry, if any,
and sort the combined list.
As shown above, the remaining instance contains
$O(n^{1-\epsilon})$ frequencies.
The entire procedure takes $O(\epsilon^{-1})$ \MPC rounds
and uses $O(n\epsilon^{-1}\log(2n))$ total
memory and computation.
\end{proof}

{\color{black}
We can now obtain Theorem~\ref{thm:main-result} as a consequence of
Theorem~\ref{thm:frequency-input-mpc} and
Lemma~\ref{lem:small-weight-preprocessing}. 
}

\begin{mainresultrestatement}
{\color{black}
For every \(0<\epsilon<1\), there is an \MPC algorithm that receives
a text \(\Text\) of length \(n\) and computes
\(\MinimumEncodedLength(\Text)\) in
\(O(\epsilon^{-3}\log(2+\log n))\) rounds using
\(O(n^\epsilon\log(2n))\) memory per machine. Its total memory and
total computation are each at most
\[
 O\!\left(n\epsilon^{-3}\log^2(2n)\log(2+\log n)\right).
\]
}
\end{mainresultrestatement}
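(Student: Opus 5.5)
The plan is to compose Lemma~\ref{lem:small-weight-preprocessing} with Theorem~\ref{thm:frequency-input-mpc}, but to run the main algorithm with a rescaled accuracy parameter so that its $m^{1+\epsilon'}$ total-resource term becomes linear in $n$. The steps are: count characters and preprocess with threshold $\lceil n^\epsilon\rceil$, then solve the reduced frequency instance with Theorem~\ref{thm:frequency-input-mpc}, and finally output $c_0$ plus the returned optimum. Correctness is immediate from the first bullet of Lemma~\ref{lem:small-weight-preprocessing}. If $|A_0|\le 1$, we output $c_0$ directly.

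First, we compute the character frequencies by sorting and aggregation. This takes $O(\epsilon^{-1})$ rounds, $O(n^\epsilon)$ local memory, and $O(n\epsilon^{-1}\log(2n))$ total work. We then apply Lemma~\ref{lem:small-weight-preprocessing}, obtaining $A_0$ and $c_0$ with $|A_0|\le n^{1-\epsilon}+1$ and total frequency still $n$, within the same resource bounds. Next, we apply Theorem~\ref{thm:frequency-input-mpc} to $A_0$ with size bound $m=\max\{2,\lceil n^{1-\epsilon}\rceil+1\}$ and parameter $\epsilon'=\epsilon/2$; any fixed constant fraction of $\epsilon$ would work. The resulting local memory is $O(m^{\epsilon'}\log(2m))=O(n^{\epsilon}\log(2n))$. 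The total memory is $O(m^{1+\epsilon/2}\epsilon^{-2}\log^2(2m)\log(2+\log m))$. Since $(1-\epsilon)(1+\epsilon/2)=1-\epsilon/2-\epsilon^2/2\le 1$, we have $m^{1+\epsilon/2}=O(n)$, so the total memory is $O(n\epsilon^{-2}\log^2(2n)\log(2+\log n))$.

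The main obstacle is the factor $1+\log n/\log m$ in the round and computation bounds. Here $\log m\ge(1-\epsilon)\log n-O(1)$, so this factor is $O(1/(1-\epsilon))$, which is constant only for $\epsilon$ bounded away from $1$. I would handle this by a case split. If $\epsilon\le 1/2$, the factor is $O(1)$. If $\epsilon>1/2$, I would instead run the whole pipeline with $\epsilon$ replaced by $1/2$: the local memory bound $n^{1/2}$ is within $n^\epsilon$, and the constants in the $\epsilon^{-k}$ factors are unaffected. A degenerate case remains when $m$ is tiny relative to $n$, for example when $n^{1-\epsilon}=O(1)$. The case split rules this out for $\epsilon\le1/2$ and large $n$, and for small $n$ everything is constant size.

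With this in place, the rounds are $O(\epsilon^{-3}\log(2+\log n))$, after absorbing the $O(\epsilon^{-1})$ preprocessing rounds. The total computation is $O(n\epsilon^{-3}\log^2(2n)\log(2+\log n))$, which dominates the $O(n\epsilon^{-1}\log(2n))$ cost of preprocessing. This matches the restated Theorem~\ref{thm:main-result}.
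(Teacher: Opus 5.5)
Your proposal is correct and follows the paper's proof. You compose Lemma~\ref{lem:small-weight-preprocessing} with Theorem~\ref{thm:frequency-input-mpc} run at parameter $\epsilon/2$, so that $m^{1+\epsilon/2}=O(n)$ and the factor $1+\log n/\log m$ is $O(1)$. The only difference is that the paper also preprocesses with exponent $\epsilon/2$ (threshold $\lceil n^{\epsilon/2}\rceil$), which gives $m_0\approx n^{1-\epsilon/2}$ and hence $\log m_0\geq\frac12\log n$ for every $\epsilon<1$, so your case split at $\epsilon=1/2$ is not needed.
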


\begin{proof}
{\color{black}
Apply Lemma~\ref{lem:small-weight-preprocessing} with exponent
\(\epsilon/2\). This computes the cost \(c_0\) of the initial
merges and a residual multiset \(A_0\) with
\[
 |A_0|\leq n^{1-\epsilon/2}+1.
\]
Use the size bound \(m_0=\lceil n^{1-\epsilon/2}\rceil+1\) and apply
Theorem~\ref{thm:frequency-input-mpc} to \(A_0\), again with exponent
\(\epsilon/2\). The sum of its frequencies is still \(n\). Adding
\(c_0\) to its optimum gives the optimum for the original text.

Since \(m_0=\Theta(n^{1-\epsilon/2})\), its logarithm is
\(O(\log(2n))\). Moreover,
\begin{align*}
 \log m_0&\geq(1-\epsilon/2)\log n\\
 &\geq\tfrac12\log n,
\end{align*}
where the last inequality uses \(\epsilon<1\). Therefore,
\begin{align*}
 1+\frac{\log n}{\log m_0}
 &\leq1+2\\
 &=3.
\end{align*}
Substituting this bound into the round guarantee of
Theorem~\ref{thm:frequency-input-mpc}, with exponent \(\epsilon/2\), gives
\begin{align*}
 &O\!\left((\epsilon/2)^{-3}
       \left(1+\frac{\log n}{\log m_0}\right)
       \log(2+\log m_0)\right)\\
 &\qquad{}=O\!\left((\epsilon/2)^{-3}\log(2+\log m_0)\right)\\
 &\qquad{}=O\!\left(\epsilon^{-3}\log(2+\log m_0)\right)\\
 &\qquad{}=O\!\left(\epsilon^{-3}\log(2+\log n)\right).
\end{align*}
Its local-memory bound becomes
\begin{align*}
 O(m_0^{\epsilon/2}\log(2m_0))
 &=O\!\left((n^{1-\epsilon/2})^{\epsilon/2}\log(2m_0)\right)\\
 &=O\!\left(n^{\epsilon/2-\epsilon^2/4}\log(2m_0)\right)\\
 &=O\!\left(n^{\epsilon/2}\log(2m_0)\right)\\
 &=O\!\left(n^{\epsilon/2}\log(2n)\right)\\
 &=O\!\left(n^\epsilon\log(2n)\right).
\end{align*}
Preprocessing adds \(O((\epsilon/2)^{-1})=O(\epsilon^{-1})\)
rounds and uses \(O(n^{\epsilon/2})\) local memory, both within
these bounds.

For total resources, we first bound the polynomial factor:
\begin{align*}
 m_0^{1+\epsilon/2}
 &=O\!\left((n^{1-\epsilon/2})^{1+\epsilon/2}\right)\\
 &=O\!\left(n^{(1-\epsilon/2)(1+\epsilon/2)}\right)\\
 &=O\!\left(n^{1-\epsilon^2/4}\right)\\
 &=O(n).
\end{align*}
The total-computation bound of Theorem~\ref{thm:frequency-input-mpc}
therefore simplifies as follows:
\begin{align*}
 &O\!\left(m_0^{1+\epsilon/2}(\epsilon/2)^{-3}
   \left(1+\frac{\log n}{\log m_0}\right)
   \log^2(2m_0)\log(2+\log m_0)\right)\\
 &\qquad{}=O\!\left(n(\epsilon/2)^{-3}
   \left(1+\frac{\log n}{\log m_0}\right)
   \log^2(2m_0)\log(2+\log m_0)\right)\\
 &\qquad{}=O\!\left(n\epsilon^{-3}
   \left(1+\frac{\log n}{\log m_0}\right)
   \log^2(2m_0)\log(2+\log m_0)\right)\\
 &\qquad{}=O\!\left(n\epsilon^{-3}
   \log^2(2m_0)\log(2+\log m_0)\right)\\
 &\qquad{}=O\!\left(n\epsilon^{-3}
   \log^2(2n)\log(2+\log n)\right).
\end{align*}
The total-memory guarantee is no larger. Preprocessing uses
\(O(n(\epsilon/2)^{-1}\log(2n))=O(n\epsilon^{-1}\log(2n))\)
total resources. Adding this contribution yields
\begin{align*}
 &O\!\left(n\epsilon^{-3}\log^2(2n)\log(2+\log n)
          +n\epsilon^{-1}\log(2n)\right)\\
 &\qquad{}=O\!\left(n\epsilon^{-3}\log^2(2n)\log(2+\log n)\right),
\end{align*}
which completes the proof.

}
\end{proof}

%% file: Sections/recovering-codewords.tex
\section{Recovering the Codewords}
\label{sec:recovering-codewords}

{\color{black}
{\color{black}
So far, our algorithm computes only the optimal encoded length.
We now explain how to recover the codewords themselves.

The main idea is to build and maintain a weighted tree that records
how frequencies are combined throughout the algorithm.
Initially, each input character is represented by a leaf.
In each iteration, Bounded-Ratio Huffman Coding determines
the depths of the items in the selected prefix. During
Iterative-State Reconstruction, we also retain the information
needed to identify which items belong to each new frequency.

We represent each new frequency by a node and connect it
to the items that form it. The weight of each edge is the
depth of the corresponding item relative to the new node.
Thus, an edge of weight $k$ records $k$ steps toward the root
in the binary coding tree.

As the algorithm proceeds, newly created frequencies
may be combined again, and we extend the weighted tree
accordingly. Eventually, every original character is
connected to the final root. The sum of the edge weights
along this path gives its final codeword length.
We then construct a prefix-free code from these lengths.
We show below that the lengths satisfy Kraft equality and that
their total encoded length equals the cost computed by the algorithm.
}

For example, take characters $a,b,c,d,e$ with frequencies
$(1,1,2,2,3)$. Bounded-Ratio Huffman Coding first assigns local
depth $2$ to each item in the prefix $a,b,c,d$. With threshold
$\tau=3$, Iterative-State Reconstruction groups $a,b$ into a node
$u$ and $c,d$ into a node $v$, both at local depth $1$. We therefore
record depth $2-1=1$ for each item within its group; $e$ remains
unchanged. The final Bounded-Ratio solve assigns depths $2,2,1$
to the remaining frequencies $u:2,e:3,v:4$, respectively. Character $a$ therefore has depth $1+2=3$, whereas $c$
has depth $1+1=2$. Figure~\ref{fig:codeword-recovery} shows the
connections and a code with the recovered lengths $(3,3,2,2,2)$.

\input{Sections/codeword-recovery-figure}

\paragraph{Constructing a prefix code from its lengths.}
Suppose we have recovered the final depth of every character.
Our goal is to assign each character a binary codeword of
the prescribed length.

We begin by sorting the characters by nondecreasing depth,
$\LeafDepth{1}\leq\cdots\leq\LeafDepth{m}$.
For each character $i$, we reserve an interval of length
$2^{-\LeafDepth{i}}$ inside $[0,1)$.
We place these intervals next to one another, starting
from zero. Since the depths satisfy Kraft equality,
the intervals together fill $[0,1)$.

The idea is to use each interval to determine its codeword.
A binary codeword of length $h$ identifies one of the
$2^h$ equal intervals of length $2^{-h}$ in $[0,1)$.
For example, the two-bit codewords
$\texttt{00},\texttt{01},\texttt{10},\texttt{11}$
correspond to
\[
[0,1/4),\quad [1/4,1/2),\quad
[1/2,3/4),\quad [3/4,1).
\]
We assign each character the codeword corresponding
to its reserved interval.

To make this assignment possible, each reserved interval
must start at a multiple of its own length.
This follows from the ordering of the depths.
For character $i$, every preceding interval has length
$2^{-\LeafDepth{j}}$, where
$\LeafDepth{j}\leq\LeafDepth{i}$.
Each such length is an integer multiple of
$2^{-\LeafDepth{i}}$.
The starting position of character $i$ is therefore
also a multiple of its interval length.

We can compute the codewords using prefix sums.
Let $\MaximumLeafDepth=\max_i\LeafDepth{i}$.
We measure all interval lengths in units of
$2^{-\MaximumLeafDepth}$, making them integers.
Character $i$ then has interval width
$2^{\MaximumLeafDepth-\LeafDepth{i}}$.
Its starting position and interval index are
\[
a_i=\sum_{j<i}2^{\MaximumLeafDepth-\LeafDepth{j}},
\qquad
k_i=\frac{a_i}{2^{\MaximumLeafDepth-\LeafDepth{i}}}.
\]
We obtain the codeword for character $i$ by writing
$k_i$ in binary using exactly $\LeafDepth{i}$ bits,
adding leading zeros when needed.
Figure~\ref{fig:codeword-intervals} illustrates
this construction.

Finally, the resulting code is prefix-free.
Each bit of a codeword chooses one half of the current
interval, starting from $[0,1)$.
Therefore, extending a codeword selects a smaller
interval contained in the original one.
If one codeword were a prefix of another, their
corresponding intervals would overlap.
Our reserved intervals are disjoint, ruling this out.
Every codeword has its prescribed length, and the
resulting code therefore has optimal encoded length.

\input{Sections/codeword-intervals-figure}

\paragraph{Bounded-Ratio Huffman Coding.}
We recover the local depths by following the minimizing choices
stored by Bounded-Ratio Huffman Coding. Whenever two table entries
are combined, we store the minimizing split of the parent's budget.
Starting from the optimum root entry, we use this split to select the
two child entries, then repeat until reaching the blocks. The minimizing counts stored in
the chosen block entries then determine a depth for every current
frequency, as described in Section~\ref{sec:bounded-ratio-formal}.
In a nonterminal
iteration, we pass these depths to Iterative-State Reconstruction.
In the final solve, they specify the connections from the remaining
items to the final root. We thus recover the depth assignment from the stored table entries,
without solving another optimization problem.

\paragraph{Iterative-State Reconstruction.}
{\color{black}
For each $P(r)$, we record how many items are selected
at each depth. Within each depth, items are ordered by
nondecreasing frequency, and an optimal selection takes
a prefix of this order.

We choose the optimal solutions consistently so that
the selection for $P(r-1)$ is contained in that for $P(r)$.
Therefore, the items belonging to node $r$ are exactly
those added when moving from $P(r-1)$ to $P(r)$.

More precisely, if $c_{r,h}$ denotes the number of items
selected at depth $h$ for $P(r)$, then node $r$ contains
items
\[
c_{r-1,h}+1,\ldots,c_{r,h}
\]
from that depth group.
Thus, storing these counts is enough to recover the
members of every reconstructed node.

To ensure nested optimal solutions, we number the items once.
Between equal-cost selections, we prefer the one containing the
smallest-numbered item on which they differ.
The Package-Merge construction in Lemma~\ref{lem:kraft-prefixes}
pairs items independently of $r$ and produces a common ordered
list of packages representing the nodes at the reconstruction depth.
An optimal solution for $P(r)$ selects the first $r$
packages in this list. Hence, increasing $r$ only
adds one package, and all previously selected items
remain selected.

We use the same tie-breaking rule in the fractional solution and
in the residual optimization problems. The exchange argument of
Lemma~\ref{lem:fractional-proximity} still applies, since an exchange
toward the fractional prefix either lowers the cost or improves
this preference. Thus the retained windows contain the chosen optimum.

}

\input{Sections/codeword-membership-figure}

\paragraph{Combining the iterations.}
We maintain a weighted tree that records how items are
combined across iterations. Whenever several items are
replaced by a reconstructed node, we connect them to
that node while keeping all earlier connections.
If an item has depth $h_i$ in the Bounded-Ratio solution
and its reconstructed node has depth $h$, we assign
weight $h_i-h$ to their connection.
Items that remain unchanged need no new connections.

If two reconstructed nodes are merged at the stopping
state, we record their connections to the resulting
parent in the same way.

We also record the merges performed during
small-frequency preprocessing. Each merge connects
two children to their parent with edges of weight one.
Since these merges are performed in batches, we store
the connections as ranges and recover the individual
connections using prefix sums and routing.

In the final iteration, we connect the remaining items
to the root using their depths in the final Bounded-Ratio
solution. If only one item remains, it is already the root.
Every original character now has a path to the root,
and the sum of the edge weights along this path gives
its final codeword length.

We can maintain this tree with $O(m)$ connections.
To compute the codeword lengths, we use pointer doubling,
which combines consecutive connections along each path.
Since the final depths are $O(\log n)$, this takes
$O(\log\log n)$ steps and
$O(\epsilon^{-1}\log\log n)$ \MPC rounds.
Finally, we construct the codewords using the sorting
and prefix-sum procedure described above.
The additional storage and computation fit within
the bounds of Theorem~\ref{thm:main-result}.
}
\FloatBarrier

%% file: Sections/codeword-recovery-figure.tex
\begin{figure}[!t]
\centering
{\color{black}
\begin{subfigure}[t]{0.48\textwidth}
\centering
\begin{tikzpicture}[x=1cm,y=0.9cm,font=\small,
 v/.style={circle,draw=black,fill=white,inner sep=2pt,minimum size=6mm},
 e/.style={draw=black},lab/.style={fill=white,inner sep=1pt,font=\footnotesize}]
\node[v] (r) at (0,0) {$r$};
\node[v] (u) at (-1.6,-1.2) {$u$};
\node[v] (v) at (0.6,-1.2) {$v$};
\node[v] (e) at (2.2,-1.2) {$e$};
\node[v,text=black] (a) at (-2.2,-2.4) {$a$};
\node[v] (b) at (-1,-2.4) {$b$};
\node[v] (c) at (0,-2.4) {$c$};
\node[v] (d) at (1.2,-2.4) {$d$};
\draw[e,black!65!black,thick] (r)--node[lab,left] {2}(u);
\draw[e] (r)--node[lab,right] {1}(v);
\draw[e] (r)--node[lab,above right] {2}(e);
\draw[e,black!65!black,thick] (u)--node[lab,left] {1}(a);
\draw[e] (u)--node[lab,right] {1}(b);
\draw[e] (v)--node[lab,left] {1}(c);
\draw[e] (v)--node[lab,right] {1}(d);
\node[font=\footnotesize] at (0,-3.3) {$\textcolor{black}{\LeafDepth{a}}=1+2=3,\quad\textcolor{black}{\LeafDepth{c}}=1+1=2$};
\end{tikzpicture}
\caption{Stored connections and path lengths.}
\end{subfigure}\hfill
\begin{subfigure}[t]{0.48\textwidth}
\centering
\begin{tikzpicture}[x=1cm,y=0.8cm,font=\small,
 v/.style={circle,draw=black,fill=white,inner sep=2pt,minimum size=5mm},
 e/.style={draw=black},lab/.style={fill=white,inner sep=1pt,font=\footnotesize}]
\node[v] (r) at (0,0) {};
\node[v] (l) at (-1.4,-1) {};
\node[v] (rr) at (1.4,-1) {};
\node[v] (c) at (-2,-2) {$c$};
\node[v] (d) at (-0.8,-2) {$d$};
\node[v] (e) at (0.7,-2) {$e$};
\node[v] (ab) at (2.1,-2) {};
\node[v,text=black] (a) at (1.65,-3) {$a$};
\node[v] (b) at (2.7,-3) {$b$};
\foreach \p/\q/\bit in {r/l/0,l/c/0,l/d/1,rr/e/0,ab/b/1}
 \draw[e] (\p)--node[lab] {\bit}(\q);
\foreach \p/\q/\bit in {r/rr/1,rr/ab/1,ab/a/0}
 \draw[e,black!65!black,thick] (\p)--node[lab,text=black] {\bit}(\q);
\node[font=\footnotesize] at (0.3,-3.7) {$c:00,\ d:01,\ e:10,\ a:110,\ b:111$};
\end{tikzpicture}
\caption{A prefix code with these lengths.}
\end{subfigure}
\caption{From stored connections to codewords for frequencies
$(1,1,2,2,3)$. Left: $u$ groups $a,b$ and $v$ groups $c,d$;
edge labels are depth increments, not code bits. This contracted
representation need not be binary. Right: the recovered lengths
$(3,3,2,2,2)$ give a full binary prefix tree with optimum weighted
length $20$. The thick paths illustrate character $a$: its length is
$1+2=3$ and its codeword is \texttt{110}.}
\label{fig:codeword-recovery}
}
\end{figure}

%% file: Sections/codeword-intervals-figure.tex
\begin{figure}[!t]
\centering
{\color{black}
\begin{tikzpicture}[x=1.45cm,y=1cm,font=\small]
\foreach \l/\r/\ch/\code in {0/2/c/00,4/6/e/10,6/7/a/110,7/8/b/111}{
 \draw[fill=black!4,draw=black] (\l,0) rectangle (\r,0.65);
 \node at ({(\l+\r)/2},0.95) {$\ch$};
 \node at ({(\l+\r)/2},0.32) {\texttt{\code}};
}
\draw[fill=black!12,draw=black,thick] (2,0) rectangle (4,0.65);
\node[text=black] at (3,0.95) {$d$};
\node[text=black] at (3,0.32) {\texttt{01}};
\foreach \x in {1,3,5}
 \draw[black!35] (\x,0)--(\x,-0.09);
\foreach \x/\t in {0/0,2/{\frac14},4/{\frac12},6/{\frac34},7/{\frac78},8/1}{
 \draw[black!60] (\x,0)--(\x,-0.12);
 \node[anchor=north] at (\x,-0.16) {$\t$};
}
\end{tikzpicture}
\par\medskip
\small
\begin{tabular}{@{}lccccc@{}}
\toprule
Character & $c$ & \textcolor{black}{$d$} & $e$ & $a$ & $b$\\
\midrule
Length $\textcolor{black}{\LeafDepth{i}}$ & 2 & 2 & 2 & 3 & 3\\
Start $a_i$, in units of $1/8$ & 0 & 2 & 4 & 6 & 7\\
Width $2^{\textcolor{black}{\MaximumLeafDepth}-\textcolor{black}{\LeafDepth{i}}}$, in the same units & 2 & 2 & 2 & 1 & 1\\
Interval index $k_i=a_i/2^{\textcolor{black}{\MaximumLeafDepth}-\textcolor{black}{\LeafDepth{i}}}$ & 0 & \textcolor{black}{1} & 2 & 6 & 7\\
Code: $k_i$ in exactly $\textcolor{black}{\LeafDepth{i}}$ bits & \texttt{00} & \textcolor{black}{\texttt{01}} & \texttt{10} & \texttt{110} & \texttt{111}\\
\bottomrule
\end{tabular}
\caption{Constructing codewords from lengths $(2,2,2,3,3)$, with
characters ordered as $c,d,e,a,b$ and $\textcolor{black}{\MaximumLeafDepth}=3$.
The intervals are placed consecutively in $[0,1)$; each has length
$2^{-\textcolor{black}{\LeafDepth{i}}}$. For $d$, the interval starts at $2/8$ and has width
$2/8$, so its index among the four quarter-intervals is $2/2=1$,
counting from zero. Writing this index in two bits gives \texttt{01}:
first choose the left half of $[0,1)$, then its right half.}
\label{fig:codeword-intervals}
}
\end{figure}

%% file: Sections/codeword-membership-figure.tex
\begin{figure}[!t]
\centering
{\color{black}
\begin{tikzpicture}[x=1.5cm,y=0.85cm,font=\small,
 selected/.style={draw=black,fill=black!9,minimum width=1.2cm,minimum height=6mm},
 unselected/.style={draw=black,fill=white,text=black,minimum width=1.2cm,minimum height=6mm}]
\node[anchor=east] at (-0.6,0) {First selection: $P(1)=2$};
\node[anchor=east] at (-0.6,-1) {Second selection: $P(2)=6$};
\node[selected] at (0,0) {$a:1$};
\node[selected] at (1,0) {$b:1$};
\node[unselected] at (2,0) {$c:2$};
\node[unselected] at (3,0) {$d:2$};
\foreach \x/\t in {0/{a:1},1/{b:1},2/{c:2},3/{d:2}}
 \node[selected] at (\x,-1) {$\t$};
\draw[decorate,decoration={brace,mirror,amplitude=4pt},black!65] (-0.4,-1.55)--(1.4,-1.55);
\draw[decorate,decoration={brace,mirror,amplitude=4pt},black!65] (1.6,-1.55)--(3.4,-1.55);
\node[align=center,font=\footnotesize] at (0.5,-2.3) {First selected: $a,b$\\node $u$, frequency $2$};
\node[align=center,font=\footnotesize] at (2.5,-2.3) {Newly selected: $c,d$\\node $v$, frequency $4$};
\end{tikzpicture}
\caption{Recovering membership at depth $h^*=1$ for the local depths
$h_a=h_b=h_c=h_d=2$. Each input leaf has Kraft value $1/4$,
so the first and second selections have Kraft values $1/2$ and $1$.
Blue cells are selected. Differences between the nested selections
identify the two groups. Each character receives a connection of
length $2-1=1$ to its group; $e$ is outside this call and remains unchanged.}
\label{fig:codeword-membership}
}
\end{figure}

%% file: Sections/pram.tex
\section{Huffman Coding on PRAMs}
\label{sec:pram}

We next consider the parallel running time of our algorithm in the
\PRAM model. {\color{black}In the \MPC setting, the input is a string of length \(n\). In the \PRAM model, that setting is not particularly interesting: a straightforward parallel implementation of Huffman's algorithm, similar to Trivial Approach~2 in the introduction, already computes the optimum encoded length in roughly logarithmic time with \(\softO(n)\) total work. Moreover, as we show in Lemma~\ref{lem:pram-lower-bound}, the parallel running time of the problem is subject to a nearly logarithmic lower bound, which leaves little room for improvement when the input is the raw string. We therefore consider the setting in which the input consists of the \(m\) frequencies themselves, so that the input size is \(m\) rather than \(n\), and ask whether the parallel running time can be made substantially smaller than \(\log n\) while keeping the total work close to linear in \(m\). We assume throughout that \(\softO_\epsilon(m^{1+\epsilon})\) processors are available, for a fixed \(0<\epsilon<1\), where \(\softO_\epsilon\) hides factors depending on \(\epsilon\) and factors polylogarithmic in \(n\).}

The input in this section consists of \(m\) positive
integer frequencies, given in arbitrary order, and we write
\(n=\sum_{i=1}^m f_i\).
Every frequency and the optimum encoded length fit in
\(O(\log n)\) bits, since \(m\leq n\) and the optimum is at most
\(n\lceil\log_2 m\rceil\).  The task is to compute the optimum encoded length.

A \PRAM consists of processors that perform synchronous operations on
shared memory. We count both the number of parallel steps and the total
number of operations, which we call the \emph{work}. {\color{black}The number of processors required is always bounded by the work, so we do not bound it explicitly in our results.} Our algorithm uses
the common-CRCW variant. Concurrent reads are allowed, and processors
may write to the same cell simultaneously only when they write the same
value. Arithmetic and comparisons on \(O(\log n)\)-bit words take
constant time. {\color{black}Unlike in the \MPC model, local computation is not free: a step that takes a single \MPC round may take more than constant \PRAM time, since, for example, concurrent writes cannot add up the values written to a cell.} We take logarithmic factors to be at least one.

\begin{table}[t]
\centering
\small
\renewcommand{\arraystretch}{1.15}
\begin{tabularx}{\textwidth}{@{}Xll@{}}
\toprule
Algorithm & Time & Total work \\
\midrule
\rowcolor{gray!20}
Teng~\cite{teng1987}
 & \(O(\log^2m)\) & \(O(m^6\log^2m)\) \\
\rowcolor{gray!20}
Atallah et al.~\cite{atallah1989}, first bound
 & \(O(\log^2m)\) & \(O(m^2\log m)\) \\
\rowcolor{gray!20}
Atallah et al.~\cite{atallah1989}, second bound
 & \(O(\log m)\) & \(O(m^3)\) \\
Larmore--Przytycka~\cite{larmore1995}
 & \(O(\sqrt m\log m)\) & \(O(m^{3/2}\log m)\) \\
Berman et al.~\cite{berman2007}
 & \(O(\log n)\) & \(O(m\log n)\) \\
\textcolor{black}{Milidi\'u et al.~\cite{milidiu1999}}
 & \textcolor{black}{\(O(\log n\log\log m)\)} & \textcolor{black}{\(O(m)\)} \\
\midrule
\textbf{Theorem~\ref{thm:pram-prefix-input}} \textcolor{black}{(for sorted prefix sums)}
 & {\color{black}\(O_\epsilon\bigl((1+\tfrac{\log n}{\log m})\log\log m\bigr)\)}
 & \(O_\epsilon(m^{1+\epsilon}\log n\log^3m)\) \\
\addlinespace
\textbf{Corollary~\ref{cor:pram-frequencies}}
 & {\color{black}\(O_\epsilon\bigl(\tfrac{\log m}{\log\log m}+(1+\tfrac{\log n}{\log m})\log\log m\bigr)\)}
 & \(O_\epsilon(m^{1+\epsilon}\log n\log^3m)\) \\
\bottomrule
\end{tabularx}
\caption{{\color{black}PRAM comparison of parallel time and total work. The rows in light gray are NC algorithms.}
{\color{black}Atallah et al., Larmore--Przytycka,} {\color{black}Berman et al., and Milidi\'u et al.}\ {\color{black}use CREW; our results use common-CRCW.}}
\label{tab:pram-comparison}
\end{table}

{\color{black}
{\color{black}We begin by establishing a lower bound on the parallel running time. In Lemma~\ref{lem:pram-lower-bound}, we show that computing the optimum encoded length requires \(\Omega(\log m/\log\log m)\) time on a deterministic common-CRCW \PRAM with polynomially many processors. Our proof reduces counting to Huffman coding, showing that this lower bound holds even when the frequencies are small and within a factor of two of each other.}

\begin{lemma}[Lower bound]
\label{lem:pram-lower-bound}
{\color{black}Computing the optimum encoded length from \(m\) positive integer
frequencies requires \(\Omega(\log m/\log\log m)\) time in the worst case on a
deterministic common-CRCW \PRAM with \(m^{O(1)}\) processors.}
\end{lemma}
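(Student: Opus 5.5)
We reduce from a problem with a known lower bound of this form. The natural choice is the result of Beame and H\aa stad: computing the parity of \(m\) bits, or more generally exact counting or majority, requires \(\Omega(\log m/\log\log m)\) time on a CRCW \PRAM with polynomially many processors. Our plan is a constant-time, polynomial-processor reduction from parity of \(m\) bits to computing the optimum encoded length of \(O(m)\) frequencies. The text before the statement says the lower bound holds even when the frequencies are small and within a factor two of each other. We therefore take the frequencies from a two-value set, say \(\{2,3\}\) or \(\{k,k+1\}\) for a suitable constant \(k\), with one frequency per input bit. A bit \(x_i\) becomes frequency \(k+x_i\), possibly together with some padding frequencies that do not depend on the input.

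The key structural step is to compute the optimum explicitly when all frequencies lie in \([k,2k)\). In that regime the optimal tree is essentially complete. Let \(m'\) be the number of leaves and \(p=\lfloor\log_2 m'\rfloor\). Then \(2(m'-2^p)\) leaves sit at depth \(p+1\) and the rest at depth \(p\), and the smallest frequencies go deepest. This follows from the exchange arguments of Lemma~\ref{lem:pairwise-depth-bound} and the three-depth argument in Lemma~\ref{lem:constant-time-base-case}, applied with ratio below \(\varphi\), which force the depths to differ by at most one. Choose the padding so that \(m'\) is not a power of two and exactly \(D=2(m'-2^p)\) leaves go deeper. The optimum is then
\[
p\sum_i f_i+\bigl(\text{sum of the } D \text{ smallest frequencies}\bigr).
\]
Set \(m'\) so that \(D=m\), and add \(m\) padding leaves of frequency \(k+1\) to fill the shallow positions. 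Then \(\sum_i f_i\) encodes the count \(s=\sum x_i\) linearly. Every term is an affine function of \(s\) with known coefficients, so \(s\) can be read back from the optimum in \(O(1)\) time by a single division by a fixed constant. Tie-breaking does not matter because only the cost enters. The last bit of \(s\) gives parity, so an \(o(\log m/\log\log m)\)-time algorithm would contradict Beame--H\aa stad.

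The main obstacle is making the decoding robust. The padding has to be chosen so that the deep/shallow split is input-independent, and the optimum must be an injective affine function of \(s\). We also need the reduction itself, the construction of the \(O(m)\) frequencies and the final arithmetic, to take \(O(1)\) time on common-CRCW. Writing the frequencies is trivially parallel, and a single subtraction and division by a fixed constant decode \(s\). We must also check the word-size convention, which is fine since all values are \(O(m\log m)\). If the cost turns out to depend on \(s\) only through the total sum, that is already enough; choosing distinct weights such as \(\{k,k+1\}\) with \(k\ge2\) keeps the ratio below \(\varphi\) once \(k\ge2\), as \(3/2<\varphi\).
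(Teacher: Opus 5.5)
Your reduction has the same skeleton as the paper's. You map bits $x_i$ to frequencies in a window $[a,2a)$, note that the optimal tree then has a rigid shape so the optimum is affine in $s=\sum_i x_i$, and invoke Beame--H{\aa}stad. Your fixed offset $k+x_i$ is fine for bits, and it even avoids computing $a_{\max}$, which the paper needs for general summands. The difference is the instance. The paper pads with dummy zeros to a power of two $2^K$ leaves, so every leaf sits at depth $K$ and the optimum is simply $K\sum_i f_i$, with no deep/shallow split. By choosing a non-power-of-two leaf count you take on two extra obligations, and as written both are mishandled.

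First, the claim that depths differ by at most one does not follow from what you cite. For $f_j/f_i=(k+1)/k$, Lemma~\ref{lem:pairwise-depth-bound} gives only $h_i-h_j\leq\lceil\log_\varphi((k+1)/k)\rceil+1=2$. The exchange in Lemma~\ref{lem:constant-time-base-case} lowers the Kraft sum only when two depths differ by at least three, so it confines depths to three consecutive levels, not two. The claim is still true via a different exchange. Suppose a leaf $i$ has depth $a\leq c-2$, where $c$ is the maximum depth, and take two leaves $j,j'$ at depth $c$ (they exist because an optimal tree is full). Move $i$ down one level and $j,j'$ up one level. The Kraft sum changes by $2^{1-c}-2^{-a-1}\leq0$ and the cost by $f_i-f_j-f_{j'}\leq f_{\max}-2f_{\min}<0$, contradicting optimality.

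Second, the padding count is inconsistent. With $m$ inputs and $m$ padding leaves you have $m'=2m$ and $D=2(2m-2^p)$, which equals $m$ only if $3m=2^{p+1}$, which never holds. When $0<D<m$, an input leaf of frequency $k+1$ can occupy a shallow slot, and the optimum becomes $C+ps+\max\{0,\,s-(m-D)\}$. This is not affine in $s$: for $m=5$ you get $D=4$ and successive increments $3,4$. So your decoding by a single division fails. To get $D=m$, make $m$ even and use exactly $2^p-m/2$ padding leaves with $2^p\geq m$; the optimum is then $C+(p+1)s$. The paper's choice $D=0$ is simpler still and needs neither the two-depth structure nor this bookkeeping. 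It is exactly the case you mention at the end, where the cost depends on $s$ only through $\sum_i f_i$.
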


\begin{proof}
{\color{black}We reduce integer summation to computing the optimum encoded length.
Let \(a_1,\ldots,a_m\) be nonnegative integers. For simplicity, assume that \(m\) is a power of two, say \(m=2^k\); otherwise, we pad the input with dummy values \(a_i=0\), which does not change the sum and at most doubles the input size.
Let \(a_{\textsf{max}}=\max_i(a_i+1)\), which can be computed in \(O_\epsilon(1)\) time with \(O(m^{1+\epsilon})\) processors on a common-CRCW \PRAM for any fixed \(\epsilon>0\), and set \(f_i=a_i+a_{\textsf{max}}\), in parallel for all \(i\).
Every frequency then lies in \([a_{\textsf{max}},2a_{\textsf{max}})\). Therefore, all leaves have the same depth \(k\) in an optimal Huffman tree, and the optimum encoded length \(\MinimumEncodedLength\) satisfies
\[
 \MinimumEncodedLength=k\sum_{i=1}^m f_i
   =k\left(ma_{\textsf{max}}+\sum_{i=1}^m a_i\right).
\]
The sum of the original integers is therefore \(\MinimumEncodedLength/k-ma_{\textsf{max}}\).
Both the reduction and this recovery take \(O_\epsilon(1)\) parallel time.
In particular, taking \(a_i\in\{0,1\}\), an algorithm for Huffman coding computes the number of ones among \(a_1,\ldots,a_m\), and hence their parity.
Beame and H{\aa}stad~\cite{beame1989optimal} show that computing the parity of \(m\) bits requires \(\Omega(\log m/\log\log m)\) time on a common-CRCW \PRAM with polynomially many processors; in fact, their bound holds even in the stronger priority-CRCW model. The same bound follows for our problem.}
\end{proof}

{\color{black}Note that the lower bound holds even when the ratio between
the largest and smallest frequencies is less than two, and every frequency is at most three.}
{\color{black} Our framework shows that this lower bound is not inherent to Huffman coding, but rather stems from the representation of the input. Indeed, the reduction derives its hardness from integer summation, which becomes trivial once the prefix sums of the frequencies are available. As we show in Theorem~\ref{thm:pram-prefix-input}, if the input is provided as the prefix sums of the sorted frequencies, our algorithm runs in \(O_\epsilon\bigl((1+\frac{\log n}{\log m})\log\log m\bigr)\) time, which goes beyond the \(\Omega(\log m/\log\log m)\) bound as long as \(n\leq m^{o(\log m/(\log\log m)^2)}\); for instance, it gives \(O_\epsilon(\log\log m)\) time when the frequencies are bounded by a polynomial in \(m\), as in the hard instances above.}

We now turn to our upper bound. {\color{black}In Theorem~\ref{thm:pram-prefix-input}, we establish the time and work bounds when the input is given as the prefix sums of the sorted frequencies. In Corollary~\ref{cor:pram-frequencies}, we extend this to arbitrary, unsorted frequencies, at the cost of an additive \(O(\log m/\log\log m)\) term for sorting the frequencies and computing their prefix sums. By Lemma~\ref{lem:pram-lower-bound}, this additive term is necessary in the worst case.}

\begin{theorem}
\label{thm:pram-prefix-input}
{\color{black}For every fixed \(0<\epsilon<1\), suppose that the prefix sums of
\(m\geq2\) sorted positive integer frequencies with total \(n\)
are supplied. The optimum encoded length can be computed
deterministically on a common-CRCW \PRAM with \(O(\log n)\)-bit
words in}
 $$\textcolor{black}{O\Biggl(\epsilon^{-2}\biggl(1+\frac{\log n}{\log m}\biggr)\log\log m\Biggr)}$$ {\color{black}time and}
$$\textcolor{black}{O\Biggl(\epsilon^{-2}\biggl(1+\frac{\log n}{\log m}\biggr) m^{1+\epsilon}\log^4m\Biggr)}$$ {\color{black}work.}
\end{theorem}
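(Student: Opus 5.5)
The plan is to run Algorithm~\ref{alg:complete-huffman} on the \PRAM with the same parameters $\lambda=\max\{2,m^{\epsilon/16}\}$ and $\eta=\max\{2,m^{\epsilon/4}\}$, and to reuse the iteration count from the proof of Theorem~\ref{thm:frequency-input-mpc}. That argument is model-independent and gives $O(\epsilon^{-1}(1+\log n/\log m))$ iterations. It therefore suffices to implement a single iteration in $O(\epsilon^{-1}\log\log m)$ time and $O(\epsilon^{-1}m^{1+\epsilon}\log^4 m)$ work. The key design principle is to avoid every primitive that is subject to the parity lower bound of Lemma~\ref{lem:pram-lower-bound}: no general sorting of $m$ items, no general summation, and no general prefix sums. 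Wherever the \MPC description uses sorting, scans, or aggregation, I would replace it with lookups into prefix-sum arrays that are maintained explicitly, combined with searches over sorted arrays.

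\textbf{Maintaining the sorted list.} Throughout, the current list $B$ is kept sorted together with its prefix-sum array $\PrefixSums{B}$. The index $j$ (the largest $i$ with $f_i\leq\lambda f_2$) is located by a parallel $m^{\epsilon}$-ary search, which takes $O(\epsilon^{-1})$ time on a common-CRCW \PRAM. At the end of an iteration, the reconstruction output $D$ must be spliced into the untouched suffix $f_{j+1},\ldots,f_k$. By the Recovering-the-stopping-state step, $D$ consists of at most one element $w_{2\lfloor s/2\rfloor+1}$ followed by pair sums $w_{2t-1}+w_{2t}$. These are not globally sorted, but they form $O(1)$ sorted runs: the pair sums are nondecreasing, and the leftover $w$'s are sorted. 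Merging $O(1)$ sorted runs of total length $m$ is possible in $O(\log\log m)$ time and linear work (Valiant/Kruskal merging, or CRCW cross-ranking by binary search in $O(\log\log m)$). Prefix sums of $D$ come for free: the prefix sum of $w$ is $P(\cdot)$, which we compute anyway, so the new $\PrefixSums{B}$ is assembled from $P$ and the old array by index arithmetic.

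\textbf{Bounded-Ratio Huffman Coding.} The block representatives $r_j$ are found by $O(\log\lambda)$ searches, which can run simultaneously. Each block table of Lemma~\ref{lem:constant-time-base-case} needs only $O(1)$ prefix-sum lookups per entry, plus a prefix minimum over budgets. For that prefix minimum I would use the standard CRCW prefix-minima algorithm, which runs in $O(\log\log m)$ time with linear work; prefix minima, unlike prefix sums, escape the parity bound. Each merge in the hierarchy follows Lemma~\ref{lem:fast-table-merge}. Every probe is a direct read of a shared-memory table, so the $\eta$-ary search over one layer takes $O(\epsilon^{-1})$ steps. Each step needs a minimum over $O(\eta)$ samples, which costs $O(1)$ time with $\eta^2$ work, and the minimum over the $d$ layers is handled the same way. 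The hierarchy has $O(\log\log\lambda)=O(\log\log m)$ levels, so this part costs $O(\epsilon^{-1}\log\log m)$ time. The top-down recovery of depths walks the same levels. One could worry about the work bound $m\eta\lambda^{4\alpha}\cdot(\text{squaring factors})$ exceeding the budget, but rescaling $\epsilon$ by a constant absorbs this.

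\textbf{Iterative-State Reconstruction (expected main obstacle).} This is where the \MPC version sorts leaves by density and computes prefix sums of Kraft values and frequencies. Here, leaves within one depth group are already sorted by frequency, and there are only $\Delta+1=O(\log\lambda)$ groups. I would therefore keep per-group prefix sums, which are derived from $\PrefixSums{B}$ by counting the group members. The difficulty is that the groups are not contiguous in sorted order. For bounded-ratio instances, though, depths are monotone in frequency: $h_1\geq\cdots\geq h_j$ as returned by Lemma~\ref{lem:constant-time-base-case}. So each group is a contiguous interval of the sorted prefix, its boundaries are found by search, and its sums are differences of $\PrefixSums{B}$. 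The fractional cutoff for a target $r2^{-h^*}$ is a density threshold $\rho$. For a given $\rho$, the total Kraft value is a sum over $O(\log\lambda)$ groups of (count below $\rho$)$\times$(Kraft value). Each count comes from a binary search, and adding $O(\log m)$ numbers takes $O(\log\log m)$ time. An outer search for $\rho$ over candidate densities, $m^{\epsilon}$-ary on at most $m$ candidates, takes $O(\epsilon^{-1})$ rounds. The residual instances have size $O((1+\log\lambda)\lambda^{\alpha})$, and solving them by Package-Merge is sequential, so I would instead solve each one with the CRCW merging-based Package-Merge. That version performs $O(\Delta)$ merge levels, giving $O(\log\lambda\cdot\log\log m)$ time, which is too slow. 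The fix I would try first is to exploit that $\Delta$ is only about $\log\lambda$: solve each residual instance by brute-force dynamic programming over exact widths, computed as a parallel doubling or min-plus combination across the $O(\log\lambda)$ groups in a balanced tree of depth $O(\log\log\lambda)$. Each node combines two tables with width at most $2^{\Delta}\cdot\mathrm{poly}$, and the min-plus product costs $O(1)$ time with polynomial work in $\lambda^{O(1)}=m^{O(\epsilon)}$. This gives $O(\log\log m)$ time with work absorbed after rescaling $\epsilon$. Finally, the total cost $\sum_i f_i(h_i-h^*)$ is computed through the group prefix sums, again as an $O(\log\lambda)$-term sum. Multiplying the per-iteration bounds by the number of iterations gives the theorem.
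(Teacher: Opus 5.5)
Your plan follows the paper's proof closely. The shared pieces are:
\begin{itemize}
\item maintained sorted frequencies with prefix sums, and $O(\epsilon^{-1}(1+\log n/\log m))$ iterations;
\item layered-convex $\eta$-ary searches with $O(1)$-time common-CRCW minima over $O(\log\log m)$ hierarchy levels;
\item per-depth-group prefix sums;
\item residual instances solved by a balanced min-plus dynamic program over the depth groups;
\item the stopping state merged with the suffix by cross-ranking.
\end{itemize}

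\textbf{The step that fails.} You cannot find the block representatives by ``$O(\log\lambda)$ searches which can run simultaneously.'' The index $r_{j+1}$ is defined by the threshold $\varphi f_{r_j}$, so each search depends on the result of the previous one. Done sequentially, this is $\Theta(\log\lambda)=\Theta(\epsilon\log m)$ searches of $\Theta(\epsilon^{-1})$ time each. That is $\Theta(\log m)$ time per iteration in the worst case, which destroys the bound.

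\textbf{How the paper avoids it.} It changes the blocks: $f_1$ forms its own block, and the remaining frequencies are cut at the fixed thresholds $\varphi^t f_2$. These thresholds are known in advance; the paper computes them exactly in the form $(a+b\sqrt5)/2$ by repeated squaring in $O(\log\log m)$ time. All boundary searches are then independent, every block still has ratio below $\varphi$, and there are $O(1+\log\lambda)$ blocks. A second option is to compute $\mathrm{next}(i)$ for every $i$ by parallel searches and enumerate the chain from index $1$ by pointer doubling in $O(\log\log m)$ time.

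\textbf{Consequence for monotone depths.} With the paper's blocks, your claim that the returned depths satisfy $h_1\geq\cdots\geq h_j$ can fail. If $f_1=f_2$, these equal frequencies lie in different blocks, and a tie can give $h_1<h_2$. This is why the paper re-sorts the depths before reconstruction, by counting leaves per depth and scanning over the $O(\log\lambda)$ depths. Even with your blocks, monotonicity across blocks does not follow from Lemma~\ref{lem:constant-time-base-case} alone. It needs the exchange argument together with the fact that equal frequencies share a block.

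\textbf{The cutoff search.} An $\eta$-ary search ``over candidate densities'' presupposes that the candidates are available in global density order. That order is a merge of $O(\log\lambda)$ sorted groups and is not free. The paper builds it explicitly:
\begin{itemize}
\item each leaf finds its rank in every group by parallel search;
\item these ranks are summed by an $O(\log\log m)$-time reduction;
\item the Kraft and frequency prefix sums at each position come from the per-group counts;
\item each target then needs a single predecessor search.
\end{itemize}
This also avoids nesting searches. Your nested scheme, even with $\eta$-ary rather than binary inner searches, costs $O(\epsilon^{-1})\cdot O(\epsilon^{-1}+\log\log m)$ per iteration. That exceeds the stated $\epsilon^{-2}(1+\log n/\log m)\log\log m$ total whenever $\log\log m<\epsilon^{-1}$.

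\textbf{A minor correction.} $D$ is not ``at most one element followed by pair sums.'' It is the $\lfloor s/2\rfloor$ pair sums together with all of $w_{2\lfloor s/2\rfloor+1},\ldots,w_q$. Your merge of sorted runs handles this correctly anyway.
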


{\color{black}If the frequencies are instead given in arbitrary order, we first sort them and compute their prefix sums, which adds \(O(\log m/\log\log m)\) to the parallel time. This gives Corollary~\ref{cor:pram-frequencies}.}

\begin{corollary}
\label{cor:pram-frequencies}
{\color{black}For every fixed \(0<\epsilon<1\), the optimum encoded length of
\(m\geq2\) positive integer frequencies, given in arbitrary order, with total \(n\) can be
computed deterministically on a common-CRCW \PRAM with
\(O(\log n)\)-bit words in}
\[
 {\color{black}O\Biggl(\frac{\log m}{\log\log m}+\epsilon^{-2}\biggl(1+\frac{\log n}{\log m}\biggr)\log\log m\Biggr)}
 \]
 {\color{black}time and}
 \[ \textcolor{black}{O\Biggl(\epsilon^{-2}\biggl(1+\frac{\log n}{\log m}\biggr) m^{1+\epsilon}\log^4m\Biggr)}
\] {\color{black}work.}
\end{corollary}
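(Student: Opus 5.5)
The corollary reduces to Theorem~\ref{thm:pram-prefix-input} once the input is in the format that theorem expects. The plan is to do the two reduction steps explicitly (sort the \(m\) frequencies, then compute prefix sums of the sorted list), bound each in time and work, and add the bounds of Theorem~\ref{thm:pram-prefix-input}.

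\emph{Sorting.} The frequencies are \(O(\log n)\)-bit integers in arbitrary order. I would sort them with a comparison-based common-CRCW sorting algorithm that runs in \(O(\log m/\log\log m)\) time with \(O(m^{1+\epsilon})\) processors. One concrete option:
\begin{itemize}
\item For each element \(f_i\), compute its rank as the number of \(j\) with \((f_j,j)<(f_i,i)\) in lexicographic order.
\item Evaluate all \(m^2\) pairwise comparisons in \(O(1)\) time.
\item Count each row of comparison bits by summing \(m\) bits on a common-CRCW \PRAM. This takes \(O(\log m/\log\log m)\) time, using the standard counting algorithms that match the Beame--H{\aa}stad bound.
\end{itemize}
This brute-force version costs \(m^2\) work, which is too much. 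To stay within \(m^{1+\epsilon}\) work I would use a sample-sort / multiway-merge scheme with fan-out \(m^{\epsilon}\):
\begin{itemize}
\item Split the input into groups of size \(m^{\epsilon}\).
\item Rank each group by all-pairs comparison, at \(m^{2\epsilon}\) work per group.
\item Merge recursively with fan-out \(m^{\epsilon}\), where each merge level again reduces to counting.
\end{itemize}
This gives \(O(1/\epsilon)\) levels, each taking \(O(\log m/\log\log m)\) time. Since \(\epsilon\) is fixed, the \(1/\epsilon\) factor is absorbed into the \(O(\cdot)\) or the \(\epsilon^{-2}\) term. The main obstacle is exactly here: showing that the counting-based merges at each level can be charged to \(O(m^{1+\epsilon})\) total work, without losing more than a constant factor in time per level. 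If that gets tight, I would instead cite a known deterministic CRCW sorting result with \(O(\log m/\log\log m)\) time and polynomial-but-\(m^{1+\epsilon}\) work.

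\emph{Prefix sums.} Prefix sums of \(m\) words of \(O(\log n)\) bits can be computed in \(O(\log m/\log\log m)\) time on a common-CRCW \PRAM with \(O(m^{1+\epsilon})\) work. The approach is a prefix-sum tree with fan-out \(\log^{\Theta(1)} m\) (or \(m^{\epsilon}\)). Its depth is \(O(\log m/\log\log m)\) (respectively \(O(1/\epsilon)\)), and each node's sum over its children is computed by a constant-depth or counting-based step. Equivalently, one can invoke standard CRCW prefix-sum results.

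\emph{Combining.} Feed the sorted prefix sums to Theorem~\ref{thm:pram-prefix-input}. The total time is \(O(\log m/\log\log m)\) plus the time bound of that theorem. The total work is \(O(m^{1+\epsilon}\log n)\) for preprocessing plus the theorem's work bound \(O\bigl(\epsilon^{-2}(1+\log n/\log m)m^{1+\epsilon}\log^4 m\bigr)\). The preprocessing term is dominated, because \(\log n\leq(1+\log n/\log m)\log m\). This yields the stated bounds, and Lemma~\ref{lem:pram-lower-bound} shows the additive \(\log m/\log\log m\) term cannot be avoided.
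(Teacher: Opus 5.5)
Your outer structure is exactly the paper's: sort, take prefix sums of the sorted list, feed them to Theorem~\ref{thm:pram-prefix-input}, and absorb the preprocessing work via $\log n\le(1+\log n/\log m)\log m$. But the real content of the corollary is the preprocessing bound of Lemma~\ref{lem:pram-sorted-prefix-sums}, and the step you yourself call ``the main obstacle'' is left open.

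Taken literally, ``each merge level reduces to counting'' does not work. If the rank of $x$ in another list is found by comparing $x$ with every element of that list and summing the bits, the top merge level alone costs $\Theta(m^2)$ work. The missing idea is to \emph{search} instead of count. When merging a group of $k$ sorted lists, find the rank of each element in each of the other $k-1$ lists by a multiway parallel search, and then add only these $k$ ranks.

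With fan-in $m^{\epsilon}$ and searches of fan-out $m^{\delta}$, this costs about $m^{\epsilon+\delta}$ work per element, which is over budget. The paper therefore takes fan-in $k=\lceil m^{\epsilon/2}\rceil$ and search fan-out $\lceil m^{\epsilon/2}\rceil$, giving $O(\epsilon^{-1}m^{1+\epsilon})$ work per level. The $k$ ranks are at most $m\le k^{2/\epsilon}$. They are added by Cole--Vishkin after splitting each rank into $O(1/\epsilon)$ chunks of $O(\log k)$ bits, in $O(\log k/\log\log k+\log(1/\epsilon))$ time. Summing over the $O(\log m/\log k)$ levels gives $O(\epsilon^{-2}+\log m/\log\log k)$. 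A case split on whether $\log m\ge\epsilon^{-2}$ turns this into $O(\epsilon^{-2}+\log m/\log\log m)$.

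Your accounting ($O(1/\epsilon)$ levels, each $O(\log m/\log\log m)$) instead puts an $\epsilon^{-1}$ factor on the $\log m/\log\log m$ term. Contrary to what you write, that factor cannot be absorbed into the $\epsilon^{-2}(1+\log n/\log m)\log\log m$ term. Since the statement tracks $\epsilon$ explicitly, this matters. Pointing to an unnamed ``known sorting result'' does not close the gap.

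The prefix-sum step has a similar hole. The frequencies are $O(\log n)$-bit words with $n$ possibly superpolynomial in $m$. The $O(\log k/\log\log k)$-time, $O(k)$-work algorithm of Cole and Vishkin handles only $O(\log k)$-bit integers, so ``standard CRCW prefix-sum results'' do not apply verbatim. Your tree sketch does not supply the missing argument:
\begin{itemize}
\item With fan-out $m^{\epsilon}$, a node cannot add its children in constant time. Otherwise $O(1/\epsilon)$ levels would compute the parity of $m$ bits in $O(1/\epsilon)$ time, contradicting the Beame--H{\aa}stad bound behind Lemma~\ref{lem:pram-lower-bound}.
\item A ``counting-based'' step adds bits, not $\log n$-bit words.
\end{itemize}
The paper splits each value into $O(1+\log n/\log m)$ chunks of $O(\log m)$ bits and runs Cole--Vishkin on every chunk position in parallel. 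It then reassembles each prefix sum by a balanced reduction over its chunks. This adds $O(\log(1+\log n/\log m))$ time and a factor $O(1+\log n/\log m)$ in work, both absorbed by the corollary's bounds. With these two ingredients in place, your final combination step goes through.
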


\paragraph{Comparison with earlier algorithms.}
{\color{black}Most of the previous work on parallel Huffman coding aims at algorithms in NC, that is, algorithms with polylogarithmic parallel time in \(m\), and as a result their total work deviates considerably from the optimal \(\softO(m)\). Our goal is different: rather than aiming for an NC algorithm, we aim to minimize the parallel time while keeping the total work close to linear.} {\color{black}For instance, the NC algorithms of Teng~\cite{teng1987} and Atallah et al.~\cite{atallah1989} achieve polylogarithmic parallel time, but require at least quadratic work. Larmore--Przytycka~\cite{larmore1995} trade some of this parallel time for less work, achieving $O(\sqrt m\log m)$ time, but their work is still $O(m^{3/2}\log m)$.} {\color{black}Karpinski, Larmore, and Nekrich~\cite{karpinski2004} parameterize the number of processors as $p=m^x$ for $0<x<1$ and obtain $O(m^{1-x/2}\log m)$ time with $O(m^{1+x/2}\log m)$ total work; thus, their running time is polynomial in $m$. Milidi\'u et al.~\cite{milidiu1999} and Berman et al.~\cite{berman2007} obtain running times proportional to the height of the Huffman tree, which can be $\Theta(\log n)$.} In contrast, our algorithm achieves $O_\epsilon(m^{1+\epsilon}\log n\log^3m)$ work. {\color{black}Moreover, our parallel running time is asymptotically better than that of all previous algorithms whenever \(n\leq 2^{O\left(\left(\frac{\log m}{\log\log m}\right)^2\right)}\). Indeed, in this range we have \(\frac{\log n}{\log m}\log\log m=O\bigl(\frac{\log m}{\log\log m}\bigr)\), so Corollary~\ref{cor:pram-frequencies} gives \(O_\epsilon(\log m/\log\log m)\) time, whereas the time bounds of all previous algorithms in Table~\ref{tab:pram-comparison} are at least logarithmic in \(m\). For instance, when the frequencies are bounded by a polynomial in \(m\), we have \(n\leq m^{O(1)}\), and our algorithm runs in \(O_\epsilon(\log m/\log\log m)\) time with \(O_\epsilon(m^{1+\epsilon}\log^4 m)\) work. By Lemma~\ref{lem:pram-lower-bound}, this running time is optimal, even though the work is close to linear.}

\subsection{Proof of the PRAM Bounds}

We now prove \textcolor{black}{Theorem~\ref{thm:pram-prefix-input} and Corollary~\ref{cor:pram-frequencies}}. Our \PRAM algorithm follows the same iterations as the \MPC algorithm, but requires careful parallel implementations of the computations within each iteration. {\color{black}We first describe the standard parallel primitives that we use throughout. We then show that the prefix sums can be maintained throughout the algorithm, eliminating the need to recompute them in each iteration. Next, we develop parallel implementations of Bounded-Ratio Huffman Coding and Iterative-State Reconstruction using these primitives.} Finally, we combine these ingredients to establish the claimed time and work bounds.

\paragraph{Standard Parallel Primitives.}
We use four standard parallel operations throughout our
algorithm: searching in sorted arrays, finding minima,
computing prefix minima, and summing small collections
of values. We briefly describe these operations and
their time and work bounds on a common-CRCW \PRAM.

Fix a constant $0<\delta<1$, and set
\[
\PRAMFanout=\lceil m^\delta\rceil,
\]
where $m$ is the original number of frequencies.
This parameter controls the trade-off between
parallel time and work.

\begin{itemize}
	
	\item \textbf{Parallel search.}
	Given a sorted array and a query value, the goal is
	to find the position of the query in the array.
	We use a multiway search: at each step, we compare
	the query with $\PRAMFanout$ approximately equally
	spaced entries in parallel and continue in the
	appropriate interval. Ties are resolved by a
	fixed ordering. For arrays of length $m^{O(1)}$,
	this gives~\cite{karp1990parallel}
	\begin{equation}
		\label{eq:pram-search-bound}
		\begin{aligned}
			\text{Time} &= O(\delta^{-1}),\\
			\text{Work} &= O(\delta^{-1}\PRAMFanout).
		\end{aligned}
	\end{equation}
	We use parallel searches to find frequency-block
	boundaries, locate cutoff positions, and merge
	sorted lists.
	
	\item \textbf{Parallel minimum.}
	Given $b$ values, the goal is to find their minimum
	and an index attaining it.
	We compare every pair in parallel and mark each
	index that loses a comparison. Ties are broken
	by index, leaving exactly one unmarked index.
	All concurrent writes to a marked position write
	the same value, as required by common-CRCW.
	This standard procedure gives~\cite{karp1990parallel}
	\begin{equation}
		\label{eq:pram-minimum-bound}
		\begin{aligned}
			\text{Time} &= O(1),\\
			\text{Work} &= O(b^2).
		\end{aligned}
	\end{equation}
	We use it to find minima among candidates in
	dynamic programming table computations.
	
	\item \textbf{Parallel prefix minima.}
	Given an array $a_1,\ldots,a_\nu$, the goal is
	to compute
	$\min_{1\leq i\leq t}a_i$ for every $t\in[\nu]$.
	This is an instance of parallel prefix
	computation~\cite{blelloch1990prefix}.
	We use a $\PRAMFanout$-ary tree, computing the
	prefix minima of the children's subtree minima
	at each node by parallel comparisons, and then
	propagating the results downward.
	The tree has $O(\delta^{-1})$ levels, and each
	level requires $O(\nu\PRAMFanout^2)$ work.
	For $\nu=m^{O(1)}$, we obtain
	\begin{equation}
		\label{eq:pram-prefix-minimum-bound}
		\begin{aligned}
			\text{Time} &= O(\delta^{-1}),\\
			\text{Work} &=
			O(\delta^{-1}\nu\PRAMFanout^2).
		\end{aligned}
	\end{equation}
	We use prefix minima to construct the initial
	tables for Bounded-Ratio Huffman Coding.

\item \textbf{Parallel reduction and scan.}
Given $b$ values, a reduction computes their sum,
while a scan computes all prefix sums.
Both operations admit standard balanced-tree
implementations~\cite{blelloch1990prefix,jaja1992introduction},
with
\begin{equation}
	\label{eq:pram-reduction-bound}
	\begin{aligned}
		\text{Time} &= O(\log b),\\
		\text{Work} &= O(b).
	\end{aligned}
\end{equation}
We use these operations on $b=O(\log m)$ values
to combine information across depth groups and
recover depth intervals, giving
$O(\log\log m)$ parallel time.
	
\end{itemize}

Table~\ref{tab:pram-primitives} summarizes the
bounds of these four operations.

\begin{table}[t]
	\centering
	\small
	\renewcommand{\arraystretch}{1.15}
	\begin{tabular}{@{}lll@{}}
		\toprule
		Operation & Parallel time & Work \\
		\midrule
		Search
		& $O(\delta^{-1})$
		& $O(\delta^{-1}\PRAMFanout)$ \\
		Minimum ($b$ values)
		& $O(1)$
		& $O(b^2)$ \\
		Prefix minima ($\nu$ values)
		& $O(\delta^{-1})$
		& $O(\delta^{-1}\nu\PRAMFanout^2)$ \\
		Reduction / scan
		& $O(\log\log m)$
		& $O(\log m)$ \\
		\bottomrule
	\end{tabular}
	\caption{Parallel primitives used in our common-CRCW
		\PRAM algorithm. Search and prefix minima operate
		on arrays of polynomial length in $m$; reduction
		and scan operate on $O(\log m)$ values.}
	\label{tab:pram-primitives}
\end{table}

\paragraph{Maintaining Prefix Sums Across Iterations.}
{\color{black}We first show that two sorted lists whose prefix sums are known can be merged without recomputing any prefix sums.}

\begin{lemma}
\label{lem:pram-prefix-preservation}
{\color{black}Given two sorted lists of total length \(s\leq m\), together with their prefix sums, their sorted merge and its prefix sums can be computed in \(O(\delta^{-1})\) parallel time and \(O(\delta^{-1}s\PRAMFanout)\) work.}
\end{lemma}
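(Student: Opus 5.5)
The plan is to merge by rank computation. Let the two sorted lists be \(X=(x_1,\ldots,x_{s_1})\) and \(Y=(y_1,\ldots,y_{s_2})\) with \(s_1+s_2=s\), and let their prefix sums \(\PrefixSums{X}\) and \(\PrefixSums{Y}\) be given. Fix the tie-breaking rule that \(X\) precedes \(Y\) on equal values. For each \(x_i\), compute \(\rho(x_i)\), the number of entries of \(Y\) strictly smaller than \(x_i\). For each \(y_j\), compute \(\rho(y_j)\), the number of entries of \(X\) that are at most \(y_j\). Each of these is a single query to the parallel search primitive of \eqref{eq:pram-search-bound} on a sorted array of length at most \(m\), so it costs \(O(\delta^{-1})\) time and \(O(\delta^{-1}\PRAMFanout)\) work. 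All \(s\) queries run simultaneously, giving \(O(\delta^{-1})\) time and \(O(\delta^{-1}s\PRAMFanout)\) work in total. Concurrent reads are allowed, so the simultaneous searches do not conflict.

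Next, place each element directly into its output position. Element \(x_i\) goes to position \(i+\rho(x_i)\), and element \(y_j\) goes to position \(j+\rho(y_j)\). With the stated tie-breaking these positions form a bijection onto \(\{1,\ldots,s\}\), and the resulting order is sorted. Each processor writes to a distinct cell, so the common-CRCW restriction never comes into play. This step takes \(O(1)\) time and \(O(s)\) work.

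The key observation is that prefix sums of the merged list need no scan. The merged prefix up to and including \(x_i\) consists exactly of \(x_1,\ldots,x_i\) together with the \(\rho(x_i)\) smallest entries of \(Y\). Its sum is therefore
\[
\PrefixSum{X}{i}+\PrefixSum{Y}{\rho(x_i)}.
\]
Similarly, the merged prefix sum at \(y_j\) is
\[
\PrefixSum{Y}{j}+\PrefixSum{X}{\rho(y_j)}.
\]
Each value is one addition of two table lookups, so this step takes \(O(1)\) time and \(O(s)\) work, and every processor again writes to its own cell.

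The main point to verify is the consistency of ties. I need the prefix of the merged list at each element to be exactly the union of a prefix of \(X\) and a prefix of \(Y\) with the counts given by the two rank conventions. The asymmetric strict versus non-strict comparison ensures that equal values are ordered with the \(X\) copies first. This makes the placement injective and makes the prefix-sum formula exact. The search primitive's fixed tie-resolution must be instantiated to return exactly these strict and non-strict ranks. Since \(s\leq m\), the search arrays have polynomial length in \(m\), so the bounds in \eqref{eq:pram-search-bound} apply. Summing the three phases gives \(O(\delta^{-1})\) time and \(O(\delta^{-1}s\PRAMFanout)\) work.
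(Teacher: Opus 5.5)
Your proposal is correct and follows the paper's proof: cross-ranks via the parallel search primitive, output position as own index plus cross-rank, and each merged prefix sum as the sum of two looked-up prefix sums. Your explicit strict versus non-strict rank convention simply makes precise the ``fixed ordering'' for ties that the paper invokes.
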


\begin{proof}
{\color{black}Let the two lists be $L_1$ and $L_2$, and break ties by a fixed ordering.
For each element, we find its rank in the other list by a parallel search~\eqref{eq:pram-search-bound}, in $O(\delta^{-1})$ time and $O(\delta^{-1}\PRAMFanout)$ work.
An element's position in the merged list is its position in its
own list plus the number of elements preceding it in the other list.
If a prefix of the merged list contains $i$ elements of $L_1$ and
$j$ elements of $L_2$, its sum is
\[
 \PrefixSum{L_1}{i}+\PrefixSum{L_2}{j}.
\]
For each output position, the ranks determine $i$ and $j$.
Thus each element and its prefix sum can be written independently
in constant time.}
\end{proof}

{\color{black}Lemma~\ref{lem:pram-prefix-preservation} is what allows us to compute prefix sums only once. Every iteration of our algorithm needs the current frequencies in sorted order together with their prefix sums: the bounded-ratio solver and Iterative-State Reconstruction repeatedly read the total frequency of consecutive runs, each as a difference of two prefix sums. Computing prefix sums from scratch is at least as hard as computing a sum, so by the argument of Lemma~\ref{lem:pram-lower-bound} it would cost \(\Omega(\log m/\log\log m)\) time in every iteration. Instead, an iteration ends with two sorted lists whose prefix sums are already known: the stopping state returned by Iterative-State Reconstruction, and the untouched suffix of the input, whose prefix sums are those of the input minus a single constant, namely the total frequency of the processed prefix. Lemma~\ref{lem:pram-prefix-preservation} merges these two lists into the input of the next iteration together with its prefix sums, in \(O(\delta^{-1})\) time.}

\begin{example}
\label{ex:pram-prefix-maintenance}
{\color{black}Suppose the current frequencies are
\[
(2,2,3,3,4,4,5,5,7,14),
\]
whose sorted prefix sums are $(2,4,7,10,14,18,23,28,35,49)$.
For illustration, let $\lambda=3$. The bounded-ratio prefix consists of all frequencies at most $\lambda f_2=6$, namely $(2,2,3,3,4,4,5,5)$, and the untouched suffix is $(7,14)$.
We solve the prefix with Bounded-Ratio Huffman Coding and pass it to Iterative-State Reconstruction with threshold $\tau=7$, the smallest frequency outside the prefix. Iterative-State Reconstruction returns the sorted frequencies $(8,9,11)$ that remain after the merges within the prefix, together with their prefix sums $(8,17,28)$ and the merge cost $38$.
The untouched suffix $(7,14)$ has prefix sums $(35-28,49-28)=(7,21)$, obtained from the input by subtracting the total $28$ of the processed prefix.
Before repeating the same procedure, we merge these two sorted lists by Lemma~\ref{lem:pram-prefix-preservation}, which gives the next input
\[
(7,8,9,11,14)
\]
with prefix sums $(7,15,24,35,49)$.
Figure~\ref{fig:pram-iteration-example} illustrates this iteration.
No prefix sum is recomputed: each value is either read from an earlier prefix-sum array or obtained as the sum or difference of two such values.}
\end{example}

\input{Sections/pram-iteration-example-figure}

\paragraph{Solving a bounded-ratio instance.}
We parallelize the divide-and-conquer algorithm of
Section~\ref{sec:bounded-ratio-formal}, using the table-merging
procedure of Section~\ref{sec:layered-merging-formal}.
The parameters retain their earlier roles: $\lambda$ bounds the
frequency ratio, and $\PRAMFanout$ controls the number of positions
examined in each search step. For the PRAM implementation, we choose
$\lambda=\PRAMFanout=\lceil m^\delta\rceil$.
The computation has three steps: constructing the base tables,
merging them through a balanced hierarchy, and recovering an
optimal depth assignment.

We first recall the budget bound from
Section~\ref{sec:bounded-ratio-formal}.
By Corollary~\ref{cor:bounded-ratio-depth-range},
the maximum scaled Kraft value is
\[
\KraftScaleBound=2^{u-\ell}
=O(\lambda^{2\log_\varphi 2})
=O(\lambda^3).
\]
Consequently, the total number of table entries
at any level of the balanced hierarchy is
$O(m\KraftScaleBound)$.
If $\ell=u$, the optimum cost is computed directly.

\begin{itemize}
	
\item \textbf{Constructing the base tables.}
We place $f_1$ in a separate block and partition the remaining
frequencies into the intervals
$[\varphi^t f_2,\varphi^{t+1}f_2)$.
There are $O(1+\log\lambda)$ such intervals.
To find their boundaries, we first compute the thresholds
$f_2,\varphi f_2,\varphi^2 f_2,\ldots$.
Since we need only $O(\log m)$ powers of $\varphi$,
these thresholds can be computed in
$O(\log\log m)$ parallel time using repeated squaring.
We then locate the block boundaries by parallel searches.\footnote{
	The powers of $\varphi$ can be represented exactly as
	$(a+b\sqrt5)/2$, with integer coefficients of
	$O(\log m)$ bits. Comparisons with integer frequencies
	reduce to integer arithmetic, so no rounding of the
	thresholds is needed.}
Each nonempty block has frequency ratio smaller than
$\varphi$, so Lemma~\ref{lem:constant-time-base-case}
applies directly.

For each block $B$, we compute $\mu(B,b)$
for every relevant Kraft budget $b$, using
Lemma~\ref{lem:constant-time-base-case}.
Recall that $\mu(B,b)$ is the minimum encoded
length achievable with a total scaled Kraft
usage of at most $b$.
Since we maintain prefix sums of the sorted
frequencies, the sum over any consecutive interval
can be obtained in constant time.
Thus, each candidate depth assignment can be
evaluated in constant time.
We evaluate all candidates in parallel and use
prefix minima to obtain $\mu(B,b)$ for every budget $b$.

The lengths of all block tables sum to
$O(m\KraftScaleBound)$.
By the parallel prefix-minimum bound
in~\eqref{eq:pram-prefix-minimum-bound},
processing all these tables in parallel takes
$O(\delta^{-1})$ time and
$O(\delta^{-1}m\KraftScaleBound\PRAMFanout^2)$ work.
	
	\item \textbf{Merging the tables.}
	At each internal node of the balanced hierarchy, we combine
	two tables by minimizing over the ways to divide the budget.
	By Lemma~\ref{lem:layered-convexity}, the candidates for each
	entry form at most $\KraftScaleBound$ convex layers.
	
	Within each layer, we evaluate $\PRAMFanout$ approximately
	equally spaced positions, including the endpoints, and find
	the first minimum among them. By convexity, we retain only
	the interval between its neighboring sampled positions,
	using a one-sided interval at an endpoint.
	Once at most $\PRAMFanout$ positions remain, we evaluate
	all of them in parallel. The search takes $O(\delta^{-1})$
	steps, after which we take the minimum over the layers.
	
	There are $O(m\KraftScaleBound)$ table entries at each
	hierarchy level. Using the parallel minimum procedure for
	the sampled positions and the layers gives
	$O(\delta^{-1}m\KraftScaleBound^3\PRAMFanout^2)$ work
	per level. The hierarchy has $O(\log(2+\log\lambda))$ levels.
	Since $\lambda=m^{O(1)}$, table construction takes
	$O(\delta^{-1}\log\log m)$ parallel time.

\item \textbf{Recovering the depth assignment.}
During each merge, we store a budget split attaining
the minimum. Starting from the root, we recover the
budget of each block by following these stored splits.
All nodes at the same level are processed in parallel.
Since the balanced hierarchy has
$O(\log(2+\log\lambda))=O(\log\log m)$ levels,
this traceback takes $O(\log\log m)$ time.

By Lemma~\ref{lem:constant-time-base-case},
the depth assignment within each block consists
of at most three runs of equal depths.
Since there are $O(1+\log\lambda)$ blocks,
the recovered assignment has
$O(1+\log\lambda)$ runs in total.

The recovered depths may not be sorted across blocks.
We therefore reorder them so that smaller frequencies
receive larger depths, which preserves optimality.
To do this, we count the number of leaves at each depth
in parallel and compute prefix sums of these counts
in decreasing depth order. This determines which
consecutive frequencies receive each depth.
Using the maintained prefix sums of the sorted frequencies,
we also obtain the total frequency of each depth group
in constant time.

There are $O(1+\log\lambda)$ possible depths and
$O(1+\log\lambda)$ blocks. Thus, the reductions
and scan take $O(\log(2+\log\lambda))=O(\log\log m)$
parallel time. The resulting depth groups and their
frequency sums are used by Iterative-State Reconstruction.

\end{itemize}

Overall, Bounded-Ratio Huffman Coding takes
$O(\delta^{-1}\log\log m)$ parallel time and
\[
O(\delta^{-1}m\KraftScaleBound^3\PRAMFanout^2\log\log m)
\]
work, including table construction and recovery of the depth assignment.

\paragraph{{\color{black}Solving Iterative-State Reconstruction.}}
{\color{black}We now parallelize Iterative-State Reconstruction. Its main step computes the values $P(r)$,
using the method of Section~\ref{sec:computing-prefix-optima}.}
The algorithm first constructs the density order, identifies
the small residual instances around the fractional solutions,
and solves these instances in parallel. The resulting values
$P(r)$ are then used to reconstruct the stopping state.

Let $\DepthGroupCount$ be the number of nonempty depth groups.
As established above,
$\DepthGroupCount=O(1+\log\lambda)$.

\begin{itemize}
	
	\item \textbf{Constructing the density order.}
	Recall that the density of a leaf at depth $h$ is $f_i2^h$.
	Within each depth group, density order agrees with
	frequency order, so its leaves are already sorted.
	
	For each leaf, we use parallel searches to count how many
	leaves precede it in each depth group, according to density
	order. Summing these counts gives its position in the
	global density order. We use a fixed tie-breaking rule
	to make all positions unique.
	
	We also compute the total scaled Kraft value and total
	frequency of every prefix in this order. If a prefix
	contains $t_h$ leaves from the group of depth $h$,
	its scaled Kraft value is
	\[
	\sum_h t_h 2^{h_{\max}-h}.
	\]

For each position in the global density order, we know
how many leaves precede it in each depth group.
Since the leaves within each group are sorted by frequency,
these leaves form a prefix of that group.
We obtain the total frequency of each such prefix
from the maintained prefix sums of the sorted frequencies.
Summing over the $\DepthGroupCount$ groups gives the
frequency prefix sum at that position.

The same counts also determine the Kraft prefix sum,
since every leaf in a depth group has the same Kraft value.

We compute these sums by parallel reductions over the
depth groups, avoiding a prefix-sum scan over all leaves.
Together with the parallel searches, this takes
$O(\delta^{-1}\log\log m)$ time and
$
O(\delta^{-1}m\DepthGroupCount\PRAMFanout)
$
work.

	\item \textbf{Identifying the residual instances.}
	We construct the residual instances for all targets in parallel.
	For each target, we use parallel predecessor searches to locate
	the fractional cutoff in the Kraft prefix array and the
	corresponding position in each depth group. We then identify
	the windows given by Lemma~\ref{lem:fractional-proximity},
	storing their endpoints in each group. These windows contain
	$O(\DepthGroupCount\KraftScaleBound)$ leaves in total and
	suffice to recover an optimal integral solution.

	Leaves before the windows are fixed as selected, and leaves
	after them are fixed as unselected. Within each group, their
	total frequency is obtained from the maintained prefix sums,
	and their total scaled Kraft value from their count.
	Parallel reductions over the groups give the fixed cost
	and the Kraft value to subtract from the target.
	As in Section~\ref{sec:computing-prefix-optima}, the remaining
	exact budget is at most
	\[
	\ResidualBudgetBound=O(\DepthGroupCount\KraftScaleBound^2).
	\]
	The searches and reductions take
	$O(\delta^{-1}+\log(2+\DepthGroupCount))$ time and
	$O(\delta^{-1}\DepthGroupCount\PRAMFanout)$ work per target.
	All targets are processed simultaneously, so their number
	multiplies the work but not the parallel time.

\item \textbf{Solving the residual instances.}
We need to select a subset of the retained leaves
whose total scaled Kraft value equals the remaining
budget, while minimizing their total frequency.
We solve this problem using a balanced dynamic
program over the depth groups.

For any collection of depth groups, let $D(t)$
denote the minimum total frequency of a selection
with scaled Kraft value exactly $t$.
We first construct the table for each individual
depth group. Since all leaves in a group have
the same scaled Kraft value $a$, achieving
budget $t$ requires selecting exactly $t/a$ leaves.
If $t/a$ is not an integer between zero and the
number of retained leaves, we set $D(t)=+\infty$.
Otherwise, we select the $t/a$ leaves with
smallest frequencies. Their total frequency
is obtained in constant time from the maintained
prefix sums.

We then merge the tables in a balanced hierarchy.
For two disjoint collections of depth groups,
we compute
\[
D(t)=\min_{0\leq z\leq t}
\{D_1(z)+D_2(t-z)\},
\qquad
0\leq t\leq\ResidualBudgetBound.
\]
Here, $z$ is the budget assigned to the first
collection, and $t-z$ is assigned to the second.

To parallelize each merge, we evaluate all
candidate budget splits simultaneously.
Each table has $O(\ResidualBudgetBound+1)$ entries,
and each entry considers
$O(\ResidualBudgetBound+1)$ splits.
Using the parallel minimum procedure, we find
the best split for each entry in constant time
with $O((\ResidualBudgetBound+1)^2)$ work.
Thus, one merge takes constant time and
$O((\ResidualBudgetBound+1)^3)$ work.

There are $O(\DepthGroupCount)$ merges, arranged
in $O(\log(2+\DepthGroupCount))$ levels.
Since all merges at the same level run in parallel
and $\DepthGroupCount=O(1+\log\lambda)$,
the total time is $O(\log\log m)$.

Finally, since
$\ResidualBudgetBound=
O(\DepthGroupCount\KraftScaleBound^2)$,
the total work for one residual instance is
\[
O(\DepthGroupCount\ResidualBudgetBound^3)
=
O(\DepthGroupCount^4\KraftScaleBound^6).
\]
Adding the total frequency of the fixed selected
leaves to the minimum cost of the residual instance
gives the corresponding value $P(r)$.
	
	\item \textbf{Recovering the stopping state.}
We choose depth $\textcolor{black}{\AveragingDepth}$ as in Section~\ref{sec:iterative-state-formal}.
The total frequency is available from the maintained prefix sums;
we find $h_{\min}$ by a parallel reduction over the
$\DepthGroupCount$ depth groups in $O(\log\log m)$ time.
Since $2^{\textcolor{black}{\AveragingDepth}}=O(m)$, we compute all values $P(0),\ldots,P(2^{\textcolor{black}{\AveragingDepth}})$
simultaneously, multiplying the work per target by $O(m)$
without increasing the parallel time.
We then compute their consecutive differences and the consecutive
pair sums in parallel, using $O(1)$ time and $O(m)$ work.
These give the frequencies at depths $\AveragingDepth$ and $\AveragingDepth-1$,
whose prefix sums are already available as $P(j)$ and $P(2j)$, respectively.
{\color{black}As in Section~\ref{sec:iterative-state-formal}, we choose $h^*\in\{\AveragingDepth-1,\AveragingDepth\}$ by a parallel minimum, and write $w_1\leq\cdots\leq w_q$ for the frequencies at depth $h^*$ and $\PrefixSum{w}{t}$ for their prefix sums.
If $h^*=0$, the stopping state is the root, and the merge cost is the optimal encoded length of the prefix.
Otherwise, we find the number $s$ of frequencies at most $\tau$ by a parallel search, and let $k=\lfloor s/2\rfloor$.
The stopping state consists of the parents $V=(w_1+w_2,\ldots,w_{2k-1}+w_{2k})$, with prefix sums $\PrefixSum{V}{t}=\PrefixSum{w}{2t}$, and the unpaired frequencies $U=(w_{2k+1},\ldots,w_q)$, with prefix sums $\PrefixSum{U}{t}=\PrefixSum{w}{2k+t}-\PrefixSum{w}{2k}$.
Applying Lemma~\ref{lem:pram-prefix-preservation} to $U$ and $V$ gives the stopping state in sorted order together with its prefix sums, in $O(\delta^{-1})$ time and $O(\delta^{-1}m\PRAMFanout)$ work.
The merge cost is
\[
\sum_h (h-h^*)\sum_{i:\,h_i=h}f_i+\PrefixSum{w}{2k}.
\]
Each depth group's frequency sum is obtained from two entries of the maintained prefix-sum array, so a parallel reduction over the $\DepthGroupCount$ depth groups evaluates this cost in $O(\log\log m)$ time.}

\end{itemize}

{\color{black}Overall, Iterative-State Reconstruction returns the stopping state in sorted order, its prefix sums, and the merge cost in}
$O(\delta^{-1}\log\log m)$ parallel time and
\[
O\bigl(
m\DepthGroupCount^4\KraftScaleBound^6
+\delta^{-1}m\DepthGroupCount\PRAMFanout
\bigr)
\]
work.

We now combine these implementations to prove {\color{black}Theorem~\ref{thm:pram-prefix-input}}.

\begin{proof}[Proof of Theorem~\ref{thm:pram-prefix-input}]
Fix $0<\epsilon<1$ and set
\[
\delta=\epsilon/100,\qquad
\PRAMFanout=\lceil m^\delta\rceil,\qquad
\lambda=\PRAMFanout,
\]
where $m$ is the original number of frequencies.
We run the iterative algorithm of
Section~\ref{sec:putting-pieces-together} directly
on the frequencies.\footnote{We omit the small-frequency preprocessing
of Lemma~\ref{lem:small-weight-preprocessing}.}
Throughout the algorithm, we maintain the current frequencies
in sorted order together with their prefix sums.
{\color{black}Initially, we recover each sorted frequency by subtracting two
consecutive supplied prefix sums, in \(O(1)\) parallel time and \(O(m)\) work.}

We first analyze one iteration.
We find the maximal bounded-ratio prefix by parallel predecessor
search and solve it using the parallel bounded-ratio algorithm.
If this prefix contains the entire current list, we add its
optimal encoded length to the running cost and terminate.

{\color{black}Otherwise, we run Iterative-State Reconstruction on the prefix, as described above, with threshold $\tau$ equal to the smallest frequency outside the prefix. It returns the merge cost, which we add to the running cost, and the stopping state in sorted order together with its prefix sums. The untouched suffix is also sorted, and its prefix sums are the current ones minus the total frequency of the processed prefix. Lemma~\ref{lem:pram-prefix-preservation} merges these two lists in $O(\delta^{-1})$ time and $O(\delta^{-1}m\PRAMFanout)$ work, giving the sorted frequencies of the next iteration together with their prefix sums.}

We now bound the total parallel time.
In every nonterminal iteration, the second-smallest frequency
increases by a factor greater than $\lambda$.
Since all frequencies are positive integers bounded by $n$,
there are
\[
O(1+\log_\lambda n)
=O\left(\epsilon^{-1}\left(1+\frac{\log n}{\log m}\right)\right)
\]
iterations. Each iteration takes $O(\delta^{-1}\log\log m)$
time. The bounded-ratio and reconstruction algorithms have this
bound, and the remaining searches and updates take
$O(\delta^{-1})$ time. Therefore, the iterations take
\[
O\left(\epsilon^{-2}\left(1+\frac{\log n}{\log m}\right)
\log\log m\right)
\]
parallel time.

For the work bound, recall that
$\DepthGroupCount=O(\log m)$.
The algorithms described above use at most
\[
O\left(\delta^{-1}m\left[
\KraftScaleBound^3\PRAMFanout^2\log\log m
+\DepthGroupCount^4\KraftScaleBound^6
+\DepthGroupCount\PRAMFanout+\PRAMFanout^2
\right]\right)
\]
work per iteration. The first term accounts for the bounded-ratio
hierarchy, the second for the residual dynamic programs, and the
remaining terms for density ordering, searches, and updates.
Since $\PRAMFanout=O(m^\delta)$ and
$\KraftScaleBound=O(m^{3\delta})$, this is
\begin{align*}
&O\left(\delta^{-1}
\left(m^{1+11\delta}+m^{1+18\delta}+m^{1+2\delta}\right)
\log^4m\right)\\
&\qquad=O(\epsilon^{-1}m^{1+\epsilon}\log^4m).
\end{align*}
Multiplying by the number of iterations gives total work
\[
O\left(\epsilon^{-2}\left(1+\frac{\log n}{\log m}\right)
m^{1+\epsilon}\log^4m\right).
\]

The same per-iteration bound suffices for the number of processors
and memory words. The number of simultaneous operations cannot
exceed the total work of an iteration, and each operation uses
only a constant number of words. All temporary tables are discarded
after the next sorted list and its prefix sums have been formed.
Thus, the processor and memory bounds are
\[
O(\epsilon^{-1}m^{1+\epsilon}\log^4m),
\]
without an additional factor for the number of iterations.
The arrays can be indexed by group, target, and budget,
so no global compaction is needed.

All arithmetic fits in the stated word model.
Capacities, indices, and depths require $O(\log m)$ bits,
while frequencies and intermediate costs require $O(\log n)$ bits.
The depths used in the bounded-ratio instances are $O(\log m)$,
so multiplying frequencies by powers of two to obtain densities
adds only $O(\log m)$ bits.
The total encoded length is at most $n\lceil\log_2 m\rceil$,
which also fits in $O(\log n)$ bits since $m\leq n$.

{\color{black}Recovering the initial frequencies from the supplied prefix sums
is absorbed by these bounds, completing the proof.}
\end{proof}

{\color{black}It remains to prove Corollary~\ref{cor:pram-frequencies}, where the frequencies are given in arbitrary order. We first sort them and compute their prefix sums, and then apply Theorem~\ref{thm:pram-prefix-input}. Lemma~\ref{lem:pram-sorted-prefix-sums} bounds the cost of the first step. It builds on the prefix-sum algorithm of Cole and Vishkin~\cite{cole1989faster}, which computes the prefix sums of \(k\) integers with \(O(\log k)\) bits each in \(O(\log k/\log\log k)\) time and \(O(k)\) work on a common-CRCW \PRAM. Recall that we take logarithmic factors to be at least one.}

\begin{lemma}
\label{lem:pram-sorted-prefix-sums}
{\color{black}Let \(0<\epsilon<1\) be fixed, and let \(x_1,\ldots,x_m\) be nonnegative integers whose sum is at most \(n\), with \(m\geq2\).
On a common-CRCW \PRAM with \(O(\log n)\)-bit words, we can sort these integers and compute the prefix sums of the sorted sequence in
\[
O\left(\frac{\log m}{\log\log m}+\epsilon^{-2}+\log\left(1+\frac{\log n}{\log m}\right)\right)
\]
time and \(O\bigl(\epsilon^{-2}(1+\log n/\log m)m^{1+\epsilon}\bigr)\) work.}
\end{lemma}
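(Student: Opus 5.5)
We plan to sort by exploiting the fact that the values are small on average, splitting the input according to a threshold and handling each part separately. Let \(\tau_0=\lceil m^{\epsilon/2}\rceil\), and call a value \emph{small} if \(x_i<\tau_0\) and \emph{large} otherwise. Since the sum is at most \(n\), at most \(n/\tau_0\) values are large. The number of large values need not be small compared with \(m\) when \(n\gg m\), however. For this reason we do not count them; instead we sort them by comparisons in a way whose cost depends on \(\log n/\log m\). The intended time budget is as follows. The \(\log m/\log\log m\) term pays for every operation that needs counting or prefix sums (compaction, ranking, and the final prefix sums via Cole--Vishkin). The \(\epsilon^{-2}\) term pays for the constant-depth comparison-based merging steps. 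The \(\log(1+\log n/\log m)\) term pays for a recursion over the bit-length of the values.

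The small values are handled by counting sort. For each \(v<\tau_0\) we compute how many \(x_i\) equal \(v\), using Cole--Vishkin prefix sums over an \(m\times\tau_0\) indicator array. We then take prefix sums over the \(\tau_0\) counts to get the starting position of each value, and write the values out. All of this takes \(O(\log m/\log\log m)\) time and \(O(m^{1+\epsilon/2})\) work.

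For the large values we plan a radix-style sort on digits of \(\Theta(\epsilon\log m)\) bits. Each value has \(O(\log n)\) bits, so there are \(L=O(1+\log n/(\epsilon\log m))\) digits. Running the digit passes sequentially would cost \(L\) rounds, which is too slow. Instead we combine the digits pairwise in a balanced tree: at each level, adjacent digit blocks are merged into one key by ranking within groups. This gives \(O(\log L)=O(\log(1+\log n/\log m))\) levels, and each level should run in \(O(\epsilon^{-1})\) time using the multiway search of \eqref{eq:pram-search-bound} with fan-out \(m^{\epsilon}\), together with sort-by-key on polynomial-range integers. Rather than this digit-combining recursion, we could also directly sort the large values by comparisons, with all-pairs comparisons inside groups of size \(m^{\epsilon/2}\) followed by \(O(\epsilon^{-1})\) rounds of \(m^{\epsilon}\)-way merging using Lemma~\ref{lem:pram-prefix-preservation}-style ranking. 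A merge sort with fan-out \(m^{\epsilon}\) needs \(O(\epsilon^{-1})\) levels, and each level costs \(O(\epsilon^{-1})\) time for the searches. That already gives \(O(\epsilon^{-2})\) time and \(O(\epsilon^{-2}m^{1+\epsilon})\) work, with comparisons on \(O(\log n)\)-bit words taking constant time. If this simpler route works, the \(\log(1+\log n/\log m)\) term would only be needed to absorb the word-model arithmetic. I would try the comparison merge sort first, and fall back on the digit recursion only if the work accounting fails.

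Finally, we merge the sorted small and large lists by ranking each element in the other list, which takes \(O(\delta^{-1})\) time. We then compute the prefix sums of the merged sequence with Cole--Vishkin in \(O(\log m/\log\log m)\) time and \(O(m)\) work. The step I expect to be the main obstacle is sorting the large values within the stated work \(O(\epsilon^{-2}(1+\log n/\log m)m^{1+\epsilon})\) and time, without using an \(\Omega(\log m)\)-time general sorting routine. The crux is verifying that the fan-out \(m^{\epsilon}\) merge sort, with its all-pairs ranking inside blocks, stays within \(m^{1+\epsilon}\) work per level on common-CRCW. The remaining steps reduce to the primitives already listed in Table~\ref{tab:pram-primitives}.
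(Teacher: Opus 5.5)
\textbf{Verdict: genuine gap.} Your comparison merge sort is essentially the paper's route: multiway mergesort, ranks found by parallel search, then Cole--Vishkin prefix sums. The small/large split with counting sort is unnecessary, since comparisons on $O(\log n)$-bit words cost $O(1)$ whatever the magnitudes, and the digit-combining radix sort is not needed either. The problem is your time analysis, which is where the real difficulty lies; work is not the crux.

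\textbf{First gap: the sorting time.} The claim that the merge sort runs in $O(\epsilon^{-2})$ time cannot hold. Sorting $m$ keys from a two-element set reveals how many copies of each key there are, because the unique boundary in the sorted output is found in $O(1)$ time on a common-CRCW \PRAM. By the parity bound behind Lemma~\ref{lem:pram-lower-bound}, every sort therefore needs $\Omega(\log m/\log\log m)$ time.

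The step that fails is ``each level costs $O(\epsilon^{-1})$ time for the searches''. The searches only give an element its rank in each of the $k-1$ other lists of its group. Its output position is the \emph{sum} of $k$ such ranks. Likewise, after all-pairs comparisons inside a block, an element's rank is a count of many bits. Such summations cost $\Omega(\log k/\log\log k)$ time. Moreover, the ranks have $\Theta(\log m)=\Theta(\epsilon^{-1}\log k)$ bits, so Cole--Vishkin does not apply to them directly.

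The paper handles this in four steps.
\begin{enumerate}
\item It splits each rank into $O(1/\epsilon)$ chunks of $O(\log k)$ bits.
\item It runs Cole--Vishkin on each chunk position in parallel and recombines the chunks by a balanced tree. This costs $O(\log k/\log\log k+\log(1/\epsilon))$ time and $O(\epsilon^{-1}k)$ work per element per level.
\item It takes fan-in and search fan-out both $k=\lceil m^{\epsilon/2}\rceil$. Their product must be $O(m^{\epsilon})$ to keep the work per level at $O(\epsilon^{-1}m^{1+\epsilon})$.
\item It sums the level costs carefully. The $O(\epsilon^{-1})$ levels total $O(\epsilon^{-2}+\log m/\log\log k)$. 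If $\log m\geq\epsilon^{-2}$, then $\log k\geq\tfrac12\sqrt{\log m}$, so $\log\log k=\Omega(\log\log m)$. Otherwise, the term is at most $\log m<\epsilon^{-2}$.
\end{enumerate}
Your budget paragraph instead charges ``ranking'' to the $\log m/\log\log m$ term. Doing that at every level gives $\epsilon^{-1}\log m/\log\log m$, which exceeds the stated bound.

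\textbf{Second gap: the final prefix sums.} You apply Cole--Vishkin directly to the sorted values. That algorithm handles integers of $O(\log m)$ bits, but the values and their prefix sums can have $\Theta(\log n)$ bits with $\log n\gg\log m$. This is exactly where the $\log(1+\log n/\log m)$ term comes from. The paper splits each value into $O(1+\log n/\log m)$ chunks of $O(\log m)$ bits. It computes the prefix sums of every chunk column with Cole--Vishkin in parallel. It then recombines each prefix sum from its chunk sums by a balanced-tree reduction over the chunks. This gives $O(\log m/\log\log m+\log(1+\log n/\log m))$ time and $O(m(1+\log n/\log m))$ work. Attributing that term to a radix recursion or to word-model arithmetic misses this, and without the chunking your last step is unjustified.
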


\begin{proof}
{\color{black}We first show the following.
For any \(k\geq2\) and \(t\geq1\), the prefix sums of \(k\) integers in \([0,k^t]\) can be computed in \(O(\log k/\log\log k+\log t)\) time and \(O(kt)\) work.
Let \(c=\lceil\log_2 k\rceil+1\) and \(\ell=\lceil t\rceil+1=O(t)\).
We split each value \(y_i\) into \(\ell\) chunks of \(c\) bits, so that
\(y_i=\sum_{j=0}^{\ell-1}y_{i,j}2^{jc}\) with \(0\leq y_{i,j}<2^c\leq 4k\).
For each chunk position \(j\), the chunks \(y_{1,j},\ldots,y_{k,j}\) have \(O(\log k)\) bits,
so the algorithm of Cole and Vishkin computes their prefix sums
\(p_{i,j}=\sum_{i'\leq i}y_{i',j}\) in \(O(\log k/\log\log k)\) time and \(O(k)\) work.
We do this for all \(\ell\) chunk positions in parallel.
The \(i\)th prefix sum is then \(\sum_{i'\leq i}y_{i'}=\sum_{j=0}^{\ell-1}p_{i,j}2^{jc}\),
which a balanced-tree reduction over the \(\ell\) terms computes in \(O(\log \ell)\) time and \(O(\ell)\) work for each \(i\).

\emph{Sorting.}
We break ties by the original indices, so that all keys are distinct, and sort by multiway mergesort with fan-in
\(k=\lceil m^{\epsilon/2}\rceil\).
We start with \(m\) sorted lists of length one. In each level, we partition the current lists into groups of \(k\) consecutive lists and merge each group into one sorted list.
After \(\lceil\log m/\log k\rceil=O(\epsilon^{-1})\) levels, a single sorted list remains.
To merge a group, consider an element \(x\) of a list \(A\) in that group.
Its position in the merged list equals its rank in \(A\) plus, for each other list \(B\) in the group, the number of elements of \(B\) smaller than \(x\).
We find each of these \(k-1\) counts by a parallel search~\eqref{eq:pram-search-bound} with \(\delta=\epsilon/2\), in \(O(\epsilon^{-1})\) time and \(O(\epsilon^{-1}m^{\epsilon/2})\) work.
We then add the \(k\) values, each at most \(m\leq k^{2/\epsilon}\), using the bound above with \(t=2/\epsilon\),
in \(O(\log k/\log\log k+\log(1/\epsilon))\) time and \(O(\epsilon^{-1}k)\) work.
Finally, each element is written to its position.
Over all \(m\) elements, a level therefore takes
\(O(\epsilon^{-1}+\log k/\log\log k)\) time and \(O(\epsilon^{-1}m^{1+\epsilon})\) work.

Over the \(O(\epsilon^{-1})\) levels, the work is \(O(\epsilon^{-2}m^{1+\epsilon})\), and the time is
\[
O\left(\epsilon^{-2}+\left(\frac{\log m}{\log k}+1\right)\frac{\log k}{\log\log k}\right)
=O\left(\epsilon^{-2}+\frac{\log m}{\log\log k}\right).
\]
If \(\log m\geq\epsilon^{-2}\), then \(\log k\geq(\epsilon/2)\log m\geq\frac12\sqrt{\log m}\), so \(\log\log k=\Omega(\log\log m)\) and the last term is \(O(\log m/\log\log m)\).
Otherwise, \(\log m<\epsilon^{-2}\), and the last term is \(O(\epsilon^{-2})\).
In both cases, sorting takes \(O(\log m/\log\log m+\epsilon^{-2})\) time.

\emph{Prefix sums.}
We apply the bound above to the sorted sequence with \(k=m\) and \(t=\max\{1,\log_m n\}=O(1+\log n/\log m)\).
Since all prefix sums are at most \(n\), the \(O(\log n)\)-bit words suffice.
This takes
\(O\bigl(\log m/\log\log m+\log(1+\log n/\log m)\bigr)\) time and \(O\bigl(m(1+\log n/\log m)\bigr)\) work.
Combining the two steps proves the lemma.}
\end{proof}

\begin{proof}[Proof of Corollary~\ref{cor:pram-frequencies}]
{\color{black}By Lemma~\ref{lem:pram-sorted-prefix-sums}, we can sort the frequencies and compute the prefix sums of the sorted sequence in
\(O\bigl(\log m/\log\log m+\epsilon^{-2}+\log(1+\log n/\log m)\bigr)\) time and
\(O\bigl(\epsilon^{-2}(1+\log n/\log m)m^{1+\epsilon}\bigr)\) work.
We then apply Theorem~\ref{thm:pram-prefix-input} to these prefix sums.
Since \(\log(1+\log n/\log m)\leq\log n/\log m\), the additive \(\epsilon^{-2}+\log(1+\log n/\log m)\) term and the work of the first step are absorbed by the bounds of Theorem~\ref{thm:pram-prefix-input}, which proves the corollary.}
\end{proof}

%% file: Sections/pram-iteration-example-figure.tex
\begin{figure}[!t]
\centering
\begin{tikzpicture}[x=0.95cm,y=0.75cm,font=\small,
 cell/.style={minimum width=0.85cm,minimum height=0.5cm,draw=black,inner sep=0pt},
 pre/.style={cell,fill=black!25,draw=black},
 suf/.style={cell,fill=black!10,draw=black},
 stop/.style={cell,fill=black!25,draw=black},
 lab/.style={anchor=east,align=right}]
\node[lab] at (0.4,0) {Input};
\node[lab] at (0.4,-0.8) {Prefix sums};
\foreach \x/\v in {1/2,2/2,3/3,4/3,5/4,6/4,7/5,8/5} \node[pre] at (\x,0) {$\v$};
\foreach \x/\v in {9/7,10/14} \node[suf] at (\x,0) {$\v$};
\foreach \x/\v in {1/2,2/4,3/7,4/10,5/14,6/18,7/23,8/28,9/35,10/49} \node[cell] at (\x,-0.8) {$\v$};
\draw[decorate,decoration={brace,amplitude=4pt},black!85!black]
  (0.55,0.45)--(8.45,0.45) node[midway,above=5pt] {bounded-ratio prefix ($f_j\leq\lambda f_2=6$)};
\draw[decorate,decoration={brace,amplitude=4pt},black!65!black]
  (8.55,0.45)--(10.45,0.45) node[midway,above=5pt] {untouched};
\draw[-{Stealth},thick] (4,-1.35)--(4,-2.15) node[midway,right,align=left,font=\footnotesize] {Iterative-State\\Reconstruction, $\tau=7$};
\draw[-{Stealth},thick] (9.5,-1.35)--(9.5,-2.15) node[midway,right] {$-28$};
\node[lab] at (0.4,-2.7) {Two lists};
\node[lab] at (0.4,-3.5) {Prefix sums};
\foreach \x/\v in {3/8,4/9,5/11} \node[stop] at (\x,-2.7) {$\v$};
\foreach \x/\v in {3/8,4/17,5/28} \node[cell] at (\x,-3.5) {$\v$};
\foreach \x/\v in {9/7,10/14} \node[suf] at (\x,-2.7) {$\v$};
\foreach \x/\v in {9/7,10/21} \node[cell] at (\x,-3.5) {$\v$};
\draw[-{Stealth},thick] (6.5,-4.05)--(6.5,-4.85) node[midway,right] {Lemma~\ref{lem:pram-prefix-preservation}};
\node[lab] at (0.4,-5.4) {Next input};
\node[lab] at (0.4,-6.2) {Prefix sums};
\foreach \x/\v/\s in {3/7/suf,4/8/stop,5/9/stop,6/11/stop,7/14/suf} \node[\s] at (\x,-5.4) {$\v$};
\foreach \x/\v in {3/7,4/15,5/24,6/35,7/49} \node[cell] at (\x,-6.2) {$\v$};
\end{tikzpicture}
\caption{One iteration of Example~\ref{ex:pram-prefix-maintenance}.
Iterative-State Reconstruction turns the bounded-ratio prefix into the
stopping state $(8,9,11)$ together with its prefix sums.
The prefix sums of the untouched suffix are those of the input minus $28$,
the total of the processed prefix.
Lemma~\ref{lem:pram-prefix-preservation} merges the two lists, giving the input of
the next iteration together with its prefix sums.}
\label{fig:pram-iteration-example}
\end{figure}